\DeclareMathOperator{\csp}{CSP}
\DeclareMathOperator{\Aut}{Aut}
\DeclareMathOperator{\maxbound}{\mathbb{L}}
\DeclareMathOperator{\graphinstance}{\mathcal{G}_{\mathcal{I}}}
\DeclareMathOperator{\proj}{\textrm{proj}}
\DeclareMathOperator{\instance}{\mathcal{I}}
\DeclareMathOperator{\structA}{\mathbb{A}}
\DeclareMathOperator{\structB}{\mathbb{B}}
\DeclareMathOperator{\solution}{\mathbf{s}}
\DeclareMathOperator{\constraint}{\mathbf{C}}
\DeclareMathOperator{\V}{\mathcal{V}}
\DeclareMathOperator{\indices}{\mathbb{I}}
\DeclareMathOperator{\bipartite}{\mathcal{B}}
\DeclareMathOperator{\assignment}{\texttt{a}}
\DeclareMathOperator{\vertices}{\mathbf{Vert}}
\DeclareMathOperator{\Q}{\mathbb{Q}}
\DeclareMathOperator{\N}{\mathbb{N}}
\DeclareMathOperator{\bounds}{\mathcal{F}}
\newenvironment{proof}{\paragraph{Proof:}}{\hfill$\square$}
\begin{document}
\newcounter{thm}
\newtheorem{theorem}[thm]{Theorem}
\newtheorem{lemma}[thm]{Lemma}
\newtheorem{observation}[thm]{Observation}
\newtheorem{proposition}[thm]{Proposition}
\newtheorem{definition}[thm]{Definition}
\newtheorem{corollary}[thm]{Corollary}
\title{On The Relational Width of  First-Order Expansions of  Finitely Bounded Homogeneous Binary Cores with Bounded Strict Width\footnote{The research was partially supported by NCN grant number 2014/14/A/ST6/00138.}}
\author{Micha{\l} Wrona\\
Theoretical Computer Science Department\\
 Jagiellonian University\\
Poland\\
\texttt{michal.wrona@uj.edu.pl}          
}


\maketitle

\begin{abstract}
 The relational width of a finite structure, if bounded, is always $(1,1)$ or $(2,3)$. In this
paper we study the relational width of first-order expansions of
finitely bounded homogeneous binary cores where binary cores are structures with equality and some anti-reflexive binary relations such that for any two different elements $a, b$ in the domain there is 
exactly one binary relation $R$ with  $(a, b) \in R$. 

Our main result is that first-order expansions of liberal finitely bound\-ed homogeneous binary cores 
with bounded strict width have relational width
(2, MaxBound) where MaxBound
is the size of the largest forbidden substructure, but is not less than $3$,
and liberal stands for  structures that do not forbid certain finite structures of small size. This result is built on a new approach and concerns a broad class of structures including reducts of homogeneous digraphs for which the CSP complexity classification has not yet been obtained.
\end{abstract}




\section{Introduction}

An instance of the \emph{constraint satisfaction problem (CSP)} consists of a number of variables and a number of  local restrictions on variables called \emph{constraints}. The question is whether there exists a global assignment to variables that satisfies all constraints. The CSP naturally generalizes SAT, expresses a number of other natural problems including $k$-coloring, solving equations over finite fields but, at least among theoreticians, is  associated mainly to the question on dichotomy~\cite{FederVardi}, i.e., is every right-hand side restriction  $\csp(\structB)$
of the CSP in P or NP-complete?  A relational structure $\structB$, known also as a (constraint) language or a template, restricts the available constraints to these that can be modelled by relations in $\structB$. It is known already for a while that the dichotomy for $\csp(\structB)$ over finite structures exists. Indeed, the Feder Vardi conjecture, on the existence of the dichotomy, has been confirmed by Zhuk~\cite{ZhukDichotomy} and independently by Bulatov~\cite{BulatovDichotomy}.

The problem is that already a very simple scheduling problem with precedence constraints of the form $(X < Y)$ cannot be properly expressed as $\csp(\structB)$ if the domain of $\structB$ is finite and scheduling problems are what practitioners think of when they hear of the CSP.  In order to express a richer class of problems including many scheduling problems as well as problems in spatial and temporal reasoning~\cite{BodirskyJonsson-survey} one considers \emph{$\omega$-categorical} structures $\structB$ that,  although infinite, share many nice properties with finite structures. In particular they admit a simple finite representation and the algebraic approach to the complexity of CSP which stands behind the both dichotomy proofs is also applicable in this context. 
(The precise definition of $\omega$-categoricity as well as many other well-known notions used in the introduction are defined formally in the remainder of the paper, usually in Section~\ref{sect:preliminaries}.)
It is no dichotomy to look for among all $\omega$-categorical CSPs (CSPs over $\omega$-categorical templates)~\cite{nondichotomies-omegacat}.  
Thus, one considers a subclass --- (first-order) reducts of (countably infinite)  finitely bounded homogeneous structures for which a dichotomy is conjectured. In what follows we will call this conjecture the \emph{infinite dichotomy conjecture}.
In contrast to $\omega$-categorical CSPs, all CSPs over  reducts of finitely bounded homogeneous structures are in NP and more importantly, the infinite dichotomy conjecture 
is backed by an algebraic dichotomy~\cite{algebraic-dichotomy-omegacat} delineating algebras corresponding to  structures 
with no non-trivial symmetries and such algebras with non-trivial symmetries. 
It is already known that algebras with no non-trivial symmetries correspond to NP-complete problems. The \emph{infinite tractability conjecture} states that a $\csp(\structB)$ is in P (tractable) always when the corresponding algebra contains some non-trivial symmetries (polymorphisms). 
A similar tractability conjecture  concerning finite algebras was confirmed by Bulatov and Zhuk. 

While in the finite case, the algorithm solving tractable $\csp(\structB)$ is based on two prevailing general algorithmic techniques: local-consistency methods~\cite{BoundedWidth} and the 'few subpowers' algorithm~\cite{Idziak}, the
development of general algorithmic techniques and  establishing 
the limits of their applicability in the infinite case are rather in their infancy. 
Two important  exceptions are: an algebraic characterization of $\omega$-categorical structures with \emph{bounded strict width}~\cite{OligoClone,Datalog-omegacat} and the lifiting theorem~\cite{UnaryDichotomy} which lifts the tractability from finite CSP. Since it is already known that the tractability of some tractable reducts of $(\Q;<)$ a.k.a. temporal languages cannot be explained by the lifiting theorem, the development of general algoritmic techniques for the CSPs within the scope of the infinite tractability conjecture and understanding the limits of their applicability seems inevitable. 
Natural research questions in this context concern local-consistency methods.
Firstly, because the algebraic characterization of finite structures whose CSP may be solved in this way is considered to be an important step towards establishing the dichotomy. Secondly, because local consistency methods are ubiquitous
in constraint solving, for instance, in the context of qualitative calculi in spatial and temporal reasoning~\cite{Renz12}. In this paper we consider one of these natural questions. 

A structure $\structB$ has \emph{bounded width} if $\csp(\structB)$ is solvable by the local-consisten\-cy algorithm.  Equivalently, $\csp(\structB)$ is solvable by an algorithm establishing $(k,l)$-minimiality for some natural numbers $k \leq l$.
In this case we say that $\structB$ has relational width $(k,l)$ and if such $k,l$ exist that $\structB$ has bounded relational width~\cite{BRWHierarchy,BulatovRW}. The relational width of $\structB$ proved to be a natural way to measure the amount of consistency needed to solve $\csp(\structB)$. In particular, it is known that if the relational width of a finite structure is bounded, then it is $(1,1)$ or $(2,3)$~\cite{BRWHierarchy}. This characterization is based on the algebraical characterization of finite structures with bounded (relational) width, and since its counterpart for reducts of finitely bounded homogeneous structures does not exist, it could be very hard to answer the question of what is the relational width of these structures. But as we already mentioned, there is such a characterization for structures with bounded strict width.
A structure $\structB$ has strict width $k$  if every partial solution to every $(k,l)$-minimal instance of $\csp(\structB)$ can be extended to a total solution. This notion is not only of theoretical interest~\cite{FederVardi} but also
under the name \emph{local-to-global consistency} has been studied in constraint solving in spatial and temporal reasoning, see e.g.~\cite{Dechter}.

In this paper we characterize the relational width of first-order expansions of 
liberal finitely bounded homogeneous binary cores with bounded strict width.
All the definitions that are necessary to understand the result are given in the following subsection.

\subsection{Results}

We say that a structure $\structA$ over a relational signature (here assumed to be finite) is homogeneous if every local isomorphism  between finite substructures of $\structA$ may be extended to an automorphism of $\structA$.
A structure $\structA$ over a signature $\tau$  is finitely bounded if there exists a finite set of finite $\tau$-structures $\bounds_{\structA}$ such that a finite structure $\Delta$ embeds into $\structA$ if and only if there is no $\Gamma$ in   $\bounds_{\structA}$ that embeds into $\Delta$. 

All homogeneous graphs, classified in~\cite{LachlanWoodrow}, and many homogenenous digraphs, classified in~\cite{HomoCherlin} , are finitely bounded.
In particular, it is known~\cite{Henson} that for any countable set of pairwise non-embedabble tournaments $\bounds$ there exists a homogeneous digraph $\structA$ such that $\bounds = \bounds_{\structA}$. Such homogeneous digraphs are known as Henson digraphs. 
In this paper we see homogeneous graphs and homogeneous digraphs over an extended signature and study these structures and many other as binary cores defined in what follows.
 
\paragraph{Binary Cores.} We say that a structure $\structA$ over domain $A$ is a \emph{binary core} 
if its signature besides $=$ contains only binary anti-reflexive relations	 $R_1, \ldots, R_{\kappa}$ such that for any two different elements $a, b \in A$
there is exactly one $R_i$ with $i \in [\kappa]$ such that 
$(a, b) \in R_i$. 

\paragraph{Examples.} A perfect example of a finitely bounded homogeneous binary core is 
a homogeneous graph seen over
the signature $\{ E, N, = \}$ where $N$ contains all different pairs of elements which are not  connected by an edge $E$, i.e., $(N(x,y) \equiv (\neg E(x,y) \wedge x \neq y))$. Other examples are homogeneous digraphs seen over the signature $\{ \curvearrowright, N, = \}$ where $\curvearrowright$ stands for an arc and $N$ is a non-arc relation, i.e., $(N(x,y) \equiv \neg \curvearrowright(x,y) \wedge \neg \curvearrowright (y,x) \wedge x \neq y)$.

\paragraph{Liberal Structures.}
We restrict ourselves to these binary cores $\structA$ that are additionally \emph{liberal}, i.e., $\bounds_{\structA}$ contains no finite structures of size $3,4,5$, or $6$. In particular any  Henson digraph that forbids tournaments of size $7$ or greater only, or a random  graph $\mathcal{G}$ seen over the extended signature is a liberal binary core. Indeed,
all structures in $\bounds_{\mathcal{G}}$ are of size less than $3$.

\paragraph{The Main Result.} We write $\maxbound_{\structA}$ to denote the maximum of $3$ and  the size of the largest structure in $\bounds_{\structA}$. A first-order expansion $\structB$ of $\structA$ is an expansion of $\structA$ such that all relations in $\structB$ have first-order definitions in $\structA$.
We are now in the position to formulate the main result of this paper.

\begin{theorem}
\label{thm:main}
Let $\structA$ be a liberal finitely bounded homogeneous binary core and $\structB$ a first-order expansion of $\structA$ with bounded strict width. Then $\structB$ has relational width $(2, \maxbound_{\structA})$.
\end{theorem}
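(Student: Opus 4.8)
The plan is to establish the substantive direction of the relational-width claim: every $(2,\maxbound_{\structA})$-minimal instance $\instance$ of $\csp(\structB)$ in which no constraint is empty admits a solution (the converse, that an empty constraint certifies unsatisfiability, is immediate). Two features of $\structA$ motivate the two parameters. First, because $\structA$ is a binary core, the orbit of a pair of distinct elements under $\Aut(\structA)$ is fixed by the unique relation $R_i$ containing it, so the binary projections of the constraints of $\instance$ already encode all of the template's structural content; this is what makes the first parameter $2$ feasible. Second, by finite boundedness a finite partial assignment embeds into $\structA$ exactly when it omits every $\Gamma \in \bounds_{\structA}$, and each such $\Gamma$ has at most $\maxbound_{\structA}$ elements, so every local obstruction is witnessed on a set of at most $\maxbound_{\structA}$ variables, which is what the second parameter records.

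Since $\structB$ has bounded strict width, fix parameters $k \leq l$ such that every partial solution to every $(k,l)$-minimal instance of $\csp(\structB)$ extends to a total solution. The heart of the argument is a \emph{no-collapse} lemma: the $(k,l)$-minimality algorithm, run on our $(2,\maxbound_{\structA})$-minimal instance, never produces an empty constraint. I would derive this from a local realizability statement asserting that any assignment $t$ to a set $S$ of at most $\maxbound_{\structA}$ variables which is consistent with every binary projection recorded by $\instance$ and which omits every $\Gamma \in \bounds_{\structA}$ is in fact a partial solution, that is, satisfies every constraint of $\instance$ whose scope lies inside $S$. Liberality supplies the base case: since $\bounds_{\structA}$ forbids nothing of size $3,4,5,6$, assignments to such small sets are unconstrained beyond their binary projections, and homogeneity together with the amalgamation property of the age of $\structA$ lets one glue compatible partial assignments whenever no member of $\bounds_{\structA}$ is created --- an obstruction that, having size at most $\maxbound_{\structA}$, is already visible to $(2,\maxbound_{\structA})$-minimality.

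With the no-collapse lemma established, refining $\instance$ by the $(k,l)$-minimality algorithm yields a $(k,l)$-minimal instance with the same solution set in which every constraint is non-empty. Choosing any single constraint and any tuple in its relation gives a partial solution on that constraint's scope, and the defining property of strict width $k$ extends this partial solution to a total solution of the refined instance, hence to a solution of $\instance$. This is exactly what is required for relational width $(2,\maxbound_{\structA})$. To make the construction concrete one may instead build the total assignment by a greedy enumeration of the variables, at each step realizing a value compatible with the finitely many binary constraints linking the new variable to those already assigned, using homogeneity to place it in the prescribed orbit and the $\maxbound_{\structA}$ bound to guarantee that only substructures of bounded size must be kept free of members of $\bounds_{\structA}$.

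The main obstacle is the no-collapse lemma, specifically the interaction between the first-order relations of $\structB$ and the purely binary template $\structA$. Binary projections determine membership in the relations of $\structA$, but a relation of $\structB$ need not be a conjunction of binary relations, so I must show that the polymorphisms witnessing bounded strict width force the higher-arity $\structB$-constraints to decompose consistently over the binary orbits recorded by minimality. Establishing that this decomposition survives the amalgamation steps enabled by liberality, and --- crucially --- that the required consistency is genuinely detectable at level $\maxbound_{\structA}$ rather than at some larger level, is where the delicate combinatorics will lie.
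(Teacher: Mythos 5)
Your reduction (non-trivial $(2,\maxbound_{\structA})$-minimal instance $\Rightarrow$ refine to $(k,l)$-minimality $\Rightarrow$ if still non-trivial, a tuple of any constraint is a partial solution and strict width extends it) is sound as a framework, but the entire mathematical content of the theorem sits inside your unproven ``no-collapse'' lemma, and the route you sketch toward it does not work. Your proposed local realizability statement --- that an assignment to at most $\maxbound_{\structA}$ variables which respects all recorded binary projections and avoids $\bounds_{\structA}$ automatically satisfies every constraint of $\instance$ on those variables --- is false: a relation of $\structB$ is a union of orbits of tuples, but it can be a \emph{proper} subset of the product of its binary projections. For instance, over the random graph (a liberal binary core) the relation $R(x_1,x_2,x_3)\equiv(x_1\neq x_2\vee E(x_2,x_3))$ has full binary projections, yet the assignment with $x_1=x_2$ and $N(x_2,x_3)$ respects every binary projection, creates no forbidden substructure, and violates $R$; Proposition~\ref{prop:qnuExTwoEquivClass} exhibits the same phenomenon for a structure that actually \emph{has} bounded strict width. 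So liberality and finite boundedness alone cannot give you local realizability; you must use bounded strict width somewhere in establishing non-collapse, and your proposal invokes strict width only through its definition (extendability of partial solutions), which is downstream of the very lemma you need.

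The paper closes exactly this gap, and does so by a different mechanism: it never passes to $(k,l)$-minimality for the strict-width parameters. Instead it isolates the obstruction to ``binary projections determine everything'' as \emph{implications} --- constraints that efficiently entail $(C_1(x_i,x_j)\implies D_1(x_k,x_l))$ for proper subsets $C_1,D_1$ of the respective projections --- organizes them into the digraph $\graphinstance$, and shows (Sections~\ref{sect:CriticalTernary}--\ref{sect:LiberalBinaryCores}) that a cycle in this digraph lets one pp-define a critical ternary relation, which is incompatible with an oligopotent quasi near-unanimity polymorphism; by Theorem~\ref{thm:strictwidth} this is where bounded strict width actually enters. Acyclicity then permits iterated narrowing along sinks (Observation~\ref{obs:reduction}) until every $\instance_{v_1,v_2}$ is a single orbital, at which point homogeneity of the binary core makes orbits of tuples determined by their pairs and finite boundedness with the $\maxbound_{\structA}$ bound yields the embedding (Theorem~\ref{thm:ImpSimpleMinimal}). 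The ``delicate combinatorics'' you defer to future work is not a technical remainder but the theorem itself.
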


Examples of first-order expansions of the random graph with bounded strict width were given in~\cite{WronaSTACS2020}.

\begin{proposition}
[\cite{WronaSTACS2020}]
\label{prop:ClausesQNU}
Let the structure $\structB$ be a first-order expansion of the structure $(A; E, N,$
$ =)$  where $(A;E)$ is  the random graph  
such that every relation in $\structB$
is pp-definable as a conjunction of clauses of the form:
\begin{align}
(x_1 \neq y_1 \vee \cdots \vee x_k \neq y_1 \vee R(y_1, y_2)  \vee y_2 \neq z_1 \vee\nonumber\\
\vee \cdots \vee y_2 \neq z_l), \nonumber
\end{align}
where $R \in \{ E, N \}$.
Then $\mathbb{A}$ has bounded strict width.  
\end{proposition}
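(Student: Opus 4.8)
The plan is to invoke the algebraic characterization of bounded strict width and then to exhibit an explicit polymorphism. By the results of~\cite{OligoClone,Datalog-omegacat}, an $\omega$-categorical structure has bounded strict width if and only if it is preserved by a quasi near-unanimity operation, that is, an operation $f$ of some arity $n \geq 3$ satisfying $f(y,x,\ldots,x) = f(x,y,x,\ldots,x) = \cdots = f(x,\ldots,x,y) = f(x,\ldots,x)$ for all $x,y$. Since the random graph, and hence every first-order expansion $\structB$ of it, is $\omega$-categorical, it suffices to find such an $f$ among the polymorphisms of $\structB$ (this establishes bounded strict width of the expansion $\structB$, which is the content of the statement). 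First I would record that preservation is closed under intersection: as every relation of $\structB$ is a conjunction of clauses of the stated form, and the relation defined by such a conjunction is the intersection of the relations defined by the individual clauses, it is enough to show that $f$ preserves the relation $\rho_C$ defined by a single clause $C$.

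Next I would build the candidate $f$ as a canonical quasi-majority operation of the random graph. Rewriting a single clause $C$ as the implication $(x_1 = y_1 \wedge \cdots \wedge x_k = y_1 \wedge y_2 = z_1 \wedge \cdots \wedge y_2 = z_l) \rightarrow R(y_1,y_2)$ with $R \in \{E,N\}$ makes clear that, once the equalities are enforced, only the edge type between the two distinguished vertices $y_1$ and $y_2$ matters. Using the homogeneity (indeed universality) of $(A;E,N,=)$ together with the Ramsey property of the ordered random graph, one defines a ternary operation $f$ that is canonical and behaves as a majority on edge types: whenever $\bar u,\bar v \in A^3$ have distinct images, the edge type of $(f(\bar u),f(\bar v))$ is the majority edge type among the pairs $(u_i,v_i)$. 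I would then verify that $f$ can be arranged so that the quasi near-unanimity identities hold modulo $\Aut(\structA)$, so that $f$ is genuinely a quasi near-unanimity polymorphism.

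The core of the argument is the verification that $f$ preserves each $\rho_C$. Given input tuples $t^1,t^2,t^3 \in \rho_C$ with coordinatewise image $t = f(t^1,t^2,t^3)$, I would assume that $t$ satisfies the hypothesis of $C$ (all $x_i$ equal $y_1$ and all $z_j$ equal $y_2$ in $t$) and must deduce $R(Y_1,Y_2)$, where $Y_1,Y_2$ denote the images of the $y_1$- and $y_2$-columns. Since the edge type of $(Y_1,Y_2)$ is the majority of the edge types of the pairs $(y_1^p,y_2^p)$ over $p \in \{1,2,3\}$, it suffices to show that the hypothesis of $C$ already holds in a majority of the input tuples $t^p$; for then $R(y_1^p,y_2^p)$ holds in those tuples and the majority edge type equals $R$.

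The hard part, and the step on which I would spend the most care, is precisely this localization: the collapses $X_i = Y_1$ and $Z_j = Y_2$ witnessed in the output must be traced back to equalities holding in enough individual input tuples. For a crude majority operation this can fail, because distinct equality atoms may be satisfied in different pairs of tuples, so that no single input tuple satisfies the whole hypothesis; this is the familiar obstruction to near-unanimity operations preserving guarded implications. I would overcome it by choosing the canonical operation $f$ to be as injective as the quasi near-unanimity identities permit, collapsing two columns only in the genuine near-unanimity situation, and, should a ternary operation not suffice, by passing to a quasi near-unanimity operation of arity large relative to the maximal number $k+l$ of equality atoms occurring among the finitely many clauses defining $\structB$. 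The richness of the random graph, which realizes every finite edge/non-edge pattern, is what makes the required canonical extensions available. Once $f$ is shown to preserve every $\rho_C$, it preserves every relation of $\structB$, and the characterization of~\cite{OligoClone,Datalog-omegacat} yields that $\structB$ has bounded strict width.
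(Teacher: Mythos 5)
First, a point of order: the paper does not prove Proposition~\ref{prop:ClausesQNU} at all --- it is imported from \cite{WronaSTACS2020} as a black box (note also that the conclusion should read that $\structB$, not $\mathbb{A}$, has bounded strict width). So there is no in-paper proof to compare you against; I can only assess your argument on its own terms.

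Your overall strategy is the right one and almost certainly mirrors the cited proof: reduce to single clauses, read a clause as a guarded implication, and exhibit an oligopotent qnu polymorphism of large arity that is injective outside the near-unanimity collapses and computes a (super)majority of edge types. You also correctly identify the decisive difficulty --- that the output equalities $X_i=Y_1$, $Z_j=Y_2$ must be traced back to a common large set of input coordinates. The genuine gap is that the object carrying the whole proof, the canonical operation $f$, is never actually produced: you appeal to ``the Ramsey property of the ordered random graph'' and ``canonical extensions'', but Ramsey canonization lets you extract a canonical function from a given one, not prescribe a behaviour and conclude existence. What is needed (and works) is a direct construction: quotient $A^n$ by the equivalence identifying, for each $c$, all tuples at Hamming distance at most $1$ from $(c,\ldots,c)$; put an edge between two distinct classes iff a majority of the coordinatewise pairs of their canonical representatives lie in $E$ (counting only coordinates where the representatives differ, with a fixed tie-break); and embed the resulting countable graph into the random graph by universality. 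This yields an oligopotent $n$-ary qnu operation, and one must then check three things you leave implicit: (i) well-definedness, i.e.\ that passing between representatives of a class perturbs the vote by at most $2$ coordinates per argument, so each output equality in the hypothesis of a clause localizes to all but at most $2$ input coordinates; (ii) the behaviour on coordinate pairs of type $=$, which your ``majority edge type'' description does not cover; and (iii) the quantitative conclusion that with $m$ the maximal number of disequality disjuncts over the (finitely many) clauses, any arity $n>4m+4$ leaves a strict supermajority of coordinates in which the whole hypothesis holds, forcing the output pair into $R$. With these points supplied your argument closes; without them it is a plan rather than a proof.
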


Further, by Lemma~8 in~\cite{ecsps} and Theorem~1 in~\cite{CPWrona12}, we have that an equality language, i.e., a first order expansion  of $(\N; =, \neq)$ has bounded strict width if and only if all the extra relations are pp-definable by a conjunction of disjunctions of disequalities, i.e., clauses of the form: 
\begin{align}
(x_1 \neq y_1 \vee \cdots \vee x_k \neq y_k) \nonumber.
\end{align}

In order to prove Theorem~\ref{thm:main} we show that all instances of $\csp(\structB)$ under consideration are simple in a particular sense.
Roughly speaking, for a $(2, \maxbound_{\structA})$-minimal instance $\instance$ of $\csp(\structB)$ we construct a digraph $\graphinstance$ over pairs $((v,x), C)$ where $v,x$
are variables  in the instance $\instance$ and $C \subsetneq A^2$ is a binary relation \emph{primitively-positively definable} in  $\structB$. 
(Primitive positive (pp-)definitions are provided by \emph{primitive-positive (pp-)formulas} which are first-order formulas built out of the conjunction, existential quantifiers and atomic formulae only.) 
There is an arc from 
$((v_1, x_1), C)$ to $((v_2, x_2), D)$  in $\graphinstance$ if there is a constraint over a relation $R$ whose scope $(y_1, \ldots, y_k)$ contains $v_1, x_1, v_2, x_2$ and $R(y_1, \ldots, y_k)$  entails 
$(C(v_1, x_1) \implies D(v_2, x_2))$ so that $C$ and $D$ are not the whole projections of $R(y_1, \ldots, y_k)$ to $(v_1, x_1)$ and  $(v_2, x_2)$, respectively.
We say that $\structB$ is implicationally simple if every $\graphinstance$ of every $(2, \maxbound_{\structA})$-minimal instance $\instance$ of $\csp(\structB)$ is acyclic. We show that
\begin{itemize} 
\item all implicationally simple first-order expansions of finitely bounded homogeneous binary cores $\structA$ have relational width $(2, \maxbound_{\structA})$ (in Section~\ref{sect:ImplicationallySimple}) and
\item  all implicationally hard  first-order expansions  of liberal finitely bound\-ed homogeneous binary cores have no bounded strict width (in Sections~\ref{sect:CriticalTernary},\ref{sect:EffEntImp} and~\ref{sect:LiberalBinaryCores}).
\end{itemize}

We would like to mention that not all first-order expansions of finitely bounded homogeneous binary cores with bounded strict width are  implicationally simple. In~\cite{WronaSTACS2020}
one can find the following.

\begin{proposition}
[\cite{WronaSTACS2020}]
\label{prop:qnuExTwoEquivClass}
The structure $\structB = (A; E, N, =, R)$ where 
$(A; E)$ is $C^{\omega}_2$ (two disjoint infinite cliques) and $$R(x_1, x_2, x_3) \equiv ((E(x_1, x_2) \wedge N(x_2, x_3)) \vee (N(x_1, x_2) \wedge E(x_2, x_3)))$$ has bounded strict width. 
\end{proposition}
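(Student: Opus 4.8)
The plan is to exhibit a quasi near-unanimity (QNU) polymorphism of $\structB$ and then invoke the algebraic characterization of $\omega$-categorical structures with bounded strict width from~\cite{OligoClone,Datalog-omegacat}, by which an $\omega$-categorical structure has bounded strict width if and only if it admits a QNU polymorphism; a ternary one already yields strict width $2$. The first step is to read off a convenient description of the template. Writing $c\colon A \to \{1,2\}$ for the map sending each element to the clique containing it, we have $E(x,y) \equiv (c(x) = c(y) \wedge x \neq y)$ and $N(x,y) \equiv (c(x) \neq c(y))$, and a short case analysis on the clique labels shows that the ternary relation collapses to
\[
R(x_1,x_2,x_3) \equiv N(x_1,x_3) \wedge x_1 \neq x_2 \wedge x_2 \neq x_3 .
\]
Thus every relation of $\structB$ is a Boolean combination of the ``clique-label'' relation $N$ (equivalently the edge of $K_2$ pulled back along $c$) and disequalities, and the task reduces to preserving cliques together with a prescribed family of disequalities.

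Second, I would define a ternary operation $f$ on $A$ with three features: (i) $c(f(x,y,z))$ is the majority of $c(x),c(y),c(z)$; (ii) $f$ is near-unanimity, i.e.\ whenever at least two of its arguments equal a common value $a$ the output is $a$; and (iii) on the remaining \emph{generic} triples (those with pairwise distinct entries) $f$ takes a value in the clique forced by (i), chosen injectively enough to realise the disequalities demanded below. Preservation of $N$ is immediate from (i): if $c(a_i)\neq c(b_i)$ for all $i$, then, majority commuting with the nontrivial permutation of $\{1,2\}$ at odd arity, $c(f(\vec a)) \neq c(f(\vec b))$. Preservation of $E$ and of $R$ splits into a clique part, again handled by (i), and a disequality part: $E$ forces $f(\vec a)\neq f(\vec b)$ only on coordinatewise-distinct tuples, and likewise $R$ forces $f(\vec a)\neq f(\vec b)$ and $f(\vec b)\neq f(\vec d)$ only in the presence of coordinatewise disequalities, and these are arranged by (iii).

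The existence of such an $f$ follows because the only constraints imposed on it are clique memberships, which are forced by (i) and consistent with (ii), together with a family of disequalities between outputs; since each clique is infinite and no relation of $\structB$ ever forces two distinct argument triples to share a value, this family is always simultaneously satisfiable (realise it greedily, or by a back-and-forth over $A$). As $f$ is near-unanimity it is in particular QNU, and the cited characterization gives bounded strict width. I expect the main obstacle to be exactly the compatibility verification: requirement (ii) manufactures equalities on the near-diagonal, whereas $E$ and $R$ demand disequalities, so one must check these never clash. The resolution is that two triples collapsing under (ii) to the same value must agree in at least one coordinate, since each has at least two of its three coordinates equal to that value; hence such a pair is never coordinatewise distinct, and $E$ and $R$ impose no disequality on it.
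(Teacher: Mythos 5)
Your reduction of $R$ to $N(x_1,x_3)\wedge x_1\neq x_2\wedge x_2\neq x_3$ is correct, and the overall strategy (exhibit a QNU polymorphism and invoke Theorem~\ref{thm:strictwidth}) is the right one; note that the paper only imports this proposition from~\cite{WronaSTACS2020}, so there is no internal proof to compare against. However, the operation you describe does not exist, and the failure is exactly at the compatibility check you flag. The clash you do verify --- that two near-diagonal triples collapsing to the same value must share a coordinate --- only rules out conflicts between two near-diagonal triples. The fatal conflicts are between a generic triple and the near-diagonal ones. Since $\proj_{1,2}R=\mathord{\neq}$ (for any $u\neq v$ there is $w$ in the clique opposite $u$ with $w\neq v$), preservation of $R$ forces $f(\vec{a})\neq f(\vec{b})$ for \emph{every} coordinatewise-distinct pair of argument triples, not only for pairs related coordinatewise by $E$. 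Now fix a generic triple $\vec{x}=(x_1,x_2,x_3)$ and any $b\in A$. Because $x_1,x_2,x_3$ are pairwise distinct, $b$ coincides with at most one of them, so $b$ can be placed in two of the three coordinates of a triple $\vec{y}$ while keeping $y_i\neq x_i$ for all $i$ (take $\vec{y}=(b,b,z)$ if $b\notin\{x_1,x_2\}$, and one of the two other patterns otherwise). Your requirement (ii) pins $f(\vec{y})=b$, and $R$-preservation then forces $f(\vec{x})\neq b$. As $b$ was arbitrary, $f(\vec{x})$ has no admissible value: no idempotent ternary near-unanimity polymorphism of $\structB$ exists, so the object your proof hinges on cannot be constructed.

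The statement is nevertheless true, and your plan can be repaired by exploiting the full generality of Theorem~\ref{thm:strictwidth}: a \emph{quasi} near-unanimity operation need only satisfy $f(y,x,x)=f(x,y,x)=f(x,x,y)=f(x,x,x)=:g(x)$ with $f$ oligopotent, so $g$ need not be the identity. Take $g$ to be a self-embedding of $C^{\omega}_2$ preserving each clique whose image is co-infinite in each clique; by homogeneity $g$ lies in the local closure of $\Aut(\structB)$, so the resulting $f$ is oligopotent. Then every near-diagonal triple is sent into $g(A)$, and the outputs of generic triples can be assigned greedily inside the majority clique but outside $g(A)$, at each step avoiding the finitely many already-assigned values of coordinatewise-distinct generic triples; your clique-majority arguments for $N$, for $E$, and for the first and third coordinates of $R$ go through unchanged, and the near-diagonal versus generic conflicts disappear because generic outputs avoid $g(A)$ entirely. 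Without some such modification of requirement (ii), the construction collapses.
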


Clearly, $\structB$ is not implicationally simple. Already $\graphinstance$ for an instance $\instance$ with one constraint over relation $R$ is not acyclic.


\subsection{Related Work}

In~\cite{WronaSTACS2020}, the following was proved.

\begin{theorem}
\label{thm:mainSTACS}
(\cite{WronaSTACS2020})
Let $\structA = (A; E, N, =)$ be such that $(A;E)$ is a homogeneous graph and $\structB$ a first-order expansion of $\structA$ with bounded strict width. Then
$\structB$ has relational width  $(2, \maxbound_{\structA})$. 
\end{theorem}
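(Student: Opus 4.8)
The plan is to establish relational width $(2,\maxbound_{\structA})$ directly, by showing that every $(2,\maxbound_{\structA})$-minimal instance $\instance$ of $\csp(\structB)$ in which no constraint has been emptied admits a solution; since $(2,\maxbound_{\structA})$-minimality is attainable in polynomial time over a fixed finite signature, this is exactly what is required. I would open with the two facts that drive the argument. Algebraically, since $\structB$ is $\omega$-categorical, its bounded strict width is equivalent, by the characterization of~\cite{OligoClone,Datalog-omegacat}, to the existence of a quasi near-unanimity polymorphism $f$ of some arity $m\ge 3$. Combinatorially, because $\structA=(A;E,N,=)$ is a homogeneous graph, the only $2$-orbits of $\Aut(\structA)$ are $E$, $N$, and $=$, so the sole ``shape'' datum between two variables is an edge/non-edge/equality choice; moreover $\structA$ is finitely bounded, and by the Lachlan--Woodrow classification~\cite{LachlanWoodrow} every member of $\bounds_{\structA}$ has size at most $\maxbound_{\structA}$ (the bounds being, up to complementation, either nothing beyond size $2$, a single clique $K_n$ with $n=\maxbound_{\structA}$, or a single size-$3$ pattern).

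With these in hand, I would build a solution by a guarded incremental extension. Fix an enumeration of the variables and extend a partial solution one variable at a time, maintaining the invariant that every constraint all of whose variables are already assigned is satisfied. To introduce a new variable $v$, I would, for each constraint $R(\bar y)$ meeting both $v$ and the assigned variables, use $(2,\maxbound_{\structA})$-minimality to produce a local witness value $a_R\in A$ for $v$ compatible with the already-fixed coordinates of that constraint, and then apply $f$ to the tuple of these witnesses to obtain a single value $a^\ast$ for $v$. The near-unanimity behaviour of $f$ is what forces $a^\ast$ to satisfy all the relevant constraints at once, exactly as in the standard strict-width extension argument.

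The specific parameters $(2,\maxbound_{\structA})$ are earned at two points. The first coordinate is $2$ because $\structA$ is binary: all of the genuinely global structure of the core lives on pairs, so the binary projections already fixed by $(2,\cdot)$-minimality determine the entire graph skeleton on the variables, and no higher level of consistency can extract anything further from $\structA$ itself. Whatever higher-arity content remains sits only in the relations of the expansion $\structB$, and that I would discharge through the polymorphism rather than by raising the first coordinate. The second coordinate is $\maxbound_{\structA}$ because the chosen skeleton must avoid every forbidden subgraph; as each has size at most $\maxbound_{\structA}$ and $(2,\maxbound_{\structA})$-minimality certifies every window of that many variables, no forbidden pattern can be created, whereupon homogeneity (amalgamation) realizes the skeleton as an induced subgraph of $\structA$.

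I expect the main obstacle to be the lemma underlying the first point: that in a $(2,\maxbound_{\structA})$-minimal instance the binary projections are globally realizable as an induced subgraph of $\structA$ on the variable set, and that once such a skeleton is fixed the residual task of satisfying the relations of $\structB$ still has the quasi near-unanimity polymorphism $f$ available on it. The first half is an amalgamation argument that must thread the pairwise edge/non-edge choices through the size-$\maxbound_{\structA}$ forbidden patterns; the second requires verifying that the combination step cannot disturb the already-fixed $E/N$ pattern, i.e.\ that the binary projections form subuniverses preserved by $f$. Careful bookkeeping of arities, so that $\maxbound_{\structA}$ dominates the widths of all constraints actually inspected during an extension step, is where the remaining work concentrates.
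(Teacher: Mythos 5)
Your proposal has a genuine gap at its central step, the combination of local witnesses via the qnu-polymorphism. By Theorem~\ref{thm:strictwidth}, bounded strict width gives an oligopotent quasi near-unanimity operation $f$ of \emph{some} arity $m+1$, and the classical extension argument you invoke yields strict width $m$, i.e., the extendability of partial solutions of \emph{$(m,l)$-minimal} instances: the argument behind it is Helly-type, where near-unanimity of arity $m+1$ lets you intersect witness sets provided every $m$ of them are already known to be simultaneously satisfiable, which is exactly what $(m,l)$-minimality certifies. With only $(2,\maxbound_{\structA})$-minimality you have pairwise compatibility of the witness sets, which for $m\geq 3$ does not feed the near-unanimity absorption; moreover $f$ must be applied to full tuples lying in each relation of $\structB$, agreeing on the already-assigned coordinates in all but at most one argument, and nothing at consistency level $2$ produces such tuples. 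Driving the first parameter down from $m$ (which depends on $\structB$ and can be arbitrarily large) to the uniform value $2$ is precisely the content of the theorem, so at this point your proposal assumes the conclusion. Relatedly, your claim that ``the binary projections already fixed by $(2,\cdot)$-minimality determine the entire graph skeleton'' is false: $(2,\maxbound_{\structA})$-minimality assigns to each pair of variables a \emph{union} of orbitals $\instance_{x,y}$, not a single orbital, and the hard part is selecting one orbital per pair consistently with the higher-arity relations of $\structB$, which do correlate distinct pairs. The structure of Proposition~\ref{prop:qnuExTwoEquivClass} is a concrete counterexample to your lemma: it has bounded strict width, is covered by the theorem, yet its ternary relation $R$ forces $E/N$ choices on the pair $(x_1,x_2)$ and the pair $(x_2,x_3)$ to determine one another, so no argument treating the binary skeleton as decoupled from $\structB$ can succeed.

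For comparison, this statement is not proved in the present paper at all; it is quoted from~\cite{WronaSTACS2020}, where the proof relies on the complexity classification of first-order reducts of homogeneous graphs. The present paper reproves it classification-free only in the \emph{liberal} case, via a mechanism absent from your proposal: the implication digraph $\graphinstance$, the reduction of acyclic instances by contracting sinks (Observation~\ref{obs:reduction}, Theorem~\ref{thm:ImpSimpleMinimal}), and the proof that implicational hardness lets one pp-define a critical ternary relation not preserved by any oligopotent qnu-operation (Lemma~\ref{lem:ImpHardNoQNUF}, Proposition~\ref{prop:CriticalTernaryNoQNUF}). Your final amalgamation step---realizing a skeleton with one orbital per pair inside $\structA$ using finite boundedness, since every forbidden structure has size at most $\maxbound_{\structA}$---does match the concluding step of Theorem~\ref{thm:ImpSimpleMinimal}, but it only becomes available after the single-orbital reduction that your argument does not supply; and since homogeneous graphs like $C^{\omega}_2$ are not liberal, even the paper's machinery does not by itself recover the full statement you were asked to prove.
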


Clearly, Theorem~\ref{thm:main} generalizes Theorem~\ref{thm:mainSTACS} when $\structA$ is liberal but more importantly, the proof in~\cite{WronaSTACS2020} is based on the 
complexity classification of CSPs over first-order reducts of homogeneous graphs in~\cite{homographs-siam}, while the proof in this paper 
is based only on the assumptions that 
\begin{itemize}
\item
$\structA$ is a liberal finitely bounded homogeneous binary core and that
\item $\structB$ has bounded strict width. 
\end{itemize}
In particular our result concerns all but finitely many finitely bounded Henson digraphs and many other structures for which CSP classifications has not been provided.

\section{Preliminaries}
\label{sect:preliminaries}
We write $[n]$ for $\{ 1, \ldots, n\}$ and when $t$ is an $n$-tuple we write $t[i]$ with $i \in [n]$ to denote the $i$-th value in $t$. 

\subsection{Structures under consideration}
We consider here (countably infinite) finitely bounded homogeneous relational structures over domain $A$ usually denoted by $\structA$ and their first-order expansions $\structB$ also over domain $A$. In particular we look at first-order expansions $\structB$ of (liberal) finitely bounded homogeneous  
binary cores $\structA$. Recall that liberal means that $\bounds_{\structA}$ contains no structures of size $3,4,5$, and $6$.
For the sake of simplicity we write $R$ to denote both a relational symbol in a signature of $\structB$ as well as the actual relation $R^{\structB}$.

We will write $\proj_{i_1, \ldots, i_k} R$ for an $n$-ary relation $R$ and 
$\{ i_1, \ldots, i_k \} \subseteq [n]$
 to denote 
the relation $R'(y_{1}, \ldots, y_k)$ defined by the formula 
$$\exists x_1 \cdots \exists x_n~R(x_1, \ldots, x_n) \wedge \bigwedge_{j \in [k]} y_j = x_{i_j}.$$
 
We write $\Aut(\structA)$ to denote the set of automorphisms of $\structA$.
An \emph{orbit of a tuple} $t$ with values in $A$ wrt. $\Aut(\structA)$ is the set 
$\{ (\alpha(t[1]), \ldots, \alpha(t[n])) \mid \alpha \in \Aut(A) \}$. 
When $\structA$ is known from the context we simply say an orbit of a tuple instead of an orbit of a tuple wrt. $\Aut(\structA)$. We also say that $O$ is an orbit 
if it is an orbit of some tuple. An \emph{orbital} is an orbit of a tuple with two values.

All the structures under consideration are $\omega$-categorical, i.e., their first-order theories have one countable model up to isomorphism.
By the theorem proved independently by Ryll-Nardzewski, Engeler and Svenonius, 
a structure $\structB$ is $\omega$-categorical if and only if its automorphism group is oligomorphic, i.e., for every $n$ the number of orbits of $n$-tuples is finite. 
Since $\structA$ is homogeneous, we have the following.

\begin{observation}
\label{obs:BinaryCoreOrbitals}
Let $\structA = (A; R_1, \ldots, R_\kappa, =)$ be a binary core. Then  $R_i$ for all $i \in [\kappa]$ is an orbital.
\end{observation}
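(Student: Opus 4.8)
The statement asks us to show, for each $i \in [\kappa]$, that the relation $R_i$ of the binary core $\structA$ is a single orbit of pairs under $\Aut(\structA)$, i.e. an orbital. The plan is to split this into an easy inclusion coming from invariance and a harder one coming from homogeneity.

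First I would observe that $R_i$ is preserved by every $\alpha \in \Aut(\structA)$, since $R_i$ is a relation of $\structA$: if $(a,b) \in R_i$ then $(\alpha(a),\alpha(b)) \in R_i$. Hence $R_i$ is a union of orbits of pairs, and, fixing any $(a,b) \in R_i$, the whole orbit of $(a,b)$ is contained in $R_i$. As $R_i$ is anti-reflexive, this orbit consists of pairs of distinct elements, so it is an orbital in the sense of the paper. What remains is the reverse inclusion, namely that $R_i$ does not split into several orbitals.

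For this I would take two arbitrary pairs $(a,b), (c,d) \in R_i$ and produce $\alpha \in \Aut(\structA)$ mapping $(a,b)$ to $(c,d)$. Consider the bijection $p\colon a \mapsto c,\ b \mapsto d$ between the two-element substructures induced on $\{a,b\}$ and $\{c,d\}$. If $p$ is a local isomorphism, then homogeneity of $\structA$ extends it to the desired $\alpha$, so that $(c,d) = (\alpha(a),\alpha(b))$ lies in the orbit of $(a,b)$; since $(a,b),(c,d)$ were arbitrary, $R_i$ is a single orbital, and combining the two inclusions gives the claim.

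Verifying that $p$ is a local isomorphism is the crux of the argument, and the place where the binary-core hypothesis is genuinely used. Because the only relations of $\structA$ are $=$ and the anti-reflexive $R_1,\dots,R_\kappa$, the quantifier-free type of a pair of distinct elements is determined by the unique relation containing the forward pair together with the unique relation containing the backward pair. The forward pairs cause no trouble: $(a,b)$ and $(c,d)$ both lie in $R_i$ and, by the defining partition property of a binary core, in no other relation. The delicate point is the backward pairs $(b,a)$ and $(d,c)$, and I expect this to be the main obstacle: one must check that membership of the forward pair in $R_i$ forces the backward pair into one and the same relation $R_j$, equivalently that the converse $R_i^{-1}$ is again a single relation of $\structA$. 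This is precisely what happens in the motivating binary cores, where converses are either the relation itself (as for the symmetric $E,N$ of a graph) or another relation of the signature (as for arcs of a digraph). Granting this, $(b,a)$ and $(d,c)$ share a relation, the equality atoms agree since all four elements are distinct, so $p$ preserves every atomic formula in both directions and is a local isomorphism, as needed.
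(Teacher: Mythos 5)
Your proof takes the same route as the paper's: automorphism-invariance of $R_i$ gives that $R_i$ is a union of orbitals, and homogeneity is invoked to merge these into a single one. The difference is one of rigour. The paper's proof simply asserts that homogeneity supplies an automorphism carrying any $(a_1,a_2)\in R_i$ to any $(a_3,a_4)\in R_i$; you correctly point out that homogeneity only applies once the map $a_1\mapsto a_3$, $a_2\mapsto a_4$ is known to be a local isomorphism, and that the only non-automatic requirement is that the backward pairs $(a_2,a_1)$ and $(a_4,a_3)$ lie in the same relation $R_j$. You are right that this is the crux, and right to be suspicious: it does not follow from the stated definition of a binary core together with homogeneity. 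Concretely, for $\kappa=2$ the class of all finite structures in which every ordered pair of distinct elements lies in exactly one of $R_1,R_2$ is an amalgamation class, and in its Fra\"{\i}ss\'{e} limit $R_1$ contains both pairs whose converse is in $R_1$ and pairs whose converse is in $R_2$; these lie in different orbits, so $R_1$ is not an orbital even though the structure is a homogeneous binary core in the literal sense. The observation therefore relies on an implicit strengthening of the definition --- that which $R_j$ contains $(b,a)$ is determined by which $R_i$ contains $(a,b)$, equivalently that the $R_i$ are exactly the anti-reflexive $2$-types --- which holds in all of the paper's examples and is used tacitly elsewhere (e.g.\ when every fo-definable binary relation is claimed to be a union of the $R_i$ and $=$). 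Your ``granting this'' makes explicit an assumption that the paper's proof leaves unstated; modulo that shared assumption the two arguments coincide.
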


\begin{proof}
The equality is clearly an orbital. 
Since $\structA$ is homogeneous we have  that there is an automorphism $\alpha \in \Aut(\structA)$ such that $(\alpha(a_1), \alpha(a_2)) = (a_3, a_4)$ whenever $(a_1, a_2), (a_3, a_4) \in R_i$ and $i \in [\kappa]$. It follows that every $R_i$ is a subset of some orbital wrt. $\Aut(\structA)$. On the other hand, automorphism do not send $(a_1, a_2) \in R_i$ to $(a_3, a_4) \in R_j$ for $i \neq j$. The observation follows.
\end{proof}

Further, since all structures under consideration have quantifier elimination, i.e., 
all first-order definable relations are definable without quantifiers, we have that every binary relation fo-definable in a binary core $\structA$ is a union of  
orbitals $R_i$ with $i \in [\kappa]$. For a binary relation $C \subseteq A^2$ we will write $C^{-1}$ to denote $(C^{-1}(x,y) \equiv C(y,x))$.
We say that a binary relation  $C$ is  anti-reflexive if it is contained in $\neq$ or in other words if for all $(a,b) \in C$ we have that $a \neq b$.

It happens that all liberal finitely bounded homogeneous binary cores pp-define $\neq$.

\begin{observation}
\label{obs:NeqCanBeDefined}
Let $\structA$ be a liberal finitely bounded homogeneous 
binary core.
Then $\structA$ pp-defines $\neq$.
\end{observation}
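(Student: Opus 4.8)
The plan is to exhibit an explicit pp-definition of $\neq$ as a length-two ``path'' through the orbitals. Concretely, I aim to show that for a suitable choice of non-equality orbitals $R_i, R_j$ one has
$$ \neq(x,y) \equiv \exists z\, \bigl( R_i(x,z) \wedge R_j(z,y) \bigr), $$
that is, the relational composition $R_i \circ R_j$ equals $\neq$. By Observation~\ref{obs:BinaryCoreOrbitals} each $R_i$ is an orbital; I will also use that the inverse $R_i^{-1}$ of an orbital is again an orbital (it is the orbit of the reversed pair), and that any two orbitals are either equal or disjoint. These facts drive both inclusions.

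For the inclusion $R_i \circ R_j \subseteq\; \neq$ I would argue that $(a,a)\in R_i\circ R_j$ forces a witness $z$ with $(a,z)\in R_i$ and $(z,a)\in R_j$, whence $(z,a)\in R_i^{-1}\cap R_j$; since $R_i^{-1}$ and $R_j$ are orbitals, this is possible only when $R_i^{-1}=R_j$. Thus choosing $i,j$ with $R_i^{-1}\neq R_j$ guarantees that the composition contains no reflexive pair, so $R_i\circ R_j\subseteq\;\neq$. Such a pair $i,j$ exists whenever there are at least two distinct orbitals; in the degenerate case $\kappa=1$ the single orbital $R_1$ is itself $\neq$ and the statement is immediate.

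The heart of the argument is the reverse inclusion $\neq\;\subseteq R_i\circ R_j$, which I claim holds for \emph{every} choice of $i,j$ and is exactly where liberality enters. Given distinct $x,y$ with $(x,y)\in R_c$, I would form the abstract three-element complete structure on vertices $p,q,r$ that assigns color $c$ to $(p,q)$, color $i$ to $(p,r)$, color $j$ to $(r,q)$, and the forced inverse colors to the reversed pairs. This structure is locally consistent, so it contains no forbidden substructure of size $1$ or $2$, and by liberality none of size $3$; hence it embeds into $\structA$. The key technical step is then to reposition this embedding over the prescribed pair: its restriction to $\{p,q\}$ is a local isomorphism onto a pair lying in the same orbital $R_c$ as $(x,y)$, so by homogeneity it extends to an automorphism of $\structA$, and the image of $r$ provides a witness $z$ with $(x,z)\in R_i$ and $(z,y)\in R_j$. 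Anti-reflexivity of $R_i,R_j$ ensures $z\notin\{x,y\}$ automatically.

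The main obstacle is this last embedding-plus-homogeneity step: one must check that liberality supplies the three-element pattern for \emph{every} possible color $c$ of the given pair, and that homogeneity legitimately transports the abstract witness onto the concrete pair $(x,y)$ rather than merely producing some unrelated realization. Once that is in place, combining the two inclusions for any $i,j$ with $R_i^{-1}\neq R_j$ yields the displayed pp-definition and completes the proof.
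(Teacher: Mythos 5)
Your proof is correct and is essentially the paper's own argument: the paper pp-defines $\neq$ by $\exists x_0\,(O_1(x_0,x_1)\wedge O_2(x_0,x_2))$ for two distinct anti-reflexive orbitals $O_1\neq O_2$, using disjointness of distinct orbitals for one inclusion and liberality plus homogeneity (realizing the three-element pattern for every target orbital $O_3$) for the other, with the case where $\neq$ is itself an orbital handled trivially. Your version merely reorients the wedge into a composition $R_i\circ R_j$, so the condition $O_1\neq O_2$ becomes $R_i^{-1}\neq R_j$; this is the same proof up to replacing one orbital by its inverse.
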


\begin{proof}
If $\neq$ is an orbital wrt $\Aut(\structA)$, then we are done. Otherwise,  there are at least two different orbitals $O_1,O_2 \subseteq \neq$. 
We claim that 
$(x_1 \neq x_2)$  is pp-defined by:  $$(\psi(x_1, x_2) \equiv (\exists x_0~O_1(x_0, x_1) \wedge O_2(x_0, x_2))).$$ Indeed, since $O_1$ and $O_2$ are different we have that  an assignment $\assignment: \{ x_1, x_2 \} \to A$ such that $\assignment(x_1) = \assignment(x_2)$ does not satisfy $\psi$. On the other hand, since $\structA$ is liberal, for all anti-reflexive orbitals $O_3$ there exist element $a_0, a_1, a_2 \in A$ such that $(a_0, a_1) \in O_1, (a_0, a_2) \in O_2$ and $(a_1, a_2) \in O_3$. It implies that an assignment $\assignment: \{ x_1, x_2 \} \to A$ such that $\assignment(x_i) = a_i$ for $i \in [2]$ satisfies $\psi$.  Since $O_3$ was chosen arbitrarily, we have that $\psi$ is a definition of $\neq$.
\end{proof}

In the paper, ternary and quaternary ($4$-ary) relations are of special interest. 
We will say that a tuple $t = (t[1], t[2], t[3])$ and $t = (t[1], t[2], t[3], t[4])$ are
$OP$-tuples for some orbitals $O,P$  if $(t[1], t[2]) \in O$, $(t[2], t[3]) \in P$ and 
$(t[1], t[2]) \in O$, $(t[3], t[4]) \in P$, respectively.

Further, we will say that a tuple is \emph{constant} if all its values are the same and that is \emph{non-constant} otherwise. A tuple is \emph{injective} if all its values are pairwise different.

\subsection{Entailment}
A first-order formula $\varphi(x_1, \ldots, x_n)$
\emph{entails} a first-order formula $\psi(x_1, \ldots, x_n)$ if the formula
$$(\forall x_1 \cdots \forall x_n~(\varphi(x_1, \ldots, x_n) \implies \psi(x_1, \ldots, x_n)))$$ is valid.
Further, we will say that an $n$-ary relation $R$ entails a formula $\psi$ over variables $\{ x_1, \ldots, x_n \}$ if $R(x_1, \ldots, x_n)$ entails $\psi(x_1, \ldots, x_n)$.

We also say that an $n$-ary relation $R$ \emph{entails no equalities} if there are no different $i,j \in [n]$ such that $R$ entails $(x_i = x_j)$.

\subsection{CSP}
A \emph{constraint} $\constraint$ is a pair $((x_1, \ldots, x_k), R)$ where $(x_1, \ldots, x_k)$ is the $k$-tuple of variables called also the \emph{scope} of the constraint  and $R$ is a $k$-ary relation. We will write $\proj_{x_{i_1}, \ldots, x_{i_l}}  \constraint $ for the \emph{projection} of a costraint $\constraint := ((x_1, \ldots, x_k), R)$ to  a tuple of variables $(x_{i_1}, \ldots, x_{i_l})$ with 
$\{  i_1, \ldots, i_l \} \subseteq [k]$. We will have that  $\proj_{x_{i_1}, \ldots, x_{i_l}}  \constraint $ is the constraint $((x_{i_1}, \ldots, x_{i_l}), R')$ where $R' = \proj_{i_1, \ldots, i_l} R$.

We study the problem $\csp(\structB)$ parametrized by first-order expansions of finitely bounded homogeneous structures. The instance $\instance$ 
of $\csp(\structB)$ is a set of constraints $((x_1, \ldots, x_k), R)$ such that $R$ is a relation in $\structB$. We say that $\instance$ is over variables $\V$ if for every constraint $((x_1, \ldots, x_k), R)$ in $\instance$ we have that $x_1, \ldots, x_k \in \V$. The question in the problem $\csp(\structB)$ is whether there exists a solution to $\instance$, i.e., an assignment $\solution: \V \to A$ to variables in $\instance$ such that for all constraints $((x_1, \ldots, x_k), R)$ we have 
$(\solution(x_1), \ldots, \solution(x_k)) \in R$.

Let $W \subseteq \V$. An assignment $\textbf{a}: W \rightarrow A$ is a partial solution to $\mathcal{I}$ if $\textbf{a}$ satisfies all projections of constraints in $\mathcal{I}$ to variables in $W$.
We will sat that $\instance$ entails no equalities if no relations in the constraints of $\instance$ do.


\subsection{The universal-algebraic approach}

We say that an operation $f :A^n \rightarrow A$ is a \emph{polymorphism} of an $m$-ary relation $R$ iff for all $m$-tuples $t_1,\ldots,t_n \in  R$, it holds that  the tuple $(f(t_1[1],\ldots,t_n[1]),\ldots,$
$f(t_1[m],\ldots,t_n[m]))$ is also in $R$. We  will write $f(t_1,\ldots,t_n)$ as a shorthand for the expression $(f(t_1[1],\ldots,t_n[1]),\ldots,f(t_1[m],\ldots,t_n[m]))$. An operation $f$ is a polymorphism of $\mathbb{A}$ if it is a polymorphism of every relation in $\mathbb{A}$. 
If $f : A^n \rightarrow A$ is a polymorphism of $\mathbb{A}$, $R$, we say that $f$ preserves $\mathbb{A}, R$.
A set of polymorphisms of an $\omega$-categorical structure $\mathbb{A}$ forms an algebraic object called an oligomorphic locally closed clone~\cite{OligoClone},
which in particular contains an oligomorphic permutation group~\cite{oligo}.

Recall that a first-order formula is a \emph{primitive-positive formula (pp-formula)}
if it is built out of conjunction, existential quantifiers and atomic formulae  only. There is a deep connection between the polymorphisms of a structure and the relations pp-definable in that structure.

\begin{theorem}
(\cite{BodirskyN06}) 
\label{thm:Galoisconn}
Let $\mathbb{A}$ be a countable $\omega$-categorical structure. Then $R$ is preserved by the polymorphisms of $\mathbb{A}$ if and only if it has a  primitive-positive definition in $\mathbb{A}$.
\end{theorem}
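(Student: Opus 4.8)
The statement is the $\omega$-categorical analogue of the classical Geiger / Bodnarchuk--Kaluznin--Kotov--Romov Galois connection, so the plan is to prove the two inclusions between the relations preserved by $\operatorname{Pol}(\mathbb{A})$ and the relations pp-definable in $\mathbb{A}$ separately. The forward (easy) inclusion --- every pp-definable relation is preserved --- I would establish by structural induction on the defining pp-formula: atomic formulas $R(x_{i_1},\dots,x_{i_k})$ and $x=x'$ are preserved by the very definition of a polymorphism, a conjunction of preserved relations is preserved because a polymorphism applied coordinatewise respects each conjunct simultaneously, and existential quantification corresponds to projection, which is preserved because witnesses may be chosen coordinatewise. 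This direction uses nothing beyond the definitions.

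The substance is the converse: if $R$ (say $m$-ary) is preserved by $\operatorname{Pol}(\mathbb{A})$ then it is pp-definable. Here I would first exploit $\omega$-categoricity. Since the automorphisms of $\mathbb{A}$ are among its unary polymorphisms, $R$ is a union of orbits of $m$-tuples, and by oligomorphy there are only finitely many such orbits; consequently there are only finitely many $m$-ary relations pp-definable in $\mathbb{A}$ (each being a union of finitely many orbits), and they are closed under intersection since a conjunction of pp-formulas is a pp-formula. Hence the intersection $\overline{R}$ of all pp-definable $m$-ary relations containing $R$ is itself pp-definable and is the smallest pp-definable relation above $R$. The goal reduces to showing $\overline{R}=R$; the inclusion $R\subseteq\overline{R}$ is immediate, so it remains to prove $\overline{R}\subseteq R$.

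For the remaining inclusion I would take $t\in\overline{R}$ and manufacture a polymorphism that maps tuples of $R$ to $t$, which forces $t\in R$ by preservation. Concretely, enumerate $R=\{r^{(1)},r^{(2)},\dots\}$ and form, for each coordinate $i\in[m]$, the ``row'' $w_i$ whose $j$-th entry is $r^{(j)}[i]$; the tuples $r^{(j)}$ are then exactly the columns of the array with rows $w_1,\dots,w_m$. The fact that $t$ satisfies every pp-formula valid on all of $R$ --- in particular the canonical conjunctive queries recording which atomic facts hold simultaneously across all columns --- means that the assignment $w_i\mapsto t[i]$ respects every relation of $\mathbb{A}$ on every finite set of columns. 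Reading off finitely many columns at a time yields, for each $N$, a local operation of arity $N$ consistent with this assignment; oligomorphy bounds the number of relevant local types, so a König's-lemma / local-closure argument lets me glue these finite approximations into a genuine polymorphism $f$ of $\mathbb{A}$ with $f$ applied to the columns equal to $t$. Since every column lies in $R$ and $f\in\operatorname{Pol}(\mathbb{A})$, preservation gives $t\in R$, completing $\overline{R}\subseteq R$.

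The main obstacle is exactly this last gluing step: one must verify that the coordinatewise-consistent local operations assemble into a single total operation that simultaneously preserves \emph{all} relations of $\mathbb{A}$. This is where $\omega$-categoricity is indispensable --- it keeps the tree of finite approximations finitely branching at each level (finitely many orbits, hence finitely many local types), so that compactness applies --- and where one uses that $\operatorname{Pol}(\mathbb{A})$ is an oligomorphic \emph{locally closed} clone, so that a pointwise limit of partial polymorphisms is again a polymorphism. Everything else, namely the forward inclusion and the reduction to $\overline{R}=R$, is routine once oligomorphy is in hand.
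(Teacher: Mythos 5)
The paper states this theorem only as a citation to Bodirsky--Ne\v{s}et\v{r}il and gives no proof of its own, so there is nothing internal to compare against; I am judging your argument on its own terms. Your forward direction and the reduction to $\overline{R}=R$ via the finitely many $m$-ary pp-definable relations are correct and standard. The gap is in the final compactness step, and it is genuine. From $t\in\overline{R}$ you only know that $t$ satisfies those atomic facts that hold on the rows $w_1,\dots,w_m$ \emph{across all columns simultaneously}; it does not follow that the assignment $w_i|_{[N]}\mapsto t[i]$ ``respects every relation of $\mathbb{A}$ on every finite set of columns.'' A prefix of an enumeration of $R$ can accidentally satisfy extra atomic relations that fail on later columns, and $t$ need not respect those. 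Concretely, for $\mathbb{A}=(\mathbb{Q};<)$ and $R=\{(a,b,c): a<b \wedge a<c\}$, enumerate $R$ so that the first $N$ tuples also satisfy $x_2<x_3$; then $(w_2|_{[N]},w_3|_{[N]})$ lies coordinatewise in $<$, but a tuple $t\in R$ with $t[2]>t[3]$ violates it, so your restricted assignment is not a partial homomorphism from $\mathbb{A}^N$ and cannot extend to an $N$-ary polymorphism. The canonical query over \emph{all} columns does define a relation containing $R$, but its finite-column truncations need not contain $R$, so membership in $\overline{R}$ tells you nothing about them.

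Even setting this aside, your approximating local operations have arity $N\to\infty$, so there is no single finite-arity operation for K\"onig's lemma to converge to; ``$f$ applied to the columns'' would have to be an operation of infinite arity, which is not a polymorphism. The standard repair fixes both problems at once: use oligomorphy \emph{first} to replace the infinitely many columns by finitely many, namely representatives $t_1,\dots,t_n$ of the orbits of tuples in $R$, noting that $R=\{f(t_1,\dots,t_n): f\in\mathrm{Pol}^{(n)}(\mathbb{A})\}$ because every $r\in R$ equals $\alpha(t_i)=(\alpha\circ\pi_i)(t_1,\dots,t_n)$ for some automorphism $\alpha$. The prospective polymorphism then has fixed arity $n$, the rows $w_1,\dots,w_m$ live in $A^n$, and the compactness argument is run over growing finite subsets of the \emph{domain} $A^n$, not over the columns: the canonical queries of finite induced substructures of $\mathbb{A}^n$ containing $w_1,\dots,w_m$ form a decreasing chain of pp-definable $m$-ary relations, each containing $R$ (since each $\alpha\circ\pi_i$ restricts to a partial homomorphism), the chain stabilises by oligomorphy, and a tuple in the stable value is the image of $(w_1,\dots,w_m)$ under a genuine homomorphism $\mathbb{A}^n\to\mathbb{A}$ by a K\"onig's-lemma argument together with local closure of $\mathrm{Pol}(\mathbb{A})$. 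With that rearrangement your outline becomes the Bodirsky--Ne\v{s}et\v{r}il proof.
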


We say that a set of operations $F$ generates a set of operations $G$ if every $g \in G$ is in the smallest locally-closed clone containing $F$.
An operation $f$ of an oligomorphic clone $F$ is called \emph{oligopotent} if $\{ g \}$ where $g(x) := f(x, \ldots, x)$ 
is generated by the permutations in $F$.
We say that a $k$-ary operation $f$ over domain $A$ is a \emph{quasi near-unanimity operation (qnu-operation)} if 
\begin{align}
f(y, x, \ldots, x) = f(x,y,x, \ldots, x) = \cdots = \nonumber\\
\cdots = f(x, \ldots, x, y) = f(x, \ldots, x) \nonumber
\end{align}
 for all
$x,y \in A$.

\subsection{Widths and Minimality}
We will now give a formal defnition of a $(k,l)$-minimal instance.

\begin{definition}
We say that an instance $\instance$ over $\V$ of $\csp(\structB)$ is $(k,l)$-minimal with $k \leq l$ if  both of the following hold:
\begin{itemize}
\item every subset of at most $l$ variables in  $\V$ is contained in a scope of some constraint in $\instance$ and
\item for every at most $k$-element subset of variables $X = \{ x_1, \ldots, x_k \} \subseteq \V$ and any two constraints $C_1, C_2 \in \instance$ whose scopes contain $X$ the projections $\proj_{x_1, \ldots, x_k} C_1$ and
$\proj_{x_1, \ldots, x_k} C_2$ are the same.
\end{itemize}
\end{definition}

We say that an instance $\instance$ of the CSP is \emph{non-trivial} if it does not contain a constraint $((x_1, \ldots, x_k), \emptyset)$.  Otherwise, $\instance$ is \emph{trivial}.

Set $k \leq l$. Clearly not every instance $\instance$ over variables $\V$ of $\csp(\structB)$ for $\structB$ over domain $A$ is $(k,l)$-minimal, however, the algorithm that obtains an equivalent $(k,l)$-minimal instance is straightforward and works in time $O(\left| \V \right|^m)$ where $m$ is the maximum of $l$ and the largest arity in the signature of $\structB$. Indeed, it is enough to introduce a new constraint $((x_1, \ldots, x_l), A^l)$ for all pairwise different variables $x_1, \ldots, x_l \in \V$ to satisfies the first condition. Then the algorithm removes tuples (orbits) from  the relations in constraints in the instance as long as the second condition is not satisfied. It is widely known and easy to prove that an instance $\instance'$ of the CSP obtained by the described algorithm is equivalent, i.e., has the same set of solution, to the orginal instance $\instance$. In particular we have that  if $\instance'$ is trivial, then $\instance$ has no  solutions.
Under a natural assumption that $\structB$ contains all at most $l$-ary relations $pp$-definable in $\structB$, we have that $\instance'$ is an instance of $\csp(\structB)$. From now on  this assumption will be in effect.

\begin{definition}
A relational structure $\structB$ has \emph{relational width $(k,l)$} if every $(k,l)$-minimal instance $\instance$ of $\structB$ has a solution iff it is non-trivial.

A relational structure $\structB$ has \emph{bounded relational width} if it has relational width $(k,l)$ for some natural numbers $k \leq l$.
\end{definition}

In this paper  we mainly look at $(2, \maxbound_{\structA})$-minimal instances $\instance$ of $\csp(\structB)$  for first-order expansions $\structB$ of finitely bounded homogeneous binary cores $\structA$. For such instances $\instance$ and $x,y \in \V$ we will write $\instance_{x, y}$ to denote the projection of any constraint in $\instance$ to the variables $(x,y)$. 

\noindent
We now turn to strict width.

\begin{definition}
We say that $\structB$ has strict width $k$ if there exists $l$ such that every partial solution of every $(k,l)$-minimal instance of $\csp(\structB)$ may be extended to a total solution. 
\end{definition}
 
 The following theorem provides a characterization of bounded strict width that we use intensively in this paper.
 
 \begin{theorem}
\cite{Datalog-omegacat,OligoClone}
\label{thm:strictwidth}
Let $\mathbb{B}$ be an $\omega$-categorical language. Then  the following are equivalent.
\begin{enumerate}
 \item $\mathbb{B}$ has strict width $k$.
 \item $\mathbb{B}$ has an oligopotent $(k+1)$-ary quasi near-unanimity operation as a polymorphism.
\end{enumerate}
\end{theorem}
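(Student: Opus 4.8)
The plan is to prove the two implications separately, using \emph{$k$-decomposability} of relations as the bridge between the combinatorial notion of strict width and the algebraic notion of a quasi near-unanimity polymorphism. Recall that an $n$-ary relation $R$ is $k$-decomposable if a tuple $t$ lies in $R$ exactly when each of its projections to at most $k$ coordinates lies in the corresponding projection of $R$; this is precisely the property that makes $k$-wise local consistency sufficient for membership. The finite version of the theorem goes back to Feder and Vardi, with the algebraic half resting on a Baker--Pixley argument for near-unanimity operations; the real content here is to make both halves survive the passage from a finite to a merely $\omega$-categorical domain.

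For $(2) \Rightarrow (1)$ I would first show that an oligopotent $(k+1)$-ary qnu polymorphism $f$ forces every relation pp-definable in $\structB$ to be $k$-decomposable. Given a tuple $t$ all of whose at-most-$k$-coordinate projections lie in the corresponding projections of $R$, one selects for each coordinate $i$ a witness $s_i \in R$ agreeing with $t$ off coordinate $i$ and applies $f$ to the list $s_1, \dots, s_{k+1}$. The qnu identities collapse the outcome to $g(t)$, where $g(x) := f(x, \dots, x)$, and $g(t) \in R$ because $f$ is a polymorphism. Oligopotence enters decisively at this point: it makes $g$ a limit of automorphisms, so that $g(t)$ lies in the same orbit as $t$; since every pp-definable relation over an $\omega$-categorical structure is a union of orbits, membership of $g(t)$ transfers to $t$, giving $t \in R$. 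By Theorem~\ref{thm:Galoisconn} the relations occurring in constraints are pp-definable and hence $k$-decomposable. Choosing $l$ above the maximal arity in the signature, $(k,l)$-minimality makes the projections to $k$-element variable sets mutually consistent, and $k$-decomposability then yields the extension property: any partial solution extends by one further variable. Because $\omega$-categoricity renders each constraint a union of finitely many orbits, this extension is a finite consistency check, and since a CSP instance has only finitely many variables, finitely many extensions turn any partial solution into a total one.

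The implication $(1) \Rightarrow (2)$ is, I expect, the harder direction, and I would attack it through an indicator (free-structure) construction. A $(k+1)$-ary polymorphism is the same thing as a homomorphism from the power $\structB^{k+1}$ to $\structB$, so I would read $\structB^{k+1}$ as an instance of $\csp(\structB)$ whose variables are the $(k+1)$-tuples over $A$ and whose constraints apply each relation of $\structB$ columnwise, and then impose the qnu identities as equality constraints merging the relevant variables. Since this power instance has infinitely many variables, strict width cannot be applied to it directly; instead I would apply it to every finite sub-instance, where, after propagation to $(k,l)$-minimality, the identity fragment is a partial solution that strict width extends to a total one, defining $f$ on that finite block of inputs. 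A limit over all finite blocks, available because $\omega$-categoricity bounds the number of orbits of $(k+1)$-tuples, assembles these into a global operation $f$; one then checks that $f$ is a genuine polymorphism and arranges the diagonal $g(x) := f(x,\dots,x)$ to stay within the orbit structure, so that $f$ is oligopotent.

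The main obstacle is this passage to the infinite domain, and most acutely in $(1) \Rightarrow (2)$: one must set up the indicator instance so that each finite sub-instance is genuinely $(k,l)$-minimal with the identities forming a partial solution, then upgrade the family of finite solutions to a single operation on all of $A$ while certifying both that it respects every relation of $\structB$ and that its unary part is generated by $\Aut(\structB)$. The compactness/limit argument over the finitely many orbits of $(k+1)$-tuples is the technical heart, and it is exactly where $\omega$-categoricity, rather than finiteness, carries the proof.
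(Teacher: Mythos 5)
This theorem is not proved in the paper at all: it is imported verbatim from the cited references (Bodirsky--Dalmau and Bodirsky--Chen) and used as a black box, so there is no in-paper proof to compare your attempt against. What you have written is a sketch of the standard literature proof, and at the level of strategy it is the right one: the direction $(2)\Rightarrow(1)$ via a Baker--Pixley argument showing that an oligopotent $(k+1)$-ary qnu polymorphism forces $k$-decomposability of all pp-definable relations (with oligopotence used exactly as you say, to replace $g(t)\in R$ by $t\in R$ since $g(t)$ lies in the orbit of $t$ and pp-definable relations are orbit unions), and the direction $(1)\Rightarrow(2)$ via the indicator instance on $\structB^{k+1}$ with a compactness argument over finite sub-instances, enabled by the finiteness of the number of orbits of $(k+1)$-tuples.

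That said, the sketch leaves the genuinely hard steps open rather than resolving them. In $(2)\Rightarrow(1)$, the witnesses $s_i\in R$ agreeing with $t$ off coordinate $i$ exist by hypothesis only when $R$ has arity $k+1$; for higher arities you need a downward induction establishing $(n-1)$-decomposability before $n$-decomposability, and the final passage from $k$-decomposability to the one-variable extension property still requires the $(k,l)$-minimality and a finite-orbit case analysis that you only gesture at. In $(1)\Rightarrow(2)$ the gap is more serious: after merging variables to impose the qnu identities, the remaining ``partial solution'' is essentially empty, so what you actually need is (a) that every finite sub-instance of the merged indicator instance is still non-trivial after establishing $(k,l)$-minimality --- and this is not obvious, since the natural solutions of the unmerged instance, the coordinate projections, do \emph{not} respect the merging (e.g.\ the first projection sends $(y,x,\ldots,x)$ and $(x,\ldots,x,y)$ to different values) --- and (b) that the operation assembled in the limit is oligopotent, i.e.\ that its diagonal is generated by the permutations of the clone. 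You correctly identify this as the technical heart, but identifying it is not the same as closing it; as written, the proposal is an accurate roadmap to the proof in \cite{Datalog-omegacat,OligoClone} rather than a complete proof.
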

 
Observe that in order to show that some structure $\structB$ has no bounded strict width it is enough to show that that they are not preserved by any oligopotent qnu-operations or, by Theorem~\ref{thm:Galoisconn} to pp-define
a  structure $\structB'$ of which we already know that has no bounded strict width.

\section{Critical Ternary Relations}
\label{sect:CriticalTernary}

We define a family of structures which is the main source of 'infinite' strict width, i.e., whenever we want to show that some structure does not have bounded strict width we pp-define a \emph{critical ternary relation}.

\begin{definition}
\label{def:CriticalTernary}
We say that a relation $R$   is a \emph{critical ternary relation} 
over $(\structB, C_1, C_2, D_1, D_2)$ if all 
of the following hold:
\begin{itemize}
\item $R, C_1, C_2, D_1, D_2$ are 
pp-definable in $\structB$,
\item $C_1$ and $C_2$ are disjoint and contained in $ \proj_{1,2} R$ 
\item $D_1$ and $D_2$ are disjoint and contained in $ \proj_{2,3} R$,
\item  both $C_1, D_1$ are anti-reflexive,
\item  either both $C_2, D_2$ are anti-reflexive or both are $=$,
\item $R$ entails $(C_1(x_1, x_2) \implies D_1(x_2, x_3))$
\item $R$ entails $(D_1(x_2, x_3) \implies C_1(x_1, x_2))$
\item $R$ contains both 
$$(R_1(x_1, x_2, x_3) \equiv (C_1(x_1, x_2) \wedge D_1(x_2, x_3)))$$
and
$$(R_2(x_1, x_2, x_3) \equiv (C_2(x_1, x_2) \wedge D_2(x_2, x_3)))$$
\end{itemize}
\end{definition}

Before we turn to the main result of this subsection we need two observations.

\begin{observation}
\label{obs:TernaryCTowardsD} 
Let $\structA$ be a liberal finitely bounded homogeneous binary core and $R$ a critical ternary relation over  $(\structB, C_1, C_2, D_1, D_2)$
for some first-order expansion $\structB$ of $\structA$.
Let $k \in \N$ and $I, J \subseteq [k] \setminus \{ m \}$ for some $m \in [k]$ be disjoint subsets of indices such that
$I \cup J \cup \{ m \} = [k]
$ and $w, u \in A^k$ such that  all of the following hold:

\begin{itemize}
\item $(w[i], u[i]) \in C_1$ for all $i \in I$,
\item $(w[m], u[m]) \in C_2$, and
\item $(w[i], u[i]) \in C_2$ for all $i \in J$.
\end{itemize}
Then there exists $v \in A^k$ such that 
all of the following hold:
\begin{itemize}
\item $(w[i], u[i], v[i]) \in R_1$ for all $i \in I$,
\item $(w[m], u[m], v[m]) \in \{ (x_1, x_2, x_3) \in A^3 \mid C_2(x_1, x_2)
\wedge
D_1(x_2, x_3)$
$ \wedge x_1 \neq x_3 \}$, and 
\item $(w[i], u[i], v[i]) \in R_2$ for all $i \in J$.
\end{itemize}
\end{observation}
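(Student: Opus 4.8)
The plan is to construct $v$ one coordinate at a time, exploiting that the three required conditions are purely coordinate-wise: the constraint on $v[i]$ involves only $w[i]$ and $u[i]$ (for $i=m$ also the single value $w[m]$), and nothing ties $v[i]$ to $v[j]$ for $i\neq j$. Thus it suffices to exhibit, for each index in turn, a value in $A$ with the prescribed relationship to $w$ and $u$, and then assemble these into $v\in A^k$. Two standing facts will be used repeatedly. First, since $\structA$ is a homogeneous binary core, $\Aut(\structA)$ acts transitively on $A$: any bijection between two one-element substructures is a local isomorphism and hence extends to an automorphism. Second, by quantifier elimination and Observation~\ref{obs:BinaryCoreOrbitals}, each of $C_1,C_2,D_1,D_2$ is a union of orbitals, so in particular each contains at least one orbital (non-emptiness being part of the critical ternary setup $R_1,R_2\subseteq R$).

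For an index $i\in I$ the target $(w[i],u[i],v[i])\in R_1$ means $C_1(w[i],u[i])\wedge D_1(u[i],v[i])$; the first conjunct is given, so I need only some $v[i]$ with $D_1(u[i],v[i])$. Picking an orbital $P\subseteq D_1$ with a representative $(b_0,c_0)$ and an automorphism sending $b_0$ to $u[i]$ (transitivity), the image of $c_0$ is such a $v[i]$. Indices $i\in J$ are symmetric, with $D_2$ in place of $D_1$: if $C_2,D_2$ are anti-reflexive the same argument yields $v[i]$ with $D_2(u[i],v[i])$, and if instead $C_2,D_2$ are $=$, then $D_2(u[i],v[i])$ just says $v[i]=u[i]$, so $v[i]:=u[i]$ works. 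In both situations $(w[i],u[i],v[i])\in R_2$.

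The decisive coordinate is $m$, where besides $D_1(u[m],v[m])$ I must also force $w[m]\neq v[m]$. If $C_2$ is $=$, then $w[m]=u[m]$ by hypothesis, and any $v[m]$ with $D_1(u[m],v[m])$ automatically satisfies $v[m]\neq u[m]=w[m]$ because $D_1$ is anti-reflexive; such $v[m]$ exists exactly as in the previous paragraph. If $C_2$ is anti-reflexive, the disequality is a genuine constraint, and this is where liberality enters. It suffices to show that $u[m]$ has two distinct $D_1$-successors, for then at least one of them differs from $w[m]$. Choosing an orbital $P\subseteq D_1$ and any anti-reflexive orbital $Q$, the three-element configuration on $b,c_1,c_2$ with $(b,c_1),(b,c_2)\in P$ and $(c_1,c_2)\in Q$ embeds into $\structA$ by liberality, as it has only three elements and uses only orbitals realised in $\structA$, while no member of $\bounds_{\structA}$ has between three and six elements. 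This realises an element with two distinct $P$-successors; by transitivity so does $u[m]$, and selecting a $P$-successor of $u[m]$ distinct from $w[m]$ yields the required $v[m]$ with $D_1(u[m],v[m])$ and $v[m]\neq w[m]$.

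I expect the anti-reflexive case at coordinate $m$ to be the only real difficulty: the remaining coordinates reduce to pure transitivity, whereas here one must exclude the degenerate scenario in which $u[m]$ admits a single $D_1$-successor, equal to $w[m]$. Liberality is precisely the hypothesis that rules this out, by guaranteeing that every small consistent orbital pattern --- here a vertex with two distinct $P$-successors --- is realised in $\structA$.
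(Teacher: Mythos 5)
Your proof is correct and follows essentially the same route as the paper's: both construct $v$ coordinate by coordinate, using homogeneity to transport orbital representatives onto $u[i]$ and liberality to realize the small configuration needed at coordinate $m$ to secure $v[m]\neq w[m]$. If anything, your handling of coordinate $m$ is the more careful one --- the paper's phrase ``we set $v[i+1]$ to $u[i+1]$'' in the subcase where $D_2$ is $=$ cannot be taken literally, since $D_1$ is anti-reflexive and one still needs $D_1(u[m],v[m])$; your observation that in that subcase $w[m]=u[m]$ and hence any $D_1$-successor of $u[m]$ automatically differs from $w[m]$ is the right way to close that step.
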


\begin{proof}
Starting with a substructure of the structure $\structA$ induced by the elements
$w[1], \ldots, w[k], $
$u[1], \ldots, u[k]$ satisfying
the conditions in the formulation of the lemma we will move around the structure $\structA$ to find $v[1], \ldots, v[k]$ so that the structure induced by all elements in $w, u$ and $v$  satisfy all the requirements. Assume we are done for $v[1], \ldots, v[i]$ for some $i \in [k]$ and consider $v[i+1]$. 
Then one of three requirements has to be satisfied by $v[i+1]$.
The first case to consider is where $(i+1) \in I$ and we require
$ (w[i+1], u[i+1], v[i+1]) \in R_1$. Since $\structA$ is liberal, there are three pairwise different elements $a_1, a_2, a_3$ in $\structA$ such that $(a_1, a_2)$   are in the same orbital as $(w[i+1], u[i+1])$ and $(a_2, a_3)$ is in some (actually in any) orbital contained in $D_1$. Since $\structA$ is homogeneous there is an automorphism $\alpha$  
sending $(a_1, a_2)$ to $(w[i+1], u[i+1])$. Then we take $v[i+1]$ to be $\alpha(a_3)$. When $i + 1 = m$ then either $D_2$ is anti-reflexive and we proceed similarly or $D_2$ is $=$ and  then we set $v[i+1]$ to $u[i+1]$.
If $i \in J$, then we proceed as in one of the cases above. Indeed, either $C_2, D_2 \subseteq \neq$ or both $C_2, D_2$ equal $=$.
\end{proof}

We also need to consider a situation from the previous observation where $u,v$ are given and one looks for $w$.

\begin{observation}
\label{obs:TernaryDTowardsC}
Let $\structA$ be a liberal finitely bounded homogeneous binary core  and $R$ a critical ternary relation over  $(\structB, C_1, C_2, D_1, D_2)$
for some first-order expansion $\structB$ of $\structA$.
Let $k \in \N$ and $I, J \subseteq [k] \setminus \{ m \}$ for some $m \in [k]$ be disjoint subsets of indices such that
$I \cup J \cup \{ m \} = [k]
$ and $u,v \in A^k$ such that  

\begin{itemize}
\item $(u[i], v[i]) \in D_1$ for all $j \in I$,
\item $(u[m], v[m]) \in D_2$, and
\item $(u[i], v[i]) \in D_2$ for all $i \in J$.
\end{itemize}
Then there exists $w \in A^k$ such that 
all of the following hold:
\begin{itemize}
\item $(w[j], u[j], v[j]) \in R_1$ for all $j \in I$,
\item $(w[m], u[m], v[m]) \in \{ (x_1, x_2, x_3) \in A^3 \mid C_1(x_1, x_2) \wedge D_2(x_2, x_3) \wedge x_1 \neq x_3\}$, and 
\item $(w[j], u[j], v[j]) \in R_2$ for all $j \in J$.
\end{itemize}
\end{observation}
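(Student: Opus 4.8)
The plan is to deduce this observation from Observation~\ref{obs:TernaryCTowardsD} by exploiting the symmetry of the critical ternary relation under reversal of coordinates, rather than repeating the coordinate-by-coordinate construction. First I would introduce the reversed relation $R'(x_1, x_2, x_3) \equiv R(x_3, x_2, x_1)$, which is pp-definable in $\structB$ since it is obtained from $R$ by permuting arguments, and verify that $R'$ is a critical ternary relation over $(\structB, D_1^{-1}, D_2^{-1}, C_1^{-1}, C_2^{-1})$. The disjointness and containment conditions transfer because $\proj_{1,2} R' = (\proj_{2,3} R)^{-1}$ and $\proj_{2,3} R' = (\proj_{1,2} R)^{-1}$; anti-reflexivity (and the alternative that the relations are $=$) is preserved under taking inverses. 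The one point requiring care is that the two entailment clauses are interchanged by the reversal: the clause that $R$ entails $(C_1(x_1,x_2) \implies D_1(x_2,x_3))$ corresponds to the clause that $R'$ entails $(C_1^{-1}(x_2,x_3) \implies D_1^{-1}(x_1,x_2))$, and symmetrically, so both entailment conditions of Definition~\ref{def:CriticalTernary} hold for $R'$ precisely because both hold for $R$. Finally $R'$ contains the analogues $R_1', R_2'$, since $(x_1,x_2,x_3) \in R_i'$ iff $(x_3,x_2,x_1) \in R_i$.

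Having established this, I would apply Observation~\ref{obs:TernaryCTowardsD} to $R'$ with the substitution $w' := v$ and $u' := u$, keeping the same index sets $I, J$ and the same $m$. The hypotheses of that observation, namely $(w'[i], u'[i]) \in D_1^{-1}$ for $i \in I$ together with the corresponding conditions at $m$ and on $J$, translate exactly into the present hypotheses $(u[i], v[i]) \in D_1$, $(u[m], v[m]) \in D_2$, and $(u[i], v[i]) \in D_2$. The output vector $v'$ produced by Observation~\ref{obs:TernaryCTowardsD} is then taken as the sought $w$: unwinding the definitions under the reversal, $(w'[i], u'[i], v'[i]) \in R_1'$ becomes $(w[i], u[i], v[i]) \in R_1$; the middle condition $\{C_2^{-1}(x_1,x_2) \wedge D_1^{-1}(x_2,x_3) \wedge x_1 \neq x_3\}$ becomes $C_1(w[m], u[m]) \wedge D_2(u[m], v[m]) \wedge w[m] \neq v[m]$ because $C_2' = D_2^{-1}$ and $D_1' = C_1^{-1}$; and $(w'[i], u'[i], v'[i]) \in R_2'$ becomes $(w[i], u[i], v[i]) \in R_2$. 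These are exactly the three required conclusions.

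The main obstacle is bookkeeping rather than genuine difficulty: one must confirm that the reversal really interchanges the two entailment clauses, and that the asymmetric middle condition (which mixes $C_2, D_1$ in Observation~\ref{obs:TernaryCTowardsD} but $C_1, D_2$ here) is matched correctly under the identifications $C_2' = D_2^{-1}$ and $D_1' = C_1^{-1}$. Should one prefer to avoid setting up the reversed relation, a direct mirror of the proof of Observation~\ref{obs:TernaryCTowardsD} works: build $w$ one coordinate at a time, and at each coordinate use liberality to produce three pairwise distinct elements $a_1, a_2, a_3$ with $(a_2, a_3)$ in the orbital of the given pair $(u[i], v[i])$ and $(a_1, a_2)$ in some orbital contained in the relevant relation ($C_1$ for $i \in I$ or $i = m$, and $C_2$ for $i \in J$), then invoke homogeneity to fix an automorphism $\alpha$ sending $(a_2, a_3)$ to $(u[i], v[i])$ and set $w[i] := \alpha(a_1)$; here the disequality $w[m] \neq v[m]$ at the middle coordinate comes for free from $a_1 \neq a_3$ in the anti-reflexive case and from anti-reflexivity of $C_1$ when $D_2$ is $=$.
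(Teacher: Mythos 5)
Your proposal is correct: the paper in fact states Observation~\ref{obs:TernaryDTowardsC} without any proof, implicitly treating it as the mirror image of Observation~\ref{obs:TernaryCTowardsD}, and both of your routes --- the reduction via the reversed relation $R(x_3,x_2,x_1)$ (whose status as a critical ternary relation over $(\structB, D_1^{-1}, D_2^{-1}, C_1^{-1}, C_2^{-1})$ you verify correctly, including the crucial swap of the two entailment clauses) and the direct coordinate-by-coordinate construction --- faithfully supply that missing argument. Your fallback construction is essentially the paper's own proof of Observation~\ref{obs:TernaryCTowardsD} transposed, so this counts as the same approach, with the reversal reduction being a slightly cleaner way to make the intended symmetry precise.
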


We are now in the position to prove that critical ternary relations pp-definable in first-order expansions of liberal finitely bounded homogeneous cores have no bounded strict width.

\begin{proposition}
\label{prop:CriticalTernaryNoQNUF}
Let $\structA$ be a liberal finitely bounded homogeneous binary core  and $R$ a critical ternary relation over  $(\structB, C_1, C_2, D_1, D_2)$
for some first-order expansion $\structB$ of $\structA$.
 Then both $\structB$ and $R$ do not have bounded strict width.
\end{proposition}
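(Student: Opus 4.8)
The plan is to show that $R$ (and hence $\structB$) admits no oligopotent quasi near-unanimity polymorphism, which by Theorem~\ref{thm:strictwidth} rules out bounded strict width. Suppose for contradiction that $f$ is an oligopotent $(k+1)$-ary qnu-polymorphism of $\structB$; in particular $f$ preserves $R$. The strategy is to build $(k+1)$ carefully chosen tuples in $R$ such that applying $f$ coordinatewise, combined with the qnu-identities, forces a contradiction with the entailments $(C_1(x_1,x_2)\implies D_1(x_2,x_3))$ and $(D_1(x_2,x_3)\implies C_1(x_1,x_2))$ that $R$ entails.

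The key construction uses the two Observations as the engine. I would set up $(k+1)$ input tuples $t_1,\dots,t_{k+1}\in R$ (each a triple, after possibly repeating coordinates $k$ times so the arity matches $f$) arranged so that in the first coordinate the ''diagonal'' pattern forces $C_2$ in exactly one position and $C_1$ elsewhere, while in the third coordinate the pattern forces $D_2$ in exactly one position and $D_1$ elsewhere. Concretely, for each $m\in[k]$ I would use Observation~\ref{obs:TernaryCTowardsD} with $J=\emptyset$, $I=[k]\setminus\{m\}$ (or the symmetric Observation~\ref{obs:TernaryDTowardsC}) to manufacture a tuple whose projection to coordinates $(1,2)$ sits in $C_1$ everywhere except position $m$ where it sits in $C_2$, and whose projection to $(2,3)$ lands in $D_1$ (resp.\ a mixed $D_1$/$D_2$ pattern). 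Because $R$ contains both $R_1$ and $R_2$, the relevant ''pure'' tuples $(w[i],u[i],v[i])\in R_1$ or $R_2$ are genuinely in $R$, so these are legitimate members of $R$ on which $f$ must act.

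Applying $f$ to these $(k+1)$ tuples and reading off the qnu-identities coordinatewise, the first component $f(x_1,\dots,x_{k+1})$ should evaluate to a pair in $C_1$ (since a single off-diagonal entry is absorbed by the qnu-law), whereas the corresponding third component is forced into $D_2$ rather than $D_1$ — or vice versa. Since $f(R,\dots,R)\subseteq R$, the output triple lies in $R$, so it must respect the entailment $C_1(x_1,x_2)\implies D_1(x_2,x_3)$; but we have arranged the third coordinate to avoid $D_1$ while keeping the first in $C_1$, a contradiction. Oligopotence is used to control the unary behavior $g(x)=f(x,\dots,x)$ so that applying $f$ to tuples with a constant ''background'' does not slip out of the intended orbitals, and the liberality of $\structA$ (invoked inside the Observations) guarantees the witnessing elements exist.

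The main obstacle I anticipate is bookkeeping the exact orbital of the output in each coordinate: the qnu-identities only tell us $f$ of a ''single deviation'' equals $f$ of the constant tuple, so I must ensure that the constant-tuple value lands in a prescribed part of $C_1$ (resp.\ $D_1$) and that this is incompatible with the output on the other side. The clean way to do this is to play the two Observations against each other — one forces the $C$-side pattern from the $D$-side data and the other forces the $D$-side pattern from the $C$-side data — so that a putative qnu-polymorphism would have to simultaneously satisfy both $C_1\Rightarrow D_1$ and its partner in a way that collapses $C_1$ and $C_2$ (or $D_1$ and $D_2$), contradicting that $C_1,C_2$ (and $D_1,D_2$) are \emph{disjoint}. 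Verifying that all the manufactured tuples genuinely lie in $R$, and that the anti-reflexivity/equality case split for $C_2,D_2$ is handled uniformly, is the part that will require the most care.
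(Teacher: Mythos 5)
Your overall target is the right one and matches the paper's: rule out an oligopotent qnu-polymorphism via Theorem~\ref{thm:strictwidth}, use Observations~\ref{obs:TernaryCTowardsD} and~\ref{obs:TernaryDTowardsC} to manufacture witnesses, keep the middle tuple constant so oligopotence and the qnu-identities control $f$ there, and derive the final contradiction from disjointness of $C_1,C_2$ (resp.\ $D_1,D_2$) against pp-definability. However, there is a genuine gap at the central step. You write that ``a single off-diagonal entry is absorbed by the qnu-law'', so that $f$ applied to a pair of tuples that are $C_1$-connected on all but one coordinate and $C_2$-connected on the remaining one lands in $C_1$. The qnu-identity absorbs a deviation in the \emph{values} of the arguments of $f$ (one argument differing from an otherwise constant tuple); it says nothing about a deviation in the \emph{orbital} connecting two tuples coordinatewise. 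For a mixed $C_1/C_2$ pattern neither the preservation of $C_1$ nor of $C_2$ applies, so no single identity or preservation statement yields $(f(w),f(u))\in C_1$. Moreover, even if one such step were granted, a single application of $f$ does not produce a clash: to contradict preservation you must reach the configuration in which \emph{all} $k$ coordinates are $D_2$-connected yet the image is forced into $D_1$.

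The missing idea is an induction on the number $i$ of $C_2$- (resp.\ $D_2$-) connected coordinates, running the two claims --- ``mixed $C_1/C_2$ patterns with constant middle tuple map into $C_1$'' and ``mixed $D_1/D_2$ patterns with constant middle tuple map into $D_1$'' --- simultaneously, each induction step for one claim invoking the other at the previous value of $i$ through the entailments $(C_1(x_1,x_2)\implies D_1(x_2,x_3))$ and $(D_1(x_2,x_3)\implies C_1(x_1,x_2))$ and through the Observations; the qnu-identity enters only to replace the constant middle tuple $u$ by a one-coordinate perturbation $u'$ (needed to make every coordinate triple lie in $R_1\cup R_2\subseteq R$) without changing $f(u')=f(u)$. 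At $i=k$ one obtains a pair that is $D_2$-connected on every coordinate, whose image must lie in $D_2$ by pp-definability and in $D_1$ by the induction --- the contradiction. Without this iteration your argument establishes at most the $i=1$ case and does not close.
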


\begin{proof}
By Theorem~\ref{thm:strictwidth}, it is enough to prove that $R$ is not preserved by any oligopotent qnu-operation.
Suppose $\structB$ is preserved by a $k$-ary oligopotent quasi near-unanimity operation $f$. The essence of the proof of the proposition is in the following observation. 

Let $C$ be a binary relation. We will say that two $k$-tuples $t_1, t_2$ are $C$-connected on a coordinate $i \in [k]$ if $(t_1[i], t_2[i]) \in C$. 

\begin{observation}
\label{obs:TernaryCDSliding}
For all $i \in \{ 0, \ldots, k \}$ we have that both of the following hold.
\begin{enumerate}
\item \label{ternaryeq:A} For all $(w,u) \in (A^k)^2$ such that  $u$ is constant, $(w,u)$ are $C_1$-connected on $(k - i)$ coordinates and $C_2$-connected on $i$ coordinates, we have that $(f(w), f(u)) \in C_1$.
\item \label{ternaryeq:B} For all $(u,v) \in (A^k)^2$ such that $u$ is constant, 
$(u,v)$ are $D_1$-connected on $(k - i)$ coordinates and $D_2$-connected on $i$ coordinates, we have that $(f(u), f(v)) \in D_1$.
\end{enumerate} 
\end{observation}

\begin{proof}
The proof goes by the induction on $i$. In the base case where $i = 0$, the claim follows by the fact that $C_1, D_1$ are pp-definable in $\structB$. 

For the induction step, suppose first that Item~\ref{ternaryeq:B} fails for some $i>1$ and $J$ with $\left| J  \right| =i$, i.e., there exist    $(u,v) \in (A^k)^2$ such that $u$ is constant, 
$(u,v)$ are $D_1$-connected on $(k \setminus [J])$ and $D_2$-connected on 
$J$ and $(f(u), f(v)) \notin D_1$.
 Let $m \in J$. We set $J' = J \setminus \{ m \}$  
and  $I' = [k] \setminus J$. It follows that $[k]$ is a disjoint union of $I', \{ m \}$
and $J'$ and we have all of the following:
\begin{itemize}
\item $(u[j], v[j]) \in D_1$ for all $j \in I'$,
\item $(u[m], v[m]) \in D_2$, and
\item $(u[j], v[j]) \in D_2$ for all $j \in J'$
\end{itemize}

By Observation~\ref{obs:TernaryDTowardsC} there exists $w \in A^k$ such that all of the following hold:
\begin{itemize}
\item $(w[j], u[j], v[j]) \in R_1$ for all $j \in I'$,
\item $(w[m], u[m], v[m]) \in  \{ (x_1, x_2, x_3) \in A^3 \mid C_1(x_1, x_2)\wedge $
$D_2(x_2,x_3)$
$ \wedge x_1 \neq x_3 \}$, and 
\item $(w[j], u[j], v[j]) \in R_2$ for all $j \in J'$.
\end{itemize}

By the induction hypothesis, it holds that $(f(w),f(u)) \in C_1$. 
Now, since $\structA$ is liberal  and 
$C_1, D_1 \subseteq \neq$ 
there exist pairwise different 
$b_1, b_2, b_3$ such that $(b_1, b_3)$ are in the same orbital as $(w[m], v[m])$,
$(b_1, b_2) \in C_1$ and $(b_2, b_3) \in D_1$.
Since $\structA$ is homogeneous, there is an automorphism $\alpha \in \structA$ sending $(b_1, b_3)$ to $(w[m], v[m])$. Let $a$ be such that $a = \alpha(b_2)$ and
$u' \in A^k$ such that $u'[m] = a$ and $u'[j] = u[j]$ whenever $j \neq m$.
Observe that  $(w[m], a, v[m]) \in R_1$. 

Since for all $j \in [k]$ we have that $(w[j], u'[j], v[j]) \in R_1$
or $(w[j], u'[j], v[j]) \in R_2$, it follows that $(w[j], u'[j], v[j]) \in R$
for all $j \in [k]$. The relation $R$ is preserved by $f$. Thus, 
$(f(w), f(u'), f(v)) \in R$. The tuple $u$ is constant, $f$ is an oligopotent qnu-operation, and hence $f(u') = f(u)$.  It implies that $(f(w), f(u')) \in C_1$.
The relation $R$ entails $(C_1(x_1,x_2)$
$\implies D_1(x_2, x_3))$, and hence
$(f(u'), f(v)) \in D_1$. Since $f(u') = f(u)$, it follows that $(f(u), f(v)) \in D_1$.
It contradicts the assumpution that $(f(u), f(v)) \notin D_1$ and proves that Item~\ref{ternaryeq:B} holds for the induction step.

The proof that  Item~\ref{ternaryeq:A} goes through the induction step  is analogous to the proof for Item~\ref{ternaryeq:B} with a difference that we use Observation~\ref{obs:TernaryCTowardsD} instead of Observation~\ref{obs:TernaryDTowardsC}.
It completes the proof of the induction step and the observation. 
\end{proof}

Observation~\ref{obs:TernaryCDSliding} implies $f(D_2, \ldots, D_2) = D_1$.
It contradicts the fact that $D_2$ is pp-definable in $\structB$ and completes the proof of the proposition. 
\end{proof}

\section{Efficient  Entailment and Implications}
\label{sect:EffEntImp}

In this section,   we first provide the definition of an implication, which is needed to formally define implicationally simple and implicationally hard structures. Then,
in the following subsections, we prove certain preliminary results  which are then used  in the proof of the main theorem. In particular, in Section~\ref{sect:FromImpToCriticalTernary}, we show some auxiliary results that will be used to pp-define critical ternary relations out of a pair of complementary implications.
In Section~\ref{sect:SomeImp}, we show that some implications are not pp-definable  in first-order expansions of liberal finitely bounded homogeneous binary cores with bounded strict width, in Section~\ref{sect:CompImp}  how to compose implications 
and in Section~\ref{sect:Bipartite} that out of a pair of complementary implications we can always pp-define an implication of a very concrete form called a complete implication.

We start with a definition of \emph{efficient entailment} which is a non-standard version of entailment from Section~\ref{sect:preliminaries} and concerns ternary and quaternary relations. First, we look into the ternary ones. 

\begin{definition}
\label{def:EffEntTernary}
Let $R$ be a ternary relation and $C_1,D_1$ binary relations. We say that $R$ efficiently entails:
$$(C_1(x_i, x_j) \implies D_1(x_k, x_l))$$
with $i,j,k,l \in [3]$ if  both of the following hold:
\begin{itemize}
\item $R(x_1, x_2, x_3)$ entails $(C_1(x_i, x_j) \implies D_1(x_k, x_l))$,
\item $C_1 \subsetneq \proj_{i,j} R$ and $D_1 \subsetneq \proj_{k,l} R$.
\end{itemize}
\end{definition}

\noindent
We have a similar definition for quaternary relations.

\begin{definition}
\label{def:EffEntQuaternary}
Let $R$ be a quaternary relation and $C_1,D_1$ binary relations. We say that $R$ efficiently entails:
$$(C_1(x_i, x_j) \implies D_1(x_k, x_l))$$
with $i,j,k,l \in [4]$
if  both of the following hold:
\begin{itemize}
\item $R(x_1, x_2, x_3, x_4)$ entails $(C_1(x_i, x_j) \implies D_1(x_k, x_l))$,
\item $C_1 \subsetneq \proj_{i,j} R$ and $D_1 \subsetneq \proj_{k,l} R$.
\end{itemize}
\end{definition}

\noindent
Next we define a  ternary 'implication'.

\begin{definition}
\label{def:TernaryImp}
Let $L,P \in \{ \leftarrow, \rightarrow \}$. 
We say that a ternary relation $R$ is a ternary  $(\structB, C,D,C_1, D_1, L,P)$-implication if all of the following hold: 
\begin{enumerate}
\item \label{TernaryImp:EntEq} $R$ entails no equalities,
\item \label{TernaryImp:PPDefn} all $R, C_1, D_1$ are pp-definable in $\structB$, 
\item \label{TernaryImp:Proj12} $\proj_{i,j} R = C$ where $(i,j) = (1,2)$ if $L = \rightarrow$ and $(i,j) = (2,1)$ if $L = \leftarrow$,
\item \label{TernaryImp:Proj23} $\proj_{k,l} R = D$ where $(k,l) = (2,3)$ if $L = \rightarrow$ and $(k,l) = (3,2)$ if $L = \leftarrow$,
\item \label{TernaryImp:EffEnt} $R$ efficiently entails $(C_1(x_i, x_j) \implies D_1(x_k, x_l))$.
\end{enumerate} 
\end{definition}

We now provide a similar definition for quaternary relations.

\begin{definition}
Let $L,P \in \{ \leftarrow, \rightarrow \}$. 
A quaternary relation $R$ is a quaternary $(\structB, C,D,C_1, D_1, L,P)$-implication if all of the following hold:
\begin{enumerate}
\item \label{QuaternaryImp:EntEq} $R$ entails no equalities,
\item \label{QuaternaryImp:PPDefn} all $R, C_1, D_1$ are pp-definable in $\structB$,
\item \label{QuaternaryImp:Proj12} $\proj_{i,j} R = C$ where $(i,j) = (1,2)$ if $L = \rightarrow$ and $(i,j) = (2,1)$ if $L = \leftarrow$,
\item \label{QuaternaryImp:Proj34} $\proj_{k,l} R = C$ where $(k,l) = (3,4)$ if $L = \rightarrow$ and $(k,l) = (4,3)$ if $L = \leftarrow$,
\item \label{QuaternaryImp:EffEnt} $R$ efficiently entails $(C_1(x_i, x_j) \implies D_1(x_k, x_l))$.
\end{enumerate} 
\end{definition}

For the sake of succinctness we say that  $R$ is a $(\structB, C,D,C_1, D_1, $
$L, P)$-implication
without specifying that it is ternary or quaternary if not necessary, 
or that a relation $R$ is a (ternary or quaternary)  $(\structB, C,D,C_1, D_1)$-implication
if it is a $(\structB, C,D,C_1, D_1, L,$
$ P)$-implication for some $L,P \in \{  \leftarrow, \rightarrow \}$ or that $R$ is a  (ternary or quaternary) $(L,P)$-implication if it is a 
$(\structB, C,D,C_1, D_1, $
$L, P)$-implication for some $\structB, C,D,C_1, D_1$.

\paragraph{Example.} Let the relation $R$ be a critical ternary relation over $(\structB, C_1, C_2,$
$ D_1, D_2)$. Observe that the relation $R$ is a $(\structB, \proj_{1,2} R, \proj_{2,3} R, $
$C_1, D_1,$
$ \rightarrow, \rightarrow)$-implication and that $$R'(x_1, x_2, x_3)
\equiv R(x_3, x_2, x_1)$$ is
a $(\structB, \proj_{3,2} R, \proj_{2,1} R, D_1^{-1}, C_1^{-1}, \rightarrow, \rightarrow)$-implication.

\subsection{From Implications to Critical Ternary Relations} 
\label{sect:FromImpToCriticalTernary}

In order to define critical ternary relations, for instance, out of a pair (or a bunch) of implications we provide pp-definitions in many steps using often similar constructions. For the sake of succinctness, we will use certain shorthands.

\begin{definition}
\label{defn:bowtie}
We write $R_3 := R_1 \bowtie R_2$
for a ternary relation $R_3(x_1, x_2, x_3)$ defined out of two ternary relations $R_1$ and $R_2$ as follows
\begin{align}
\label{eq:BowtieTernary}
\exists y~R_1(x_1, x_2, y) \wedge R_2(y, x_2, x_3)
\end{align}
or a quaternary relation $R_3(x_1, x_2, x_3, x_4)$  defined out of two quaternary relations $R_1, R_2$ as follows:
\begin{align}
\label{eq:BowtieQuaternary}
\exists y_1 \exists y_2~R_1(x_1, x_2, y_1, y_2) \wedge R_2(y_2, y_1, x_3, x_4)
\end{align}
\end{definition}

We also write $R_1 \bowtie_3 R_2$ for a ternary relation $R_3(x_1, x_2, x_3)$ defined out of two quaternary relations 
$R_1, R_2$ as follows:
\begin{align}
\label{eq:BowtieThree}
\exists y_1 \exists y_2~R(x_1, x_2, y_1, y_2) \wedge R(y_2, y_1, x_2, x_3)
\end{align}

The following observation provides us with some insight into the structure of  $R_1 \bowtie R_2$ on the condition that we provide some assumptions on $R_1, R_2$.

\begin{observation}
\label{obs:AllInjective}
Let $R_1, R_2$ be both ternary or both quaternary relations fo-definable in a liberal finitely bounded homogeneous binary core $\structA$, the relation $R_3 := R_1 \bowtie R_2$ and $O_1, O_2, O_3$ some orbitals. Then all of the following hold:
\begin{itemize}
\item if $R_1$ contains an $O_1 O_2$-tuple $t_1$ and $R_2$ contains an  $O_2^{-1} O_3$-tuple $t_2$ then $R_3$ contains an $O_1 O_3$-tuple, 
\item if $t_1$ and $t_2$ are injective, then $R_3$ contains all injective $O_1 O_3$-tuples
\item if $O_1, O_2, O_3$ are $=$ and both $t_1$ and $t_2$ are non-constant, then $R_3$ contains all non-constant $==$-tuples.
\end{itemize}
\end{observation}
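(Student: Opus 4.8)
The plan is to exploit three standing facts about $\structA$. First, since $R_1,R_2$ are fo-definable in $\structA$ and $R_3:=R_1\bowtie R_2$ is obtained from them by an existentially quantified conjunction, all three relations are fo-definable in $\structA$, hence (as $\structA$ is $\omega$-categorical and homogeneous) preserved by $\Aut(\structA)$ and so each is a union of orbits. Second, because the signature of $\structA$ is binary, the orbit of a tuple is completely determined by the orbitals — equality included — of its coordinate pairs: two tuples with the same pattern of orbitals induce isomorphic substructures and are therefore conjugate by homogeneity. Third — this is where liberality enters — every complete binary-core structure on at most $6$ vertices embeds into $\structA$: by Observation~\ref{obs:BinaryCoreOrbitals} each orbital is the orbit of an actual pair, so no forbidden structure of size $\le 2$ can obstruct, none of size $3$--$6$ exists by liberality, and none of size $\ge 7$ fits into so few vertices. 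I will use these freely.

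For the first item I would simply glue $t_1$ and $t_2$ along the coordinates that $\bowtie$ identifies. Writing the ternary case (the quaternary one is identical after replacing the single overlap variable by the pair $y_1,y_2$), the pair $(t_1[3],t_1[2])$ lies in $O_2^{-1}$, as does $(t_2[1],t_2[2])$; by homogeneity there is $\alpha\in\Aut(\structA)$ with $\alpha(t_2[1],t_2[2])=(t_1[3],t_1[2])$, so $\alpha(t_2)\in R_2$ overlaps $t_1$ precisely on the identified coordinates. Taking the witness $y=t_1[3]$ then places $(t_1[1],t_1[2],\alpha(t_2[3]))$ in $R_3$, and reading off its two outer pairs, which lie in $O_1$ and in $O_3$, shows it is an $O_1O_3$-tuple. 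No liberality is needed here.

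For the second item, fix an arbitrary injective $O_1O_3$-tuple and call it $a$. I would assemble a single complete binary-core structure on the six vertices $a_1,a_2,a_3,a_4,y_1,y_2$ (four vertices $a_1,a_2,a_3,y$ in the ternary case): the pairs among the $a_i$ receive the orbitals prescribed by $a$; the pairs joining $\{y_1,y_2\}$ to $\{a_1,a_2\}$ receive the cross-orbitals of $t_1$ and those joining $\{y_1,y_2\}$ to $\{a_3,a_4\}$ the cross-orbitals of $t_2$; and $(y_1,y_2)$ receives $O_2$, consistently, since $t_1$ prescribes $O_2$ while $t_2$ prescribes $O_2^{-1}$ on the reversed pair. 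Because $t_1$, $t_2$ and $a$ are injective, every prescribed orbital is anti-reflexive, so the third fact realizes this structure inside $\structA$. In the realization the subtuple on $(a_1,a_2,y_1,y_2)$ is conjugate to $t_1$ and that on $(y_2,y_1,a_3,a_4)$ is conjugate to $t_2$, hence they lie in $R_1$ and $R_2$; thus the realized $(a_1,a_2,a_3,a_4)$ lies in $R_3$, and as $R_3$ is a union of orbits and this tuple is conjugate to $a$, the target $a$ lies in $R_3$ as well.

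The third item is the quaternary degeneracy of the second (in the ternary case an equality--equality tuple is constant, so the statement is vacuous). With $O_1=O_2=O_3$ equal to $=$ the non-constant tuples $t_1,t_2$ have the shape $(c,c,d,d)$ and $(e,e,g,g)$ with $c\neq d$, $e\neq g$, and the two overlap coordinates collapse to a single vertex $y$; realizing $(a,a,b,b)$ for arbitrary $a\neq b$ then reduces to extending the two-point structure $\{a,b\}$ by one vertex $y$ with $(a,y)$ in the orbital of $(c,d)$ and $(y,b)$ in the orbital of $(e,g)$, a size-$3$ configuration again supplied by liberality and homogeneity. I expect the main obstacle throughout to be purely organizational: matching the coordinates of $t_1$ and $t_2$ under the $\bowtie$-identification so that the prescribed orbitals on the overlap are mutually consistent, and checking that injectivity (respectively non-constancy) keeps all relevant points distinct, so that the size-$6$ (respectively size-$3$) realizability afforded by liberality genuinely applies.
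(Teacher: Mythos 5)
Your proposal is correct and follows essentially the same route as the paper: item 1 by gluing $t_1$ and $\alpha(t_2)$ along the overlap coordinates via homogeneity, and items 2 and 3 by using liberality (plus finite boundedness and the fact that orbit type is determined by the pairwise orbitals) to realize the full $4$-, $6$-, or $3$-point configuration with the free cross-orbitals chosen arbitrarily. The only difference is presentational — you isolate the ``every consistently-labelled complete binary-core structure on at most $6$ vertices embeds'' principle up front, whereas the paper invokes it inline for each case — but the underlying argument is identical.
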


\begin{proof}
Consider the first item in the formulation of the observation and the case where $R_1$ and $R_2$ are ternary.
Let $(a_1, a_2, a_3)$ be an $O_1 O_2$-tuple in $R_1$ and $(b_2, b_3, b_4)$ an $O_2^{-1} O_3$-tuple in $R_2$. Since $\structA$ is homogeneous, there exists
an automorphism $\alpha \in \Aut(\structA)$ sending   $(b_3, b_2)$ to $(a_2, a_3)$. Let $a_4$ be $\alpha(b_4)$. Observe that an assignment $\assignment: \{ x_1, x_2, x_3, y \} \to A$ such that $\assignment(x_1) = a_1$, $\assignment(x_2) = a_2$, $\assignment(y) = a_3$ and $\assignment(x_3) = a_4$.
Since $(a_3, a_2, a_4)$ is in the same orbit as $(b_2, b_3, b_4)$, we have that 
the assignment $\assignment$ satisfies all atomic formulae in~(\ref{eq:BowtieTernary}), we have that $R_3$ has an $O_1 O_3$-tuple. The proof for quaternary $R_1, R_2$ is similar with a difference that  we consider an $O_1 O_2$-tuple
$(a_1, a_2, a_3, a_4)$  in $R_1$, $O_2^{-1} O_3$-tuple $(b_3, b_4, b_5, b_6)$  in $R_2$, and an automorphism $\alpha \in \Aut(\structA)$ sending $(b_4, b_3)$ to $(a_3, a_4)$.

Now turn to the second item for ternary relation $R_1$ which contains an $O_1 O_2$-tuple $t_1$ and $R_2$ with an $O_2^{-1} O_3$-tuple $t_2$. Since $\structA$ is
liberal we have that for any orbital $O_{1,4}$ there exists a substructure of $\structA$ over four pairwise different elements $a_1, a_2, a_3, a_4$ such that $(a_1, a_2, a_3)$ is in the same orbit as $t_1$, $(a_3, a_2, a_4)$ in the same orbit as $t_2$ and $(a_1, a_4) \in O_{1,4}$. It follows that $R_3$ has an $O_1 O_3$-tuple $t$ such that $(t[1], t[3]) \in O_{1,4}$. Since $O_{1,4}$ was chosen arbitrarily, we have that $R_3$ contains all injective tuples. The proof for quaternary relations is again similar. We just look at a substructure of $\structA$
over six elements $a_1, a_2, a_3, a_4, a_5, a_6$ such that $(a_1, a_2, a_5, a_6)$  
is in the same orbit as $t_1$, the tuple $(a_6, a_5, a_3, a_4)$ is in the same orbit as $t_2$ and such that for all $(i,j) \in \{(1,3), (1,4), (2,3), (2,4) \}$ we have
$(a_i, a_j) \in O_{i,j}$ for some anti-reflexive orbital $O_{i,j}$. An assignment $\assignment: \{ x_1, \ldots, x_4, y_1, y_2 \} \to A$ such that $\assignment(x_i) = a_i$ for $i \in [4]$ and $\assignment(y_i) = a_{i+4}$ for $i \in [2]$ clearly satisfies both atomic formulae in~(\ref{eq:BowtieQuaternary}). Since $O_{i,j}$ with  $(i,j) \in \{(1,3), (1,4), (2,3), (2,4) \}$ were chosen arbitrarily, it follows that $R_3$ contains all injective $O_1 O_3$-tuples. It completes the proof of the second item.
  
 Since there are no ternary non-constant $==$-tuples, the last item concerns quaternary relations only. We proceed as in the proof for the second item. 
Let $t_1$ be a non-constant tuple in $R_1$ and $t_2$ a non-constant tuple in $R_2$. This time we look at a substructure of $\structA$ induced by pairwise different $a_1, a_2, a_3$ such that $(a_1, a_2)$ is in the same orbital as $(t_1[2], t_1[3])$,   $(a_2, a_3)$ in the same orbit as $(t_2[2], t_2[3])$, and $(a_1, a_3)$ in some anti-reflexive  orbital $O_{1,3}$. The assignment $\assignment$ sending $x_1, x_2$ to $a_1$, $y_1, y_2$ to $a_2$ and $x_3, x_4$ to $a_3$ satisfies both atomic formulae in~(\ref{eq:BowtieQuaternary}) and hence provides an $==$-tuple $t$ for $R$ satisfying $(t[2], t[3]) \in O_{1,3}$. Since $O_{1,3}$ was chosen arbitrarily, we have that $R_3$ contains all non-constant $==$-tuples. It completes the proof for the last item in the observation.
\end{proof}

An important part of the proof that a pp-defined ternary relation is a critical ternary relation is to show that it contains relations $(C_1(x_1, x_2) \wedge D_1(x_2, x_3))$ and 
$(C_2(x_1, x_2) \wedge D_2(x_2, x_3))$, see Definition~\ref{def:CriticalTernary}.
To this end we will use the following observation.

\begin{observation}
\label{obs:ContainsFullBowtie}
Let $R_1, R_2$ be two ternary relations with fo-definitions in a liberal finitely bounded homogeneous binary core $\structA$ such that $R_1$ contains
all injective $O_1 O_2$-tuples and $R_2$ contains all injective $O_2^{-1} O_3$-tuples for some anti-reflexive orbitals $O_1, O_2, O_3$ then $R_1 \bowtie R_2$ contains  the relation
$$O_1(x_1, x_2) \wedge O_3(x_2, x_3).$$
\end{observation}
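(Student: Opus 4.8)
The plan is to show that every tuple $(a_1, a_2, a_3)$ satisfying $O_1(a_1, a_2) \wedge O_3(a_2, a_3)$ lies in $R_3 := R_1 \bowtie R_2$. Unfolding Definition~\ref{defn:bowtie}, this amounts to producing a witness $y \in A$ with $(a_1, a_2, y) \in R_1$ and $(y, a_2, a_3) \in R_2$. Since by hypothesis $R_1$ contains \emph{all} injective $O_1 O_2$-tuples and $R_2$ contains \emph{all} injective $O_2^{-1} O_3$-tuples, it suffices to find a $y$ for which $(a_1, a_2, y)$ is an injective $O_1 O_2$-tuple and $(y, a_2, a_3)$ is an injective $O_2^{-1} O_3$-tuple; concretely, $y$ must satisfy $(a_2, y) \in O_2$ together with $y \notin \{ a_1, a_3 \}$ (anti-reflexivity of $O_2$ already forcing $y \neq a_2$). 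Observe that the tuples of the relation $O_1(x_1, x_2) \wedge O_3(x_2, x_3)$ are exactly the $O_1 O_3$-tuples, and since $O_1, O_3$ are anti-reflexive (so $a_1 \neq a_2$ and $a_2 \neq a_3$) each such tuple is either injective or satisfies $a_1 = a_3$.

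For the injective $O_1 O_3$-tuples I would simply invoke Observation~\ref{obs:AllInjective}. Its hypotheses are met: liberality guarantees that $R_1$ contains at least one injective $O_1 O_2$-tuple and $R_2$ at least one injective $O_2^{-1} O_3$-tuple. Indeed, building the three-element structure carrying $O_1, O_2$ on the consecutive pairs and an arbitrary anti-reflexive orbital on the remaining pair, and embedding it into $\structA$ (no structure of size $3$ is forbidden), produces such injective tuples, which then belong to $R_1$, $R_2$ by assumption. Hence the second item of Observation~\ref{obs:AllInjective} already yields that $R_3$ contains all injective $O_1 O_3$-tuples.

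It remains to treat the non-injective tuples, namely those of the form $(a_1, a_2, a_1)$ with $(a_1, a_2) \in O_1$ and $(a_2, a_1) \in O_3$, for which I would construct the witness $y$ by hand. Fix an arbitrary anti-reflexive orbital $O'$ and consider the three-element structure realizing $(c_1, c_2) \in O_1$, $(c_2, c_3) \in O_2$ and $(c_1, c_3) \in O'$; by liberality it embeds into $\structA$, giving $b_1, b_2, b_3 \in A$ with these orbitals. Since $(b_1, b_2)$ and $(a_1, a_2)$ both lie in the orbital $O_1$, homogeneity supplies $\alpha \in \Aut(\structA)$ with $\alpha(b_1) = a_1$ and $\alpha(b_2) = a_2$; set $y := \alpha(b_3)$. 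Then $(a_2, y) \in O_2$ and $(a_1, y) \in O'$, the latter forcing $y \neq a_1 = a_3$, while $O_2$ forces $y \neq a_2$. Thus $(a_1, a_2, y)$ and $(y, a_2, a_1)$ are injective $O_1 O_2$- and $O_2^{-1} O_3$-tuples respectively, so they belong to $R_1$ and $R_2$, witnessing $(a_1, a_2, a_1) \in R_3$.

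The one genuinely delicate point, and the step I would flag as the main obstacle, is precisely that the target relation $O_1(x_1, x_2) \wedge O_3(x_2, x_3)$ is \emph{not} contained in the set of injective tuples, so Observation~\ref{obs:AllInjective} does not finish the job on its own. The extra case $a_1 = a_3$ must be handled separately, and there the construction of $y$ must explicitly keep $y$ away from $a_1 = a_3$; this is exactly what realizing the auxiliary three-element pattern via liberality and then transporting it by homogeneity buys us.
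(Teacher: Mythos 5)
Your proposal is correct and follows essentially the same route as the paper: the injective $O_1O_3$-tuples are obtained from Observation~\ref{obs:AllInjective} (whose hypotheses you rightly verify via liberality), and the only non-injective tuples $(a_1,a_2,a_1)$, which arise precisely when $O_3=O_1^{-1}$, are handled by realizing a three-element configuration through liberality and transporting it by homogeneity. The paper exhibits just one such degenerate tuple and implicitly uses that $R_1\bowtie R_2$ is a union of orbits, whereas you build the witness for an arbitrary such tuple directly; this is an inessential variation.
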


\begin{proof}
If $O_3$ is different from $O_1^{-1}$, then the observation follows by Observation~\ref{obs:AllInjective}. Indeed, in this case it is enough to prove
that $R_3$ contains all injective $O_1 O_3$-tuples. 
Otherwise, we repeat the proof of Observation~\ref{obs:AllInjective} to show that  $R$ has all injective $O_1 O_1^{-1}$-tuples but we  have also to show that $R_3$ contains an $O_1 O_1^{-1}$ tuple $t$ with $t[1] = t[3]$.  To this end, consider any pairwise different $a_1, a_2, a_3 \in A$ such that 
$(a_1, a_2) \in O_1, (a_2, a_3) \in O_2$ and $(a_1, a_3) \in O_{1,3}$ for some anti-reflexive orbital $O_{1,3}$.  They exist since $\structA$ is liberal.
The relation  $R_1$ contains all injective 
$O_1 O_2$-tuples and $R_2$ contains all injective $O_2^{-1} O_1^{-1}$-tuples,
hence in particular $R_1$ contains $(a_1, a_2, a_3)$  and $R_2$  contains $(a_3, a_2, a_1)$. It follows that
 the assignment $\assignment: \{ x_1, x_2, x_3, y \}$ sending $x_1, x_3$ to $a_1$,  
$x_2$ to $a_2$ and $y$ to $a_3$ satisfies both atomic formulae in~(\ref{eq:BowtieTernary}) and provides an $O_1 O_1^{-1}$-tuple with $t[1] = t[3]$ in $R_3$.  
\end{proof}

The next observation will be used in a similar context as Observation~\ref{obs:ContainsFullBowtie}. The difference is that here we will look at quaternary not ternary relations $R_1, R_2$ and at $\bowtie_3$ not $\bowtie$.

\begin{observation}
\label{obs:ContainsFullBowtieThree}
Let $R_1, R_2$ be two quaternary relations with fo-definitions in a liberal finitely bounded homogeneous binary core $\structA$ such that $R_1$ contains
all injective $O_1 O_2$-tuples and $R_2$ contains all injective $O_2^{-1} O_3$-tuples or all $O_1, O_2, O_3$ are $=$ and both $R_1, R_2$ contain all non-constant $==$-tuples
then $R_1 \bowtie_3 R_2$ contains:
$$O_1(x_1, x_2) \wedge O_3(x_2, x_3).$$
\end{observation}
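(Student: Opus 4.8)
The plan is to unfold the definition of $R_3 := R_1 \bowtie_3 R_2$ in~(\ref{eq:BowtieThree}) and, for every target triple, to exhibit explicit witnesses for the two existentially quantified variables. A triple $(a_1,a_2,a_3)$ lies in $R_3$ exactly when there are $b_1,b_2 \in A$ with $R_1(a_1,a_2,b_1,b_2)$ and $R_2(b_2,b_1,a_2,a_3)$. Since $R_3$ is first-order definable in $\structA$ it is a union of orbits of $\Aut(\structA)$, so it suffices to produce one member of $R_3$ for each orbit of triples in $O_1(x_1,x_2)\wedge O_3(x_2,x_3)$; such an orbit is pinned down by the orbital of the pair $(a_1,a_3)$. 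This follows the pattern of Observation~\ref{obs:ContainsFullBowtie}, but I stress that one cannot simply invoke Observation~\ref{obs:AllInjective}: the defining formula of $\bowtie_3$ forces $x_2$ to occur in both atoms, so the intermediate configuration is inherently non-injective, which is exactly why the non-injective cases have to be argued by hand.

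I would first settle the second disjunct, where $O_1,O_2,O_3$ are all $=$. Here $O_1(x_1,x_2)\wedge O_3(x_2,x_3)$ is the set of constant triples $(a,a,a)$. Picking any $b\neq a$, the tuples $(a,a,b,b)$ and $(b,b,a,a)$ are non-constant $==$-tuples and hence lie in $R_1$ and $R_2$ respectively; the choice $b_1=b_2=b$ then witnesses $(a,a,a)\in R_3$.

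For the first disjunct $O_1,O_2,O_3$ are anti-reflexive and $R_1,R_2$ contain all injective $O_1 O_2$- and $O_2^{-1} O_3$-tuples respectively. I would first show $R_3$ contains every injective target triple. Fixing an arbitrary anti-reflexive orbital $O_{1,3}$, liberality (no forbidden structure of size at most $6$) yields pairwise distinct $a_1,a_2,a_3,b_1,b_2\in A$ with $(a_1,a_2)\in O_1$, $(a_2,a_3)\in O_3$, $(a_1,a_3)\in O_{1,3}$ and $(b_1,b_2)\in O_2$. Then $(a_1,a_2,b_1,b_2)$ is an injective $O_1 O_2$-tuple, so it lies in $R_1$, and $(b_2,b_1,a_2,a_3)$ is an injective $O_2^{-1} O_3$-tuple, so it lies in $R_2$; hence $(a_1,a_2,a_3)\in R_3$ with $(a_1,a_3)\in O_{1,3}$. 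As $O_{1,3}$ ranged over all anti-reflexive orbitals, every injective target triple lies in $R_3$. If $O_3\neq O_1^{-1}$ every target triple is injective and we are finished. If $O_3=O_1^{-1}$, the remaining target triples have the shape $(a_1,a_2,a_1)$ with $(a_1,a_2)\in O_1$; for these I would use liberality to obtain pairwise distinct $a_1,a_2,b_1,b_2$ with $(a_1,a_2)\in O_1$ and $(b_1,b_2)\in O_2$, note $(a_2,a_1)\in O_1^{-1}=O_3$ automatically, and conclude $(a_1,a_2,a_1)\in R_3$ through the same witnesses $b_1,b_2$.

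I expect the only genuine obstacle to be the coordinate identification imposed by $\bowtie_3$: since $x_2$ is shared across both quaternary atoms, the supporting configuration is never a fully injective six-element picture, and the number of distinct values among $a_1,a_2,a_3$ collapses (from three to two) precisely when $O_3=O_1^{-1}$. The point that makes every case go through is that the witness pair $(b_1,b_2)\in O_2$ can always be chosen disjoint from the at most three distinct values among $a_1,a_2,a_3$, which liberality guarantees because the entire configuration uses at most five elements.
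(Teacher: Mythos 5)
Your proof is correct and follows essentially the same route as the paper: realize the whole five-element configuration at once by liberality, read off an injective $O_1 O_2$-tuple for $R_1$ and an injective $O_2^{-1} O_3$-tuple for $R_2$, let $O_{1,3}$ range over all anti-reflexive orbitals, and treat the degenerate triple with $t[1]=t[3]$ separately when $O_3 = O_1^{-1}$. You in fact go slightly further than the paper's written proof, which silently omits the second disjunct where $O_1, O_2, O_3$ are all $=$; your explicit witnesses $(a,a,b,b)$ and $(b,b,a,a)$ settle that case correctly.
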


\begin{proof}
Since $\structA$ is liberal, we have that for anti-reflexive orbitals $O_{1,3}$ it contains a substructure over pairwise different elements 
$a_1, a_2, a_3, a_4, a_5$ such that $(a_1, a_2) \in O_1$, $(a_4, a_5) \in O_2$, $(a_2, a_3) \in O_3$ and $(a_1, a_3) \in O_{1,3}$.
Since $R_1$ contains all injective $O_1, O_2$-tuples  and $R_2$ contains all
injective $O_2^{-1} O_3$-tuples, we have that an assignment $\assignment: \{ x_1, x_2, x_3, y_1, y_2 \} \rightarrow A$ such that $\assignment(x_i) = a_i$ for $i \in [3]$ and $\assignment(y_i) = a_{i+3}$ for $i \in [2]$ satisfies both atomic formulae in~(\ref{eq:BowtieThree}). Since $O_{1,3}$ may be arbitrary, it proves that $R_3$ has all injective $O_1 O_3$-tuples. If $O_3 = O_1^{-1}$, then we also have to show that $R_3$ has an $O_1 O_1^{-1}$-tuple $t$ with $t[1] = t[3]$. The proof is similar with a difference that 
we choose $a_1, a_2, a_3, a_4, a_5$ with $a_1 = a_3$.
\end{proof}

\subsection{Some Implications with no Bounded Strict Width}
\label{sect:SomeImp}

We are now ready to prove that certain ternary and quaternary relations pp-define a critical ternary relation and hence do not have bounded strict width.

\begin{lemma}
\label{lem:EqOrEqOrEqnoQNUF}
Let $\structA$ be a liberal finitely bounded homogeneous 
binary core  and $\structB$ 
a first-order expansion of $\structA$ which
pp-defines a ternary relation  $R$ that entails no equalities but   
entails 
$$\varphi(x_1, x_2, x_3) \equiv (x_1 = x_2 \vee x_2 = x_3 \vee x_1 = x_3).$$ Then $R$, and hence $\structB$ do not have bounded strict width.
\end{lemma}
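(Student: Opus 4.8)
The plan is to pp-define a critical ternary relation in the sense of Definition~\ref{def:CriticalTernary} out of $R$ and then invoke Proposition~\ref{prop:CriticalTernaryNoQNUF}; throughout I use that $\neq$ is pp-definable (Observation~\ref{obs:NeqCanBeDefined}). First I would read off the orbit structure of $R$. Since $R$ entails $\varphi$ but no equalities, every tuple of $R$ has at least two equal entries, so by quantifier elimination $R$ is a union of orbits of four shapes: the constant tuples $(a,a,a)$; ``type $12$'' tuples $(a,a,c)$ with $a\neq c$; ``type $23$'' tuples $(a,b,b)$ with $a\neq b$; and ``type $13$'' tuples $(a,b,a)$ with $a\neq b$, where the last three are indexed by an anti-reflexive orbital. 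I would then observe that at least two of the three non-constant shapes must occur: if every occurring tuple belonged to a single one of them (together with constant tuples), then $R$ would entail one of $x_1=x_2$, $x_2=x_3$, $x_1=x_3$, contradicting the hypothesis.

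Second, I would reduce to the case that both type $12$ and type $23$ occur. A permutation of coordinates is a pp-definition and preserves both ``entails $\varphi$'' and ``entails no equalities''; the transposition $(1\,2)$ turns type $13$ into type $23$ while fixing type $12$, and $(2\,3)$ turns type $13$ into type $12$. Hence, after replacing $R$ by a suitable coordinate-permuted copy $\tilde R$, I may assume $\tilde R$ has both a type $12$ orbit and a type $23$ orbit. Conjoining the pp-definable $x_1\neq x_3$ gives $\tilde R_0:=\tilde R\wedge(x_1\neq x_3)$, which discards the constant and all type $13$ orbits, so that $\tilde R_0$ is exactly the union of the type $12$ orbitals $Q_1,Q_2,\dots$ and the type $23$ orbitals $P_1,P_2,\dots$ of $\tilde R$, with at least one orbital of each kind.

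Third, I would form $\tilde R_0'(x_1,x_2,x_3):=\tilde R_0(x_3,x_2,x_1)$ and set $R^{\ast}:=\tilde R_0\bowtie\tilde R_0'$ (Definition~\ref{defn:bowtie}); both are pp-definable. A direct inspection of $\exists y\,\tilde R_0(x_1,x_2,y)\wedge\tilde R_0'(y,x_2,x_3)$, matching the shape of the first conjunct against that of the second (in the spirit of Observation~\ref{obs:AllInjective}), shows that only two combinations are consistent: the type $12$/type $23$ combination forces $y\neq x_2$ and collapses the output to the constant tuples, while the type $23$/type $12$ combination forces $y=x_2$ and yields precisely the tuples with $(x_1,x_2)\in P_i$ and $(x_2,x_3)\in P_{i'}^{-1}$. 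Writing $P:=\bigcup_i P_i$, this gives
$$R^{\ast}\;=\;\{(a,a,a):a\in A\}\ \cup\ \{(x_1,x_2,x_3):(x_1,x_2)\in P,\ (x_2,x_3)\in P^{-1}\}.$$
I then claim $R^{\ast}$ is a critical ternary relation over $(\structB,P,{=},P^{-1},{=})$: $P,P^{-1}$ are anti-reflexive and disjoint from ${=}$; $R^{\ast}$ contains both $P(x_1,x_2)\wedge P^{-1}(x_2,x_3)$ and the constant tuples; and since these are the only tuples of $R^{\ast}$, it entails both $P(x_1,x_2)\Rightarrow P^{-1}(x_2,x_3)$ and $P^{-1}(x_2,x_3)\Rightarrow P(x_1,x_2)$, its projections to $(1,2)$ and $(2,3)$ containing $P,{=}$ and $P^{-1},{=}$ respectively. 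As $R^{\ast}$ is pp-definable from $R$ and $\neq$, Proposition~\ref{prop:CriticalTernaryNoQNUF} then shows that neither $R$ nor $\structB$ has bounded strict width.

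The hard part, and the reason the composition $\bowtie$ is indispensable, is that a critical ternary relation is intrinsically a \emph{union} of two disjoint pieces (here the diagonal and a path through an anti-reflexive orbital), whereas $R\subseteq\varphi$ contains no injective tuples at all and plain conjunctions only ever produce a single path. The relation $\tilde R_0\bowtie\tilde R_0'$ is what simultaneously manufactures the injective tuples of $P\wedge P^{-1}$ (escaping $\varphi$) and the constant tuples, with the existential quantifier over $y$ selecting between the two branches. The delicate step is checking that no further, criticality-breaking tuples survive the composition; this is exactly where one uses that $\tilde R_0$ has no type $13$ or constant orbits (ensured by the conjunction with $x_1\neq x_3$), which is what rules out all the remaining shape combinations.
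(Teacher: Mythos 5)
Your proposal is correct and follows essentially the same route as the paper's own proof: after normalizing by a coordinate permutation and conjoining the pp-definable $x_1\neq x_3$ (the paper's $R'$ is exactly your $\tilde R_0$), your $\tilde R_0\bowtie\tilde R_0'$ unfolds to the paper's $\exists y\,(R'(x_1,x_2,y)\wedge R'(x_3,x_2,y))$, yielding the same critical ternary relation over $(\structB,C,{=},C^{-1},{=})$ and the same appeal to Proposition~\ref{prop:CriticalTernaryNoQNUF}. The only cosmetic difference is that you organize the case analysis by orbit shapes rather than by which subformula of $\varphi$ is entailed.
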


\begin{proof}
Since $R$ entails no equalities, we have that  if $R$ entails
any subformula of $\varphi$, then this subformula has at least two disjuncts.
Assume without loss of generality that in this case $R$ entails 
$(x_1 = x_2 \vee x_2 = x_3)$. Since $R$ entails no equalities we have that 
$(R'(x_1, x_2, x_3) \equiv R(x_1, x_2, x_3) \wedge x_1 \neq x_3)$ is equivalent to
$((C(x_1, x_2) \wedge x_2 = x_3) \vee (x_1 = x_2 \wedge D(x_2, x_3)))$
for some anti-reflexive  $C, D \subseteq A^2$. By Observation~\ref{obs:NeqCanBeDefined}, the relation $R'$ is pp-definable in $\structB$.

Further, since $\structA$ is liberal, it is easy to see that the relation
\begin{align}
\label{eq:Rdprime}
(R''(x_1, x_2, x_3) \equiv \exists y~R'(x_1, x_2,y) \wedge R'(x_3, x_2, y))
\end{align}
equals $(S(x_1, x_2, x_3) \equiv (C_1(x_1, x_2) \wedge C^{-1}(x_2, x_3)) \vee (x_1 = x_2 \wedge x_2 = x_3)).$ Indeed, any assignment $\assignment: \{ x_1, x_2, x_3,y \} \to A$
satisfying both atomic formulae in~(\ref{eq:Rdprime}) either sends $(x_1, x_2)$ to some pair $(a_1, a_2)$ in $C_1$
or to the same element in $A$. In the former case, we have that $\assignment(x_2) = \assignment(y)$, and hence $(\assignment(x_2), \assignment(x_3)) = (a_2, a_3)$ for some $(a_2, a_3) \in C^{-1}$
Since $\structA$ is liberal, we may choose $(a_1, a_2, a_3)$ so that $(a_1, a_3)$
is in any orbital $O_{1,3}$ if $(a_1, a_2) \in O_{1,2}$ and $(a_2, a_3)$  in $O_{2,3} = O_{1,2}^{-1}$ and so that $O_{1,3}$ is any anti-reflexive orbital if $O_{2,3} \neq O_{1,2}^{-1}$. It follows that $R''$ contains  $(R_1(x_1, x_2, x_3) \equiv (C_1(x_1, x_2) \wedge C^{-1}(x_2, x_3)))$ and that any tuple in $R''$ 
with the two first coordinates being different is in $R_1$. On the other hand, 
any assignment $\assignment: \{ x_1, x_2, x_3,y \} \to A$
satisfying both atomic formulae in~(\ref{eq:Rdprime}) 
that sends $(x_1, x_2)$ to  the same element in $A$, satisfies also 
$(\assignment(x_2), \assignment(y)) \in D$ and hence $(\assignment(x_2) = \assignment(x_3))$. It follows that $R''$ equals $S$.

Clearly the relation $S$ efficiently entails both: $(C(x_1, x_2) \implies C^{-1}(x_2, x_3))$ and
$(C^{-1}(x_2, x_3) \implies C(x_1, x_2))$ and clearly $C$ and $=$ are pp-definable in 
$\structB$.
Hence $R$ is a critical ternary relation over $(\structB, C, =, C^{-1}, =)$.  It follows by Proposition~\ref{prop:CriticalTernaryNoQNUF} that  $\mathbb{B}$ do not have bounded strict width.

Assume now that $R$ entails no subformulae of $\varphi$. Since $R$, however, entails $\varphi$, we have that also in this case $(R'(x_1, x_2, x_3) \equiv R(x_1, x_2, x_3) \wedge x_1 \neq x_3)$ is equivalent to
$((C(x_1, x_2) \wedge x_2 = x_3) \vee (x_1 = x_2 \wedge D(x_2, x_3)))$
for some anti-reflexive  $C, D \subseteq A^2$. The proof is hence identical as in the previous case. It completes the proof of the lemma. 
\end{proof}

As we will see $(\structB, C,D, O,  =)$-implications where $\structB$ is a first-order expansion  of a liberal finitely bounded homogeneous binary core  and $O$ an anti-reflexive orbital  do not have bounded strict width. We start with ternary implications.

\begin{lemma}
\label{lem:TernaryOImpliesEq}
Let $\structA$ be a liberal finitely bounded homogeneous 
binary core  and $\structB$ a first-order expansion of $\structA$, let $R$ be a ternary relation pp-definable in $\structB$ that entails no equalities and efficiently entails 
$$(O(x_1, x_2) \implies x_2 = x_3)$$ for some anti-reflexive orbital $O$.
Then $\structB$ does not have bounded strict-width.
\end{lemma}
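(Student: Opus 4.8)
The plan is to reduce to Proposition~\ref{prop:CriticalTernaryNoQNUF} by pp-defining a critical ternary relation in $\structB$ (or, should a clean construction resist, to Lemma~\ref{lem:EqOrEqOrEqnoQNUF}). The first thing I would record is a pp-definability toolbox. By Observation~\ref{obs:BinaryCoreOrbitals} the binary orbitals of $\structA$ are exactly $=$ and the basic relations $R_1,\dots,R_\kappa$; hence every anti-reflexive orbital, in particular $O$ and its converse $O^{-1}$, is an atomic relation of $\structB$ and therefore pp-definable, and so is $=$. Together with $\neq$, which is pp-definable by Observation~\ref{obs:NeqCanBeDefined}, and the fact that conjunctions and projections of pp-definable relations are pp-definable, this lets me freely form and restrict the auxiliary relations below.

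Next I would split $R$ along the target equality. Using $=$ and $\neq$ set $R_{=}(x_1,x_2,x_3)\equiv R(x_1,x_2,x_3)\wedge x_2=x_3$ and $R_{\neq}(x_1,x_2,x_3)\equiv R(x_1,x_2,x_3)\wedge x_2\neq x_3$, both pp-definable. Efficient entailment gives $O\subseteq \proj_{1,2}R_{=}$, while $R\vDash(O(x_1,x_2)\to x_2=x_3)$ forces $\proj_{1,2}R_{\neq}$ to be disjoint from $O$; since $R$ entails no equalities, $R_{\neq}$ is non-empty, so $\proj_{1,2}R_{\neq}$ contains some orbital $P\neq O$ lying inside $\proj_{1,2}R$. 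This is exactly the situation $O\subsetneq\proj_{1,2}R$ needed to leave room for a second branch. I would then form the self-composition $T:=R\bowtie R^{\mathrm{op}}$, that is $T(x_1,x_2,x_3)\equiv \exists y\,R(x_1,x_2,y)\wedge R(x_3,x_2,y)$ where $R^{\mathrm{op}}(x_1,x_2,x_3)\equiv R(x_3,x_2,x_1)$. The key mechanism is that an $O$-pair on $(x_1,x_2)$ collapses the witness $y$ onto $x_2$ (by the entailment $O\to x_2=x_3$), which propagates the constraint to $(x_3,x_2)$ and produces a first branch of the form $O(x_1,x_2)\wedge D_1(x_2,x_3)$, whereas the non-collapsing pairs contribute a disjoint second branch $C_2(x_1,x_2)\wedge D_2(x_2,x_3)$. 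Liberality and homogeneity, in the style of Observation~\ref{obs:ContainsFullBowtie}, should guarantee that $T$ actually realizes all the injective tuples these branches require.

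Finally I would read off the data of a critical ternary relation over $(\structB,C_1,C_2,D_1,D_2)$ with $C_1=O$ and $C_2=P$, checking each clause of Definition~\ref{def:CriticalTernary}: $C_1,C_2$ disjoint and contained in $\proj_{1,2}T$; $D_1,D_2$ disjoint and contained in $\proj_{2,3}T$; $C_1,D_1$ anti-reflexive and $C_2,D_2$ both anti-reflexive; the two-way entailment $C_1(x_1,x_2)\Leftrightarrow D_1(x_2,x_3)$ under $T$; and that $T$ contains both $C_1\wedge D_1$ and $C_2\wedge D_2$. Proposition~\ref{prop:CriticalTernaryNoQNUF} then yields that $\structB$ has no bounded strict width. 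I expect the main obstacle to be precisely the reverse entailment $D_1(x_2,x_3)\Rightarrow C_1(x_1,x_2)$ together with keeping the projections proper: the collapse of an $O$-pair lands in the possibly larger set $\bigl(\proj_{1,2}R_{=}\bigr)^{-1}$ rather than in a single orbital $O^{-1}$, so $T$ need not link $O$ to a unique anti-reflexive $D_1$. To control this I would intersect the legs with the pp-definable orbitals $O$ and $O^{-1}$ to pin the image to one orbital, and, whenever this over-restricts and erases the second branch, fall back on a case analysis analogous to the two cases in the proof of Lemma~\ref{lem:EqOrEqOrEqnoQNUF}, deriving instead a ternary relation that entails no equalities but entails $x_1=x_2\vee x_2=x_3\vee x_1=x_3$ and invoking that lemma.
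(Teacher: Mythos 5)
Your high-level strategy (pp-define a critical ternary relation and invoke Proposition~\ref{prop:CriticalTernaryNoQNUF}) is the paper's strategy, and your extraction of a second orbital $P\neq O$ inside $\proj_{1,2}R$ is sound. But the central construction $T:=\exists y\,R(x_1,x_2,y)\wedge R(x_3,x_2,y)$ has a gap that your proposed repairs do not close. When $(x_1,x_2)\in O$ the witness collapses to $y=x_2$ and all you learn is $(x_3,x_2)\in\proj_{1,2}(R\wedge x_2{=}x_3)$, a set that contains $O$ but may contain further orbitals (and even $=$, if $R$ has a constant tuple). So the forward implication does not land in a single anti-reflexive $D_1$, and — by the $x_1\leftrightarrow x_3$ symmetry of $T$ — the reverse implication only yields $(x_1,x_2)\in\proj_{1,2}(R\wedge x_2{=}x_3)$ rather than $(x_1,x_2)\in O$. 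You cannot set $C_1$ to the larger set either, since only $O$-pairs force the collapse, so the forward entailment would then fail: the two requirements pull $C_1$ in opposite directions. Your first fix, intersecting the legs with $O$ and $O^{-1}$, destroys the disjoint second branch $C_2\wedge D_2$ that Definition~\ref{def:CriticalTernary} requires (if $\proj_{2,3}$ is forced into $O^{-1}$ then $D_2$ must be empty), and your fallback to Lemma~\ref{lem:EqOrEqOrEqnoQNUF} is unsupported: nothing here makes the resulting relation entail $x_1=x_2\vee x_2=x_3\vee x_1=x_3$.

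The paper sidesteps all of this with a different composition: $R'(x_1,x_2,x_3)\equiv\exists y\,R(x_1,x_2,y)\wedge O(x_3,y)$. Attaching the atomic orbital $O$ (rather than a reversed copy of $R$) to the witness converts the conclusion $x_2=x_3$ into exactly $O^{-1}(x_2,x_3)$, while liberality still supplies an injective $PP^{-1}$-tuple for the second branch. Two further steps you omit are then needed: the double composition $R_3:=(R'\bowtie R')\bowtie(R'\bowtie R')$, so that Observations~\ref{obs:AllInjective} and~\ref{obs:ContainsFullBowtie} upgrade single tuples first to all injective tuples and then to the full products $O(x_1,x_2)\wedge O^{-1}(x_2,x_3)$ and $P(x_1,x_2)\wedge P^{-1}(x_2,x_3)$ demanded by the last clause of Definition~\ref{def:CriticalTernary}; and the intersection $R_3(x_1,x_2,x_3)\wedge R_3(x_3,x_2,x_1)$, which supplies the reverse entailment $O^{-1}(x_2,x_3)\implies O(x_1,x_2)$ without enlarging $C_1$. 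I would rework your argument around this composition.
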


\begin{proof}
Since $R$ entails no equalities, by Lemma~\ref{lem:EqOrEqOrEqnoQNUF},
we have that $R$ contains an injective tuple $t$ such that in particular 
$(t[1], t[2]) \in P$ for some anti-reflexive orbital $P$ different from $O$. 
Consider now the relation:
\begin{align}
\label{TernaryOImpliesEq}
R'(x_1, x_2, x_3) \equiv \exists y~R(x_1, x_2, y) \wedge O(x_3, y)
\end{align}
and observe that $R'$ 
\begin{itemize}
\item efficiently
entails 
$\eta_O:=(O(x_1, x_2) \implies O^{-1}(x_2, x_3))$,
\item contains an injective $O O^{-1}$-tuple, and
\item an injective $P P^{-1}$ tuple.
\end{itemize}

The first item follows from the fact that the relation $R$ entails $(O(x_1, x_2) \implies x_2 = x_3))$.
For the second item consider a substructure of $\structA$ induced by three pairwise different elements $a_1, a_2, a_3$ such that $(a_1, a_2) \in O$ and $(a_2, a_3) \in O^{-1}$. Since $\structA$ is liberal such $a_1, a_2, a_3$ exist. Observe that $(a_1, a_2, a_3)$ is an $OO^{-1}$-tuple in $R'$. Finally, consider a substructure of $\structA$ over four pairwise different elements $(a_1, a_2, a, a_3)$ such that $(a_1, a_2, a)$ is in the same orbit as $t$, $(a_2, a_3) \in P^{-1}$, and $(a_3, a) \in O$. Observe that  
 an assignment sending $y$ to $a$ and $x_i$ to $a_i$ for $i \in [3]$ satisfies all atomic formulae in~(\ref{TernaryOImpliesEq}). It follows that $R$ contains an injective $PP^{-1}$-tuple.

Then by Observation~\ref{obs:AllInjective} and~\ref{obs:ContainsFullBowtie},
it follows that
both the relation $R_3 := ((R' \bowtie R') \bowtie (R' \bowtie R'))$ and 
the relation $(R'_3(x_1, x_2, x_3) \equiv R_3(x_3, x_2, x_1))$ contain both 
\begin{itemize}
\item $R_1 := (O(x_1, x_2) \wedge O^{-1}(x_2, x_3))$ and 
\item $R_2 := (P(x_1, x_2) \wedge P^{-1}(x_2, x_3))$.
\end{itemize}
Clearly both $R_3$ and $R_3(x_3, x_2, x_1)$ efficiently entails $\eta_O$.
It follows that the relation $(R_4(x_1, x_2, x_3) \equiv (R_3(x_1, x_2, x_3) \wedge R'_3(x_1, x_2, x_3)))$ is a critical ternary relation over $(\structB, O, P, O^{-1},$
$ P^{-1})$ and the lemma follows by Proposition~\ref{prop:CriticalTernaryNoQNUF}.
\end{proof}

Roughly speaking, the following corollary says that 
if a ternary relation $R$ contains an $O_1O_2$ tuple
then it contains an injective $O_1P$-tuple for some orbital $P$ provided an orbital $O_1$ is anti-reflexive and
an injective $O_1 O_2$-tuple provided both orbitals $O_1$ and $O_2$
are anti-reflexive.

\begin{corollary}
\label{cor:InjTernaryTuple}
Let $\structA$ be a liberal finitely bounded homogeneous 
binary core  and $\structB$ a first-order expansion of $\structA$ with bounded strict width that pp-defines a ternary relation $R$ that entails no equalities. Then for any list of pairwise different two-element sets of indices $\{ i_1, j_1\}, \ldots, \{ i_m, j_m \}$ with $m \in [2]$ such that there exists a tuple $t \in R$ satisfying $t[i_k] \neq t[j_k]$ for all $k \in [m]$, the relation $R$ contains an injective tuple $t'$ such that 
$(t'[i_k], t'[j_k])$ is in the  same orbital as $(t[i_k], t[j_k])$ for all $k \in [m]$.
\end{corollary}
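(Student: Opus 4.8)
The plan is to argue by contradiction and reduce every failure to one of the two configurations already shown incompatible with bounded strict width: a no-equality ternary relation that entails $\varphi$ (Lemma~\ref{lem:EqOrEqOrEqnoQNUF}), or a no-equality ternary relation that efficiently entails an anti-reflexive orbital implication into an equality (Lemma~\ref{lem:TernaryOImpliesEq}). Two preliminaries are used throughout. First, since each anti-reflexive orbital is some $R_i$ and $=$ is an orbital, every orbital and its inverse are atomically pp-definable in $\structB$, so relations such as $R\wedge O(x_1,x_2)$ are pp-definable. Second, because $\structB$ has bounded strict width and $R$ entails no equalities, Lemma~\ref{lem:EqOrEqOrEqnoQNUF} forbids $R$ from entailing $\varphi$; hence $R$ contains an injective tuple $t^*$, which I will reuse to certify the strict-inclusion (efficiency) conditions. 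I also note once that Lemma~\ref{lem:TernaryOImpliesEq} is invariant under permuting coordinates (up to replacing $O$ by $O^{-1}$), since permuting the arguments of a pp-definable no-equality relation preserves all its hypotheses; this lets me match any single anti-reflexive-orbital-into-equality implication to the lemma's normal form.

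For $m=1$ I would take the pair to be $(1,2)$ with $(t[1],t[2])\in O$ and assume no injective witness exists. Then $R_O:=R\wedge O(x_1,x_2)$ entails $(x_1=x_3\vee x_2=x_3)$, hence $\varphi$, and I split on which equalities $R_O$ entails. If it entails $x_1=x_3$ or $x_2=x_3$, then $R$ efficiently entails the corresponding implication: the inclusion $O\subsetneq\proj_{1,2}R$ follows because $t^*$'s first pair cannot be $O$ (else $t^*\in R_O$ would be injective), and $=\subsetneq\proj_{1,3}R$ resp.\ $\proj_{2,3}R$ via the diagonal pair from $t$ and the off-diagonal pair from $t^*$, so Lemma~\ref{lem:TernaryOImpliesEq} contradicts bounded strict width. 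Otherwise $R_O$ entails no equality (not $x_1=x_2$ by anti-reflexivity) yet entails $\varphi$, and Lemma~\ref{lem:EqOrEqOrEqnoQNUF} finishes.

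For $m=2$, a coordinate permutation lets me assume the pairs are $(1,2)$ and $(2,3)$ with $(t[1],t[2])\in O_1$ and $(t[2],t[3])\in O_2$. Assuming no injective witness, the given $t$ is non-injective, so $t[1]=t[3]$; this forces $O_2=O_1^{-1}$ and $t=(a,b,a)$ with $(a,b)\in O_1$, and $R$ entails $(O_1(x_1,x_2)\wedge O_1^{-1}(x_2,x_3)\implies x_1=x_3)$. I then set $R':=R\wedge O_1(x_1,x_2)$. It cannot entail $x_1=x_2$ (anti-reflexivity) nor $x_2=x_3$ (since $(t[2],t[3])=(b,a)$ is off-diagonal). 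If $R'$ entails $x_1=x_3$, then $R$ efficiently entails $(O_1(x_1,x_2)\implies x_1=x_3)$ exactly as in the $m=1$ analysis. Otherwise $R'$ is equality-free and efficiently entails $(O_1^{-1}(x_2,x_3)\implies x_1=x_3)$: the pair $(b,a)\in O_1^{-1}$ lies in $\proj_{2,3}R'$, while any $s\in R'$ with $s[1]\neq s[3]$ has $(s[2],s[3])\notin O_1^{-1}$ by the displayed entailment, giving $O_1^{-1}\subsetneq\proj_{2,3}R'$, and $=\subsetneq\proj_{1,3}R'$ via $t$ and $s$; the permuted Lemma~\ref{lem:TernaryOImpliesEq} again yields a contradiction.

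The hard part will be this last $m=2$ subcase. Unlike for $m=1$, the relevant relation $R'$ always carries the constraint $O_1(x_1,x_2)$, so before any implies-equality lemma applies one must first certify that $R'$ entails no equalities; the argument hinges on the dichotomy ``$R'$ entails $x_1=x_3$, or it does not'', the first branch being absorbed into the $m=1$ efficient-entailment computation and the second producing an equality-free relation whose off-diagonal witness $s$ supplies the strict inclusion needed for efficiency. Verifying the two strict-inclusion conditions from the tuples $t$, $t^*$, and $s$ is the only genuinely delicate bookkeeping; the reductions between the various index configurations are routine coordinate permutations.
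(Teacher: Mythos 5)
Your proof is correct and follows essentially the same route as the paper's: argue by contradiction and reduce to Lemma~\ref{lem:TernaryOImpliesEq} (applied up to a coordinate permutation) after restricting $R$ by the relevant orbital, using the injective tuple guaranteed by Lemma~\ref{lem:EqOrEqOrEqnoQNUF} to certify the strict inclusions required for efficient entailment. You are in fact slightly more careful than the paper in the $m=1$ case, where you invoke Lemma~\ref{lem:EqOrEqOrEqnoQNUF} to dispose of the subcase in which $R \wedge O(x_1,x_2)$ entails the disjunction $(x_1 = x_3 \vee x_2 = x_3)$ without entailing either disjunct --- a subcase the paper's own one-line argument for $m=1$ passes over silently.
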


\begin{proof}
Assume the contary. Then, since $R$ entails no equalities, in the case where $m = 1$ the relation $R$ efficiently entails $(O(x_a, x_b) \implies x_c = x_d)$ for some $a,b,c,d \in [3]$. It contradicts Lemma~\ref{lem:TernaryOImpliesEq} and completes the proof in this case.

For $m = 2$ we have that $R$  entails the formula
$(O(x_a, x_b) \wedge O(x_b, x_c) \implies x_a = x_c)$ 
where $\{ a,b, c \} = \{ 1,2,3 \}$. Without loss of generality assume that $R$  entails $(O(x_1, x_2) \wedge O(x_1, x_3) \implies x_2 = x_3)$.
Observe that $R$ contains a tuple $t$ such that 
$(t[1], t[2]), (t[1], t[3]) \in O$ and $(t[2] = t[3])$.
Since 
$R$ entails neither $(O(x_1, x_2)  \implies x_2 = x_3)$
nor $(O(x_1, x_3) \implies x_2 = x_3)$. It follows that $R$ contains a tuple $t$
such that $(t[1], t[3]) \in O$ and $(t[2] \neq t[3])$. 
It follows that $(R'(x_1, x_2, x_3) \equiv R(x_1, x_2, x_3) \wedge O(x_1, x_3))$
efficiently entails $(O(x_1, x_2) \implies x_2 = x_3)$. The corollary follows by Lemma~\ref{lem:TernaryOImpliesEq}.
\end{proof}

We now turn to showing that quaternary $(\structB, C,D, O,  =)$-implications with anti-reflexive orbitals $O$ have no bounded strict width. We  first consider the case where the implication has an injective tuple.

\begin{lemma}
\label{lem:QuaternaryOimpliesEq}
Let $\structA$ be a liberal finitely bounded homogeneous  
binary core and $\structB$ a first-order expansion of $\structA$ that pp-defines a quaternary relation $R$ that efficiently entails 
$$(O(x_1, x_2) \implies x_3 = x_4)$$
and contains an injective tuple. Then $\structB$ does not have bounded strict width.
\end{lemma}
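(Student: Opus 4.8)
The plan is to mimic the strategy of Lemma~\ref{lem:TernaryOImpliesEq}: out of $R$ I will pp-define a critical ternary relation over $(\structB, O, P, O^{-1}, P^{-1})$ for a suitable second orbital $P$, and then invoke Proposition~\ref{prop:CriticalTernaryNoQNUF}.

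First I would exploit the injective tuple. Because $R$ efficiently entails $(O(x_1,x_2)\implies x_3=x_4)$, no injective tuple of $R$ can have its first two coordinates in $O$; hence the given injective tuple $t$ satisfies $(t[1],t[2])\in P$ for some anti-reflexive orbital $P\neq O$. I then replace the trailing equality by an orbital by setting
$$R'(x_1,x_2,x_3,x_4)\equiv\exists y~R(x_1,x_2,x_3,y)\wedge O(x_4,y).$$
Since $R$ forces $x_3=y$ whenever $(x_1,x_2)\in O$, the relation $R'$ efficiently entails the quaternary analogue $\eta_O:=(O(x_1,x_2)\implies O^{-1}(x_3,x_4))$. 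The strict containments needed for efficiency follow once I exhibit, using liberality and homogeneity exactly as in Observation~\ref{obs:AllInjective}, an injective $OO^{-1}$-tuple and an injective $PP^{-1}$-tuple in $R'$; for the latter I adjoin to $\{t[1],\dots,t[4]\}$ a fifth element $b$ with $(b,t[4])\in O$ and $(t[3],b)\in P^{-1}$, so that $(t[1],t[2],t[3],b)\in R'$ with first two coordinates in $P$ and last two in $P^{-1}$.

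Next I would pump these up to full relations. Setting $R'':=R'\bowtie R'$, Observation~\ref{obs:AllInjective} upgrades the single injective $OO^{-1}$-tuple and the single injective $PP^{-1}$-tuple to \emph{all} injective $OO^{-1}$- and all injective $PP^{-1}$-tuples, while a direct check propagating $O$ through both conjuncts of Definition~\ref{defn:bowtie} via the gluing variables shows that $R''$ still efficiently entails the quaternary $\eta_O$. Then $R_3:=R''\bowtie_3 R''$ is ternary, and Observation~\ref{obs:ContainsFullBowtieThree} guarantees that $R_3$ contains the full relations $R_1:=O(x_1,x_2)\wedge O^{-1}(x_2,x_3)$ and $R_2:=P(x_1,x_2)\wedge P^{-1}(x_2,x_3)$; the same $O$-propagation argument shows that $R_3$ efficiently entails the ternary $\eta_O=(O(x_1,x_2)\implies O^{-1}(x_2,x_3))$.

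Finally, since $R_1$ and $R_2$ are invariant under the coordinate reversal $(x_1,x_2,x_3)\mapsto(x_3,x_2,x_1)$, the relation $R'_3(x_1,x_2,x_3):=R_3(x_3,x_2,x_1)$ also contains $R_1,R_2$ and efficiently entails the converse $(O^{-1}(x_2,x_3)\implies O(x_1,x_2))$. Consequently $R_4:=R_3\wedge R'_3$ entails both implications required in Definition~\ref{def:CriticalTernary}, has $O,P$ disjoint inside $\proj_{1,2}R_4$ and $O^{-1},P^{-1}$ disjoint inside $\proj_{2,3}R_4$, and contains $R_1,R_2$; thus $R_4$ is a critical ternary relation over $(\structB,O,P,O^{-1},P^{-1})$, and Proposition~\ref{prop:CriticalTernaryNoQNUF} finishes the proof. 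I expect the main obstacle to be the bookkeeping that keeps the efficient-entailment strictness alive through $\bowtie$ and $\bowtie_3$ — this is exactly where the second orbital $P\neq O$ is indispensable, as it is what makes $O$ a \emph{proper} subset of the relevant projections.
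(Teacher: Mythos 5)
Your overall strategy coincides with the paper's: pass from $R$ to $R'(x_1,x_2,x_3,x_4)\equiv\exists y\,(R(x_1,x_2,x_3,y)\wedge O(x_4,y))$, pump up to $R_3:=(R'\bowtie R')\bowtie_3(R'\bowtie R')$, symmetrize, and exhibit a critical ternary relation over $(\structB,O,P,O^{-1},P^{-1})$. However, there is one concrete gap. To run Observation~\ref{obs:AllInjective} and Observation~\ref{obs:ContainsFullBowtieThree} on the $O$-side you need $R'$ to contain an \emph{injective} $OO^{-1}$-tuple, and you assert this follows ``using liberality and homogeneity exactly as in Observation~\ref{obs:AllInjective}.'' It does not. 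An injective $OO^{-1}$-tuple of $R'$ requires a tuple of $R$ with $(t[1],t[2])\in O$ whose first three coordinates are pairwise different (its fourth coordinate being forced equal to the third). Efficient entailment only gives $O\subsetneq\proj_{1,2}R$, so a priori every tuple of $R$ with first two coordinates in $O$ could satisfy, say, $x_3=x_1$; liberality and homogeneity of $\structA$ are properties of the base structure and cannot exclude this degeneracy of $R$. If it occurs, $R'$ has no injective $OO^{-1}$-tuple and your later claim that $R_3$ contains all of $O(x_1,x_2)\wedge O^{-1}(x_2,x_3)$ collapses.

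The paper closes exactly this hole by first applying Corollary~\ref{cor:InjTernaryTuple} to $\proj_{1,2,3}R$ (which entails no equalities because $R$ has an injective tuple): either $\structB$ already has no bounded strict width, or $\proj_{1,2,3}R$ contains an injective tuple with first two coordinates in $O$, which yields the needed $O\!=$-tuple of $R$ with pairwise different first three entries. With that step inserted, the remainder of your argument (the $PP^{-1}$-tuple construction from the injective tuple $t$, the preservation of efficient entailment of $\eta_O$ through $\bowtie$ and $\bowtie_3$ witnessed by the presence of the $P$-tuples, and the final symmetrization $R_3(x_1,x_2,x_3)\wedge R_3(x_3,x_2,x_1)$) matches the paper's proof and is correct.
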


\begin{proof}
By the assumptions of the lemma, the relation $R$ contains an injective tuple $t_P$ such that 
$(t_P[1], t_P[2]) \in P$ for some anti-reflexive orbital $P$ different from $O$.
By Corollary~\ref{cor:InjTernaryTuple}, the projection of the relation $R$
to its three first arguments contains an injective tuple $t'$ with $(t'[1], t'[2]) \in O$, and hence $R$ contains an $O\!\!=$-tuple $t_O$ such that  $t_O[1], t_O[2], t_O[3]$ are pairwise different. Further, we claim that the relation
\begin{align}
\label{QuaternaryOimpliesEq}
R'(x_1, x_2, x_3, x_4) \equiv \exists y~R(x_1, x_2, x_3, y) \wedge O(x_4, y)
\end{align}
satisfies all of the following:
\begin{itemize}
\item it  efficiently entails 
$(O(x_1, x_2) \implies O^{-1}(x_3, x_4))$,
\item contains an injective $OO^{-1}$-tuple,
and 
\item an injective $PP^{-1}$-tuple.
\end{itemize}

The first item follows clearly by the fact that $R$ efficiently entails $(\eta_{\textrm{Eq}} := (O(x_1, x_2) \implies x_3 = x_4))$. For the second item, consider four pairwise different elements $a_1, a_2, a_3,a_4$ in $A$ such that $(a_1, a_2, a_3, a_3)$ is in the same orbit as $t_O$ and $(a_3, a_4) \in O^{-1}$. Since $\structA$ is liberal the elements $a_1, a_2, a_3, a_4$ exist. Observe that the assignment $\assignment: \{ x_1, \ldots, x_4, y\} \to A$ sending $x_i$ to $a_i$ for $i \in [4]$ and $y$ to $a_3$ saisfies all atomic formulae in~(\ref{QuaternaryOimpliesEq}). It follows that $R'$ has an injective  $OO^{-1}$-tuple. For the last item consider pairwise different $a_1, a_2, a_3, a, a_4 \in A$ such that $(a_1, a_2, a_3, a)$ is in the same orbit as $t_P$ and $a_4$ is such that $(a_4, a) \in O$ and $(a_3, a_4) \in P^{-1}$. Again, $a_1, a_2, a_3, a, a_4$ exist since $\structA$ is liberal. Observe that  the assignment $\assignment: \{ x_1, \ldots, x_4, y\} \to A$ sending $x_i$ to $a_i$ for $i \in [4]$ and $y$ to $a$ saisfies all atomic formulae in~(\ref{QuaternaryOimpliesEq}). Hence $R'$ contains an injective 
$PP^{-1}$-tuple. 

Then by Observation~\ref{obs:AllInjective} and~\ref{obs:ContainsFullBowtieThree}, we have that the ternary relation
$R_3 := ((R' \bowtie R') \bowtie_3 (R' \bowtie R'))$ and 
the relation $(R'_3(x_1, x_2, x_3) \equiv R_3(x_3, x_2, x_1))$ contains both 
\begin{itemize}
\item $R_1 := (O(x_1, x_2) \wedge O^{-1}(x_2, x_3))$ and 
\item $R_2 := (P(x_1, x_2) \wedge P^{-1}(x_2, x_3))$.
\end{itemize}
By the definitions of $\bowtie$ and $\bowtie_3$, both $R_3$ and $R'_3$ entail $(\eta_O \equiv (O(x_1, x_2) \implies O^{-1}(x_2, x_3)))$. Since both the relation $R_3$ and $R'_3$ contain $R_2$, we have that both $R_3$ and $R'_3$ efficiently entail $\eta_O$.
It follows that $(R_3(x_1, x_2, x_3) \wedge R_3(x_3, x_2, x_1))$ is a critical ternary relation over $(\structB, O, P, O^{-1}, P^{-1})$ and the lemma follows by Proposition~\ref{prop:CriticalTernaryNoQNUF}.
\end{proof}

Next, we look at $(\structB, C,D, O,  =)$-implications $R$ where $O$ is an anti-reflexive orbital and $R$ does not have an injective tuple.

\begin{lemma}
\label{lem:EqOrEqQuaternary} 
Let $\structA$ be a liberal finitely bounded homogeneous binary core and $\structB$ a first-order expansion of $\structA$ that pp-defines a quaternary relation $R$ that  entails 
$$(x_1 = x_2 \vee x_3 = x_4)$$
and entails no equalities.
Then $\structB$ does not have bounded strict width.
\end{lemma}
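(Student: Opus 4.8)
The plan is to reduce Lemma~\ref{lem:EqOrEqQuaternary} to the case already handled by Lemma~\ref{lem:QuaternaryOimpliesEq}, by massaging the disjunction $(x_1 = x_2 \vee x_3 = x_4)$ into an efficient entailment of the form $(O(x_1, x_2) \implies x_3 = x_4)$ for some anti-reflexive orbital $O$. Since $R$ entails no equalities, it must contain a tuple $t$ with $t[1] \neq t[2]$ and also a tuple $t'$ with $t'[3] \neq t'[4]$; otherwise $R$ would entail $(x_1 = x_2)$ or $(x_3 = x_4)$ outright. First I would form the relation $(R'(x_1, x_2, x_3, x_4) \equiv R(x_1, x_2, x_3, x_4) \wedge x_1 \neq x_3)$ or a similar restriction using $\neq$, which is available by Observation~\ref{obs:NeqCanBeDefined}, in order to carve out a cleaner sub-relation on which the implication becomes exact.

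The key observation is that $R$ entails $(x_1 = x_2 \vee x_3 = x_4)$ means exactly that whenever $x_1 \neq x_2$ we are forced into $x_3 = x_4$. So I would set $O := \proj_{1,2} R'$ where $R'$ is the restriction of $R$ to tuples with $x_1 \neq x_2$; on this restriction the relation efficiently entails $(O(x_1, x_2) \implies x_3 = x_4)$, provided $O$ is a single anti-reflexive orbital and the efficiency condition $O \subsetneq \proj_{1,2} R$ and $\{=\} \subsetneq \proj_{3,4} R$ holds. The first point may require splitting $R$ along the orbitals comprising its projection to $(x_1, x_2)$: since $R$ entails no equalities and $\structA$ is a binary core, $\proj_{1,2} R$ restricted to $\neq$ decomposes into finitely many anti-reflexive orbitals, and I would intersect with a single chosen orbital $O$ using pp-definability to isolate one of them. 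Once the relation efficiently entails $(O(x_1, x_2) \implies x_3 = x_4)$, I split on whether the resulting relation contains an injective tuple. If it does, Lemma~\ref{lem:QuaternaryOimpliesEq} applies directly and we are done.

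If the relation has no injective tuple, I would argue that the remaining possibility is governed by forced equalities among the other coordinates, analogous to the ternary branching in the proof of Lemma~\ref{lem:EqOrEqOrEqnoQNUF}: the absence of an injective tuple combined with entailing no equalities forces the relation to be essentially a product of a binary orbital constraint on $(x_1, x_2)$ with $x_3 = x_4$, or to collapse onto a critical ternary shape after one more application of $\bowtie$ or projection. Concretely I would use the $==$-handling machinery of Observation~\ref{obs:AllInjective} (third item) and Observation~\ref{obs:ContainsFullBowtieThree} to manufacture a ternary relation of the form $(O(x_1, x_2) \wedge O^{-1}(x_2, x_3))$ together with the corresponding $=$-tuples, yielding a critical ternary relation over $(\structB, O, =, O^{-1}, =)$ and concluding by Proposition~\ref{prop:CriticalTernaryNoQNUF}.

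The main obstacle I expect is the bookkeeping in the non-injective branch: ensuring that when $R$ has no injective tuple, one can still pp-define a relation satisfying \emph{all} clauses of Definition~\ref{def:CriticalTernary}, in particular the two-directional efficient entailment and the anti-reflexivity/equality constraints on $C_2, D_2$. The liberality of $\structA$ is doing the heavy lifting throughout, guaranteeing that the required small substructures exist so that the $\bowtie$-constructions actually populate the relations with the tuples we need; the delicate part is verifying that the projections stay proper (the $\subsetneq$ conditions) rather than accidentally becoming the full projection, which is what distinguishes efficient entailment from plain entailment.
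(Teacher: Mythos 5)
Your plan has a genuine gap at its central case split. You propose to isolate an anti-reflexive orbital $O \subseteq \proj_{1,2} R$, observe that the resulting relation efficiently entails $(O(x_1,x_2) \implies x_3 = x_4)$, and then ``split on whether the resulting relation contains an injective tuple,'' applying Lemma~\ref{lem:QuaternaryOimpliesEq} in the injective branch. But that branch is vacuous: since $R$ entails $(x_1 = x_2 \vee x_3 = x_4)$, neither $R$ nor its restriction by $O(x_1,x_2)$ can contain an injective tuple --- every tuple has $t[1]=t[2]$ or $t[3]=t[4]$. So your argument always lands in the non-injective branch, which is exactly the part you leave vague. The paper's proof supplies the missing idea: it forms $R'(x_1,x_2,x_3,x_4) \equiv \exists y\,(R(x_1,x_2,x_3,y) \wedge O(y,x_4))$, replacing the coordinate that gets forced into an equality by a fresh coordinate at orbital-distance $O$ from it. If $(x_3,x_4) \in P$ for an anti-reflexive orbital $P \neq O$, then $x_3 \neq y$, which forces $x_1 = x_2$; so $R'$ efficiently entails $(P(x_3,x_4) \implies x_1 = x_2)$ \emph{and}, unlike $R$, it does contain an injective tuple (take a tuple of $R$ with third and fourth coordinates equal and move $x_4$ off the diagonal along $O$, using liberality and a further appeal to Lemma~\ref{lem:TernaryOImpliesEq} to make it fully injective). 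After swapping the two coordinate pairs, Lemma~\ref{lem:QuaternaryOimpliesEq} then applies. Note this construction needs two distinct anti-reflexive orbitals $O \neq P$, which is why the paper first disposes of the case where $\structA$ has a single anti-reflexive orbital (i.e.\ is essentially $(\N;=,\neq)$) via the known syntactic characterization of equality languages with bounded strict width --- a case your proposal does not address.

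Your fallback in the non-injective branch --- manufacturing a critical ternary relation over $(\structB, O, =, O^{-1}, =)$ via $\bowtie$ and Observation~\ref{obs:ContainsFullBowtieThree} --- is not worked out and faces a real obstacle: Definition~\ref{def:CriticalTernary} requires the mutual efficient entailment to hold between the \emph{anti-reflexive} relations $C_1$ and $D_1$, whereas the only entailment your quaternary relation provides is of the shape ``anti-reflexive implies equality.'' Converting the latter into an entailment between anti-reflexive relations is precisely the step performed by the $\exists y\,(\cdots \wedge O(y,x_4))$ construction, so the fallback presupposes the very idea that is missing.
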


\begin{proof}
Observe that every binary core with one injective orbital only is isomorphic to $(\N; \neq, =)$. By Theorem~1 in~\cite{CPWrona12}, see also the introduction in this paper, it follows that every relation $R$ that has both a first-order definition in $(\N; \neq, =)$ and
bounded strict width  is definable by a conjunction of disjunctions of disequalities, i.e., conjunctions of clauses of the form $(x_1 \neq y_1 \vee \cdots \vee x_k \neq y_k)$. Such relations $R$ clearly do not efficiently entail $(x_1 = x_2 \vee x_3 = x_4)$, and hence we can assume that $\structA$ contains at least two different anti-reflexive orbitals. This assumption will be in effect in the remainder of the proof of the lemma.

Let now $C, D \subseteq \neq$ be two anti-reflexive
binary relations such that every  $O\!\!=$-tuple in $R$ with an antireflexive $O$
satisfies $O \in C$ and every
$=\!\!\!O$-tuple  in $R$ with an antireflexive $O$ satisfies $O \in D$.
We set $O$ to be some orbital in $C$ and $P$ to be some 
anti-reflexive orbital different from $O$. We claim that the relation
$$R'(x_1, x_2, x_3, x_4) \equiv \exists y~R(x_1, x_2, x_3, y) \wedge O(y, x_4),$$
satisfies all of the following:
\begin{itemize}
\item $R'$  efficiently entails $(P(x_3, x_4) \implies x_1 = x_2)$
\item $R'$ contains an injective tuple.
\end{itemize}

For the first item consider an assignment $\assignment: \{ x_1, x_2, x_3, x_4, y \}$ sending $(x_3, x_4)$ to $P$. Since $O$ is different from $P$, we have $\assignment(x_3) \neq \assignment(y)$. It implies $\assignment(x_1) = \assignment(x_2)$. Hence $R'$ entails $(P(x_3, x_4) \implies x_1 = x_2)$.
Further, consider $(a_1, a_2, a_3, a, a_4)$ such that $(a_1, a_2, a_3, a) \in R$
and $a_3 \neq a$, $(a_3, a_4) \in P$, and $(a, a_4) \in O$. Since $\structA$ is liberal and $R$ entails no equalities, such a tuple exists. It completes the proof of the first item.

For the second item consider $(a_1, a_2, a_3, a, a_4)$ 
such that $(a_1, a_2, a_3, a) \in R$, $a_3 = a$, and
$(a_3, a_4) \in O$, and $(a_1,a_2) \in O'$ for some anti-reflexive orbital $O'$. 
We will show that $(a_1, a_2, a_3, a_4)$ is an injective $O'O$-tuple in $R'$ or $R$ has no bounded strict width. 
If the former is not the case, then $R'$ 
entails 
$(O'(x_1, x_2) \wedge O(x_3, x_4) \implies x_k = x_l)$
for some $k,l \in [4]^2 \setminus \{ \{ 1,2 \}, \{3,4\} \}$. Since $R$ entails no equalities,  $\{ k, l\} \cap \{1,2 \} \neq \emptyset$  and $\{ k,l \} \cap \{ 3,4 \} \neq \emptyset$, we have by Lemma~\ref{lem:TernaryOImpliesEq} that $R'$ either has no bounded strict width and we are done or entails neither  
$O'(x_1, x_2) \implies x_k = x_l$ nor
$O(x_3, x_4) \implies x_k = x_l$. It follows that $(R''(x_1,x_2, x_3, x_4) \wedge O(x_3, x_4))$
efficiently entails $(O'(x_1, x_2) \implies x_k = x_l)$. It follows by Lemma~\ref{lem:TernaryOImpliesEq} that $R$ has no bounded strict width and completes the proof of the second item.

Now, it is easy to see that Lemma~\ref{lem:QuaternaryOimpliesEq} applied to $\structA$, $\structB$ and the relation 
$(R''(x_1, x_2, x_3, x_4) \equiv R'(x_3, x_4, x_1, x_2))$
completes the proof of this lemma.
\end{proof}

Practically, the following corollary says that for any non-constant tuple $t$ of a quaternary relation under consideration one can find an injective tuple $t'$ that agrees with $t$ on the 'injective part'.

\begin{corollary}
\label{cor:InjQuaternaryTuple}
Let $\structA$ be a liberal finitely bounded homogeneous 
binary core and $\structB$ a first-order expansion of $\structA$ with bounded strict width that pp-defines a quaternary relation $R$ that entails no equalities. Then for any list of pairwise different two-element sets of indices $\{ i_1, j_1\}, \ldots, \{ i_m, j_m \}$ with $i_k, j_k \in [4]$  for all $k \in [m]$ such that there exists a tuple $t \in R$ satisfying $t[i_k] \neq t[j_k]$ for all $k \in [m]$, the relation contains an injective tuple $t'$ such that 
$(t'[i_k], t'[j_k])$ is in the  same orbital as $(t[i_k], t[j_k])$ for all $k \in [m]$.
\end{corollary}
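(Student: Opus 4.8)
The plan is to mirror the proof of Corollary~\ref{cor:InjTernaryTuple}, replacing its ternary tools with their quaternary analogues and sending the overlapping configurations back to the ternary setting by projection. I argue by contradiction: assume $R$ contains a tuple $t$ with $t[i_k]\neq t[j_k]$ for all $k\in[m]$ but no injective tuple $t'$ whose restriction to each listed pair lies in the orbital $O_k$ of $(t[i_k],t[j_k])$. Each $O_k$ is anti-reflexive and, since the orbitals of pairs in a binary core are exactly $=,R_1,\dots,R_\kappa$, is a single pp-definable relation. I separate the coordinates one pair at a time: starting from $t$ (or, when $m=0$, from any tuple witnessing a disequality, which exists as $R$ entails no equalities), I maintain a tuple $u\in R$ and a set $P$ of pairs containing the listed ones on which $u$ is separated, fixing on each pair of $P$ the single anti-reflexive orbital realized by $u$. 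The restriction $R^{P}$ of $R$ by these orbital constraints is pp-definable and nonempty; by Observation~\ref{obs:NeqCanBeDefined} I may also use $\neq$ freely.

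The separation dichotomy is the engine. Given $R^{P}$ and a pair $\{p,q\}\notin P$, either $R^{P}$ has a tuple with $x_p\neq x_q$, in which case I enlarge $P$ and continue, or $R^{P}$ entails $x_p=x_q$. After at most six steps I either separate all pairs, contradicting the assumption, or reach a stage where every remaining pair is forced equal in $R^{P}$; fix one such $\{p,q\}$. Since $R$ entails no equalities it does not entail $x_p=x_q$, so I may pick a minimal subset $P'\subseteq P$ with $R\wedge\bigwedge_{\{c,d\}\in P'}O_{c,d}(x_c,x_d)$ still entailing $x_p=x_q$, and an element $\{a,b\}\in P'$. Writing $\hat R$ for $R$ conjoined with the constraints in $P'\setminus\{a,b\}$, minimality gives both $=\,\subsetneq\proj_{p,q}\hat R$ (as $\hat R$ does not entail $x_p=x_q$) and $O_{a,b}\subsetneq\proj_{a,b}\hat R$ (else $\hat R$ would already entail $x_p=x_q$), so $\hat R$ efficiently entails $(O_{a,b}(x_a,x_b)\implies x_p=x_q)$ in the sense of Definition~\ref{def:EffEntQuaternary}.

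I now analyse $|\{a,b\}\cap\{p,q\}|$. The case $\{a,b\}=\{p,q\}$ is vacuous, since an anti-reflexive $O_{a,b}$ cannot force $x_a=x_b$. If $\{a,b\}$ and $\{p,q\}$ share one index, the entailment involves only three coordinates; projecting $\hat R$ onto them yields a ternary relation that efficiently entails $(O(x_1,x_2)\implies x_2=x_3)$ (after possibly replacing $O$ by $O^{-1}$) and, as one checks, entails no equalities --- indeed entailing $x_1=x_2$ or $x_1=x_3$ would contradict either anti-reflexivity of $O$ or efficiency --- so Lemma~\ref{lem:TernaryOImpliesEq} yields that $\structB$ has no bounded strict width. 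A short graph argument on $K_4$ shows this overlapping situation can always be arranged whenever $P'\neq\{[4]\setminus\{p,q\}\}$, by choosing $\{a,b\}\in P'$ adjacent to $\{p,q\}$; the only irreducible case is $P'=\{[4]\setminus\{p,q\}\}$, where $\hat R=R$ and $\{a,b\},\{p,q\}$ partition $[4]$.

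The main obstacle is exactly this disjoint configuration, in which $R$ is, up to reordering, a quaternary $(\structB,C,D,O,=)$-implication. Here the proof splits along the very dichotomy around which Lemmas~\ref{lem:QuaternaryOimpliesEq} and~\ref{lem:EqOrEqQuaternary} are organised: if $R$ contains an injective tuple I invoke Lemma~\ref{lem:QuaternaryOimpliesEq} directly, and I expect to produce such a tuple by applying the ternary Corollary~\ref{cor:InjTernaryTuple} to suitable three-coordinate projections of $R$ and then separating the fourth coordinate; if $R$ contains no injective tuple at all, I must instead peel off anti-reflexive orbital constraints until the forced collapse degenerates into a genuine two-term equality disjunction $x_a=x_b\vee x_p=x_q$, at which point Lemma~\ref{lem:EqOrEqQuaternary} applies. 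Each branch contradicts bounded strict width of $\structB$, closing the argument. The delicate bookkeeping throughout --- preserving the hypotheses ``entails no equalities'' and the strict projection inclusions demanded by the invoked lemmas under the repeated restrictions and projections --- is where the real work of the proof lies.
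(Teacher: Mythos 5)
Your proposal follows essentially the same route as the paper's proof: take a minimal set of orbital constraints whose imposition forces an equality, drop one constraint to obtain an efficient entailment $(O_{a,b}(x_a,x_b)\implies x_p=x_q)$, and case-split on whether $\{a,b\}$ meets $\{p,q\}$, reducing to Lemma~\ref{lem:TernaryOImpliesEq} in the overlapping case and to Lemma~\ref{lem:QuaternaryOimpliesEq} or Lemma~\ref{lem:EqOrEqQuaternary} in the disjoint case. The additional bookkeeping you introduce (the incremental separation procedure and the $K_4$ remark) only makes explicit what the paper leaves implicit, so the approaches coincide.
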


\begin{proof}
Assume the contrary and let $O_{i_k,j_k}$ with $k \in [m]$ be an orbital of $t[i_k, j_k]$. Then there exists a minimal subset 	$\indices \subseteq \{ \{ i_1, j_1  \}, \ldots \{  i_m, j_m  \}  \}$ such that the formula
$$(R(x_1, x_2, x_3, x_4) \wedge \bigwedge_{\{i,j \} \in \indices} O_{i,j}(x_j, x_j))$$
entails $y_1 = y_2$ for some $y_1, y_2 \in \{ x_1, x_2, x_3, x_4 \}$.
Assume without loss of generality that $(y_1, y_2) = (x_3, x_4)$.
Let now $\indices' := \indices \setminus \{ i_0, j_0 \}$
for some $\{ i_0, j_0 \}$ in $\indices$. Since $R$ entails no equalities, the pair 
$\{ i_0, j_0 \}$ exists.
Observe that the new formula
$$(R'(x_1, x_2, x_3, x_4) \equiv R(x_1, x_2, x_3, x_4) \wedge \bigwedge_{\{i,j \} \in \indices'} O_{i,j}(x_i, x_j))$$
efficiently entails $(O_{i_0,j_0}(x_{i_0}, x_{j_0}) \implies x_3 = x_4)$.
If there is such a choice of $\indices'$ and $i_0, j_0$ that $\{ i_0, j_0 \} \cap \{ 3,4 \} \neq \emptyset$, then the corollary follows by Lemma~\ref{lem:TernaryOImpliesEq}. Otherwise $\{ i_0 ,j_0 \} = \{ 1, 2 \}$
 and $R'$ contains a tuple $t$ such that $t_1[1], t_1[2], t_1[3]$ are pairwise different and $t_1[3] = t_1[4]$ as well as a tuple $t_2$ such that $t_2[2], t_2[3], t_1[4]$ are pairwise different. If we can find $t_2$ which is injective, then we are done by Lemma~\ref{lem:QuaternaryOimpliesEq}. Otherwise $R'$ entails $(x_1 = x_2 \vee x_3 = x_4)$. Since $R'$ entails no equalities, the corollary follows by Lemma~\ref{lem:EqOrEqQuaternary}.
\end{proof}

\subsection{Composing Implications}
\label{sect:CompImp}

We will now define the way the 'implications' can be composed. We will be composing the implications originating from a 'cyclic'  instance of $\csp(\structB)$ for implicationally hard $\structB$ in order to obtain a critical ternary relation.

\begin{definition}
\label{def:Circ}
Let the relation $R_1$ be a $(\structB, C, D, C_1, D_1, L_1,$
$P_1)$-implication and 
$R_2$ a $(\structB, D, F, D_1, F_1, L_2, P_2)$-implication.
We write $R_3 := R_1 \circ R_2$ for 
a relation obtained in one of the following ways:
\begin{itemize}
\item if both $R_1$ and $R_2$ are quaternary implications and $P_1 = L_2$, then $R_3$ is quaternary and $R_3(x_1, x_2, x_3, x_4)$ is defined by:
\begin{align}
\label{Circ:44same}
\exists y_1 \exists y_2~R_1(x_1, x_2, y_1, y_2) \wedge R_2( y_1, y_2, x_3, x_4),
\end{align} 
\item if both $R_1$ and $R_2$ are quaternary implications and $P_1 \neq L_2$, then $R_3$ is quaternary and $R_3(x_1, x_2, x_3, x_4)$ is defined by:
\begin{align}
\label{Circ:44different}
\exists y_1 \exists y_2~R_1(x_1, x_2, y_1, y_2) \wedge R_2( y_2, y_1, x_3, x_4),
\end{align}
 \item if the relation $R_1$ is a quaternary $(L_1, P_1)$-implication and the relation $R_2$ a ternary $(L_2, P_2)$-implication such that $P_1 = L_2$, then 
$R_3$ is quaternary and 
 $R_3(x_1, x_2, x_3, x_4)$ is defined by 
\begin{align}
\label{Circ:43same}
\exists y~R_1(x_1, x_2, y, x_3) \wedge R_2(y, x_3, x_4)
\end{align}
 \item if the relation
 $R_1$ is a quaternary $(L_1, P_1)$-implication and the relation $R_2$ a ternary $(L_2, P_2)$-implication such that $P_1 \neq L_2$, then 
$R_3$ is quaternary and 
 $R_3(x_1, x_2, x_3, x_4)$ is defined by 
\begin{align}
\label{Circ:43different}
\exists y~R_1(x_1, x_2, x_3, y) \wedge R_2(y,  x_3, x_4)
\end{align}
 \item if the relation $R_1$ is a ternary $(L_1, P_1)$-implication and $R_2$ is a quaternary $(L_2, P_2)$-implication such that $(P_1 = L_2)$, then 
$R_3$ is quaternary and 
 $R_3(x_1, x_2, x_3, x_4)$ is defined by 
\begin{align}
\label{Circ:34same}
\exists y~R_1(x_1, x_2, y) \wedge R_2(x_2, y, x_3, x_4)
\end{align}
 \item if the relation $R_1$ is a ternary $(L_1, P_1)$-implication and $R_2$ is a quaternary $(L_2, P_2)$-implication such that $(P_1 \neq L_2)$, then 
$R_3$ is quaternary and 
 $R_3(x_1, x_2, x_3, x_4)$ is defined by 
\begin{align}
\label{Circ:34different}
\exists y~R_1(x_1, x_2, y) \wedge R_2( y,  x_2, x_3, x_4)
\end{align}
\item  if the relation $R_1$ is a ternary $(L_1, P_1)$-implication and the relation $R_2$ a ternary $(L_2, P_2)$-implication with $P_1 = L_1$, then 
$R_3$ is quaternary and 
 $R_3(x_1, x_2, x_3, x_4)$ is defined by 
\begin{align}
\label{Circ:33same}
R_1(x_1, x_2,  x_3) \wedge R_2(x_2, x_3,  x_4)
\end{align}
\item  if the relation $R_1$ is a ternary $(L_1, P_1)$-implication and the relation $R_2$ a ternary $(L_2, P_2)$-implication with $P_1 \neq L_2$, then 
$R_3$ is ternary and 
 $R_3(x_1, x_2, x_3, x_4)$ is defined by 
\begin{align}
\label{Circ:33different}
\exists y~R_1(x_1, x_2,  y) \wedge R_2(y,  x_2, x_3).
\end{align}
\end{itemize}
\end{definition}

\noindent
We will also write $(R_1)^{\circ k}$ for $(\underbrace{R_1 \circ \cdots \circ R_1}_{k \textrm{ times}})$.

We will now prove that $\circ$ states for the composition of implications, i.e., that this operation also returns an implication.

\begin{lemma}
\label{lem:ImplicationsComposition}
Let the relation $R_1$  be a $(\structB, C, D, C_1, D_1, L_1, P_1)$-implication and $R_2$ 
a $(\structB, D, F, D_1, F_1, L_2, P_2)$-implication such that both have a first-order definition in a liberal finitely bounded homogeneous structure $\structA$. 
Then the relation $R_3 := R_1 \circ R_2$ is a
$(\structB, C,F,C_1, F_1, L_1, P_2)$-implication such that $R_3$ has an $O_1 O_3$-tuple for some orbitals $O_1$ and $O_3$ if 
\begin{itemize}
\item either $P_1 = L_2$, $R_1$ has an $O_1 O_2$-tuple and $R_2$ has an $O_2 O_3$-tuple for some orbital $O_2$, or
\item $P_1 \neq L_2$, $R_1$ has an $O_1 O_2$-tuple and $R_2$ has an $O_2^{-1} O_3$-tuple for some orbital $O_2$.
\end{itemize}
\end{lemma}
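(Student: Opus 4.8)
The plan is to prove Lemma~\ref{lem:ImplicationsComposition} by checking, one at a time, each of the five defining conditions of a $(\structB, C, F, C_1, F_1, L_1, P_2)$-implication from Definition~\ref{def:TernaryImp} (or its quaternary analogue), and then separately verifying the tuple-propagation claim. Since $R_3 := R_1 \circ R_2$ is defined by eight different pp-formulas depending on the arities of $R_1, R_2$ and on whether $P_1 = L_2$ or $P_1 \neq L_2$, the argument will proceed by treating these cases uniformly wherever possible and pointing out the (minor) bookkeeping differences in the gluing variables otherwise. The key structural fact to exploit throughout is that in every case the glued relation is obtained by existentially quantifying out the middle pair of coordinates, identifying the ``$D$-side'' projection of $R_1$ with the ``$D$-side'' projection of $R_2$ (possibly after a coordinate swap when $P_1 \neq L_2$, which is exactly what the differing formulas encode).

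First I would verify condition~\ref{TernaryImp:PPDefn}: since $R_1, R_2, C_1, D_1, F_1$ are all pp-definable in $\structB$, and $R_3$ is defined from $R_1, R_2$ by a pp-formula, $R_3$ and $C_1, F_1$ are pp-definable in $\structB$. Next, for the projection conditions~\ref{TernaryImp:Proj12} and~\ref{TernaryImp:Proj23} I would show $\proj_{(i,j)} R_3 = C$ and $\proj_{(k,l)} R_3 = F$ with the index pairs dictated by $L_1$ and $P_2$; the $\supseteq$ inclusions follow from the tuple-propagation claim applied with $O_1, O_3$ ranging over all orbitals in $C$ and $F$, while the $\subseteq$ inclusions follow because any tuple in $R_3$ witnesses, via the quantified variables, that its relevant projection lies in $\proj R_1 = C$ (resp.\ $\proj R_2 = F$). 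The efficient-entailment condition~\ref{TernaryImp:EffEnt} is the heart of the matter: I would argue that $R_3$ entails $(C_1(x_i,x_j) \implies F_1(x_k,x_l))$ by \emph{chaining} the two efficient entailments, i.e.\ $C_1$ at the start forces $D_1$ on the glued middle pair (because $R_1$ efficiently entails $C_1 \Rightarrow D_1$), which in turn forces $F_1$ at the end (because $R_2$ efficiently entails $D_1 \Rightarrow F_1$); the strictness $C_1 \subsetneq \proj R_3$ and $F_1 \subsetneq \proj R_3$ again comes from the tuple-propagation claim, which supplies tuples witnessing projections strictly larger than $C_1$ and $F_1$. Condition~\ref{TernaryImp:EntEq}, that $R_3$ entails no equalities, I would handle by checking that no forced equality could arise from the gluing: any equality $x_p = x_q$ entailed by $R_3$ would, after substituting witnesses for the quantified variables, force an equality entailed by $R_1$ or $R_2$, contradicting the hypothesis that neither entails equalities.

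For the tuple-propagation claim I would argue directly: given an $O_1 O_2$-tuple in $R_1$ and (depending on the $P_1$ versus $L_2$ case) an $O_2 O_3$- or $O_2^{-1} O_3$-tuple in $R_2$, the matching of the middle coordinates is precisely an instance of the gluing already analyzed in Observation~\ref{obs:AllInjective}. Since $\structA$ is homogeneous and liberal, there is an automorphism aligning the shared middle pair of the two tuples, after which the resulting glued assignment satisfies all atomic formulas in the relevant defining formula~(\ref{Circ:44same})--(\ref{Circ:33different}) and yields an $O_1 O_3$-tuple in $R_3$; the swap between the $P_1 = L_2$ and $P_1 \neq L_2$ formulas is exactly what ensures the middle orbitals are inverse-compatible in both cases.

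\textbf{The main obstacle} I anticipate is the case analysis itself: the eight defining formulas differ in which coordinates of $R_1$ and $R_2$ are quantified away and in how the $L_i, P_i$ orientations interact, so the book-keeping of which index pair realizes $C$, $D$, $F$ in each formula must be done carefully and consistently, lest the orientation labels $L_1, P_2$ in the conclusion fail to match the actual projections of $R_3$. The conceptual content, however, is uniform across all cases — gluing along a common middle relation $D$ (with $D_1$ as the ``active'' sub-relation) and chaining the two efficient entailments — so I would prove one representative quaternary-quaternary case in full detail and then remark that the remaining seven cases follow by the same argument with the coordinate indices adjusted as recorded in Definition~\ref{def:Circ}.
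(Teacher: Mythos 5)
Your treatment of the projections, of the chained efficient entailment, and of the tuple-propagation claim matches the paper's proof in substance (gluing along the middle pair via homogeneity, exactly as in Observation~\ref{obs:AllInjective}). The genuine gap is in your verification that $R_3$ entails no equalities. You argue that any equality entailed by $R_3$ would, after substituting witnesses for the quantified variables, ``force an equality entailed by $R_1$ or $R_2$''. That reduction only makes sense for an equality between two variables lying in the scope of the same $R_i$ (and those are better excluded via the projection conditions anyway). It does not address cross-block equalities such as $x_1 = x_3$ in formula~(\ref{Circ:44same}), where $x_1$ occurs only in the $R_1$-atom and $x_3$ only in the $R_2$-atom: neither relation alone constrains the pair $(x_1,x_3)$, so your substitution produces no equality entailed by $R_1$ or by $R_2$. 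The homogeneity gluing fixes only the orbit of the shared middle pair; the orbital realized between $x_1$ and $x_3$ in the amalgam is not controlled by it, and ruling out that it is always $=$ requires a real argument.

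The paper closes this step differently: it exhibits an \emph{injective} tuple in $R_3$, which immediately precludes any entailed equality. To get one it needs injective tuples in $R_1$ and $R_2$ with matching middle orbitals, so that the second item of Observation~\ref{obs:AllInjective} applies and liberality lets one realize an arbitrary (in particular anti-reflexive) orbital across the two blocks; the existence of such injective tuples is exactly Corollary~\ref{cor:InjQuaternaryTuple} (resp.\ Corollary~\ref{cor:InjTernaryTuple}), which in turn rests on the strict-width lemmas of Section~\ref{sect:SomeImp}. Your proposal never invokes these corollaries, and ``entails no equalities'' alone does not yield an injective tuple (a relation entailing $x_1 = x_2 \vee x_3 = x_4$ entails no single equality yet contains no injective tuple). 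You should either import the corollaries as the paper does, or supply a direct amalgamation argument for every cross-block pair of coordinates; as written, this step does not go through.
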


\begin{proof}
Consider first the case where both $R_1, R_2$ are quaternary implications such that 
$L_2 = P_1$ and let $(a_1, a_2, a_3, a_4)$ and $(b_3, b_4, b_5, b_6)$ be an $O_1 O_2$-tuple in $R_1$ and an $O_2 O_3$-tuple in $R_2$ for some orbitals $O_1, O_2, O_3$, respectively.
Since $\structA$ is homogeneous, we have that there exists an automorphism $\alpha$ of $\structA$ sending $(a_3, a_4)$ to $(b_3, b_4)$. Let $(b_1, b_2) = (\alpha(a_1), \alpha(a_2))$. It is now easy to see that an assignment
$\assignment(x_1) = b_1$, $\assignment(x_2) = b_2$, $\assignment(y_1) = b_3$,
 $\assignment(y_2) = b_4$, $\assignment(x_3) = b_5$, $\assignment(x_4) = b_6$ satisfies both atomic formulae in~(\ref{Circ:44same}) and since $(b_5, b_6)\in O_3$, it provides the desired $O_1 O_3$-tuple in $R_3$.
Since $\proj_{3,4} R_1 = \proj_{1,2} R_2$, we have that $\proj_{1,2} R_3 = \proj_{1,2} R_1$ and $\proj_{3,4} R_2 = \proj_{3,4} R_3$.
Further,  $R_1(x_1, x_2, y_1, y_2)$ entails  $(C_1(x_i, x_j) \implies D_1(y_k, y_l))$ 
and $R_2(y_1, y_2, x_3, x_4)$ entails $(D_1(y_k, y_l) \implies F_1(x_m, x_n))$ where $i,j,k,l \in [2]$ and $m,n \in \{3,4\}$. It follows that 
$R_3$ entails $(\eta(x_1, x_2, x_3, x_4) \equiv (C_1(x_i, x_j) \implies F_1(x_k, x_l)))$. Since $C_1 \subsetneq \proj_{1,2} R_3 = \proj_{1,2} R_1$ and $F_1 \subsetneq \proj_{3,4} R_3 = \proj_{3,4} R_2$, it follows that $\eta$ is efficiently entailed by $R_3$. In order to complete the proof for this case, it is enough to prove that $R_3$ entails no equalities. Indeed, since $R_1$ entails no equalities, it contains an $O_1 O_2$-tuple where $O_1$ is anti-reflexive. By Corollary~\ref{cor:InjQuaternaryTuple}, it follows that $R_3$ contains an injective tuple $t'$ with $(t'[1], t'[2]) \in O_1$.
It implies that $R_3$ implies no equalities, and hence that
$R_3 := R_1 \circ R_2$ is a 
$(\structB, C,F,C_1, F_1, L_1, P_2)$-implication.

The proof in the case where both $R_1$ and $R_2$ are quaternary implications but 
$L_2 \neq P_1$ is similar with a difference that we look at an $O_1 O_2$-tuple $(a_1, a_2, a_3, a_4)$ and an $O_2^{-1} O_3$-tuple $(b_3, b_4, b_5, b_6)$ 
and search for an automorphism $\alpha$ that sends $(a_3, a_4)$ to $(b_4, b_3)$. 

Further, the proof in all other cases in analogous to either of the cases considered above. In the last case where $R_3$ is a ternary implication, we use Corollary~\ref{cor:InjTernaryTuple} instead of~\ref{cor:InjQuaternaryTuple}.  
\end{proof}

\subsection{Bipartite Digraph of Implications}
\label{sect:Bipartite}

In this subsection we introduce a bipartite graph that reflects the structure of 
$OP$-tuples inside a pair of ternary or quaternary implications. Observe that the relations $(C_i(x_1, x_2) \wedge D_i(x_2, x_3))$ with $i \in [2]$ definable in liberal finitely bounded homogeneous binary cores, which are  important ingredients of critical ternary relations, have all possible $OP$-tuples such that $O \subseteq C_i$ and $P \subseteq D_i$.

Consider a finitely bounded  homogeneous structure $\structA$ and $C \subseteq A^2$. We will write $\vertices_L(C)$ and $\vertices_R(C)$ 
for the set $\{  O_L \mid O \textrm{ is an orbital contained in  } C \}$  
and the set $\{  O_R \mid O \textrm{ is an orbital contained in  } C \}$, respectively.
To keep things simple, we say that a 
$(\structB, C, C, C_1,  C_1, L, P)$-implication 
is  a $(\structB, C, C_1, L, P)$-implication.

\begin{definition}
Let the relation $R_1$ and the relation $R_2$ be both $(\structB, C, C_1, \rightarrow,$
$ \leftarrow)$-implications. We define  a bipartite directed graph 
$\bipartite_{R_1, R_2}$ over two disjoint sets of vertices $\vertices_L(C)$ and $\vertices_R(C)$. 
The digraph $\bipartite_{R_1, R_2}$ contains 
\begin{itemize}
\item an arc $(O_L, P_R) \in \vertices_L(C) \times \vertices_R(C)$ if 
the relation $R_1$ contains a tuple   $(a_1, a_2, a_3, a_4)$
such that $(a_1, a_2) \in O$ and $(a_3, a_4) \in P^{-1}$ and
\item an arc $( P_R, O_L)   \in \vertices_R(C) \times \vertices_L(C)$ if 
the relation $R_2$ contains a tuple   $(a_1, a_2, a_3, a_4)$
such that $(a_1, a_2) \in P$ and $(a_3, a_4) \in O^{-1}$.
\end{itemize}

We say that $\{ O_L , P_R \}$ is a symmetric edge in $\bipartite_{R_1, R_2}$ if
it contains both an arc $(O_L,P_R)$ and $(P_R, O_L)$.
\end{definition}

We say that a subset $S$ of the vertices of a digraph is a \emph{strongly connected component} if it is a maximal set of vertices such that for any two vertices  $u, v \in S$
there is a path connecting $u$ and $v$. A set of vertices $S$
 is a set of strongly connected components if it can be partitioned into $S_1, \ldots, S_k$ so that every $S_i$ with $i \in [k]$ is a strongly connected component. We say that a (set of) strongly connected components is a \emph{sink} if every arc originating in $S$ finishes in $S$ and that $S$ is a \emph{source} if every arc finishing in $S$ also originates in $S$.

\begin{lemma}
\label{lem:TwoComponents}
Let $R_1, R_2$ be $(\structB, C, C_1, \rightarrow, \leftarrow)$-implications, $\structB$ a first-order expansion of a liberal finitely bounded homogeneous binary core $\structA$. Then  $(\vertices_L(C_1) \cup \vertices_R(C_1))$ is a set of strongly connected components which is a sink in $\bipartite_{R_1, R_2}$. 

Moreover, there exist   non-empty $D_1, F_1 \subseteq C \setminus C_1$ such that $(\vertices_L(D_1) \cup \vertices_R(F_1))$ is a source in $\bipartite_{R_1, R_2}$.
\end{lemma}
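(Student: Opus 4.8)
The plan is to read both assertions off the finite bipartite digraph $\bipartite_{R_1,R_2}$ after unfolding the implication shorthand. Since $R_1$ and $R_2$ are quaternary $(\structB, C, C, C_1, C_1, \rightarrow, \leftarrow)$-implications, the vertex sets of $\bipartite_{R_1,R_2}$ force $\proj_{1,2}R_i = C$ and $\proj_{3,4}R_i = C^{-1}$, while $C_1$ is a union of orbitals with $C_1 \subsetneq C$ and each $R_i$ efficiently entails $C_1(x_1,x_2)\implies C_1(x_4,x_3)$. I would also record the elementary fact that for a single orbital $O$ the reverse $O^{-1}$ is again an orbital, so a pair lying in both $O^{-1}$ and a union of orbitals $S$ forces $O^{-1}\subseteq S$. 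Finiteness of the number of orbitals in $C$ (by $\omega$-categoricity) makes $\bipartite_{R_1,R_2}$ a finite digraph, so its condensation into strongly connected components is a finite acyclic digraph.

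For the sink claim I would verify that no arc leaves $T := \vertices_L(C_1)\cup\vertices_R(C_1)$. Consider an $R_1$-arc $(O_L,P_R)$ with $O_L\in\vertices_L(C_1)$, witnessed by $(a_1,a_2,a_3,a_4)\in R_1$ with $(a_1,a_2)\in O\subseteq C_1$ and $(a_3,a_4)\in P^{-1}$; the efficient entailment gives $(a_4,a_3)\in C_1$, so $(a_3,a_4)\in C_1^{-1}$, whence $P^{-1}\subseteq C_1^{-1}$ and $P_R\in\vertices_R(C_1)$. The same computation applied to $R_2$-arcs $(P_R,O_L)$ leaving $\vertices_R(C_1)$ lands back in $\vertices_L(C_1)$. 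Thus $T$ is a sink, and since a sink is closed under reachability it is closed under taking strongly connected components, i.e. it is a union of such components.

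For the source claim I would first note two global features of $\bipartite_{R_1,R_2}$. Because $\proj_{3,4}R_1 = \proj_{3,4}R_2 = C^{-1}$, every vertex has in-degree at least one: each left vertex $O_L$ receives an $R_2$-arc and each right vertex $P_R$ receives an $R_1$-arc. Writing $V := \vertices_L(C)\cup\vertices_R(C)$ for the full vertex set, the fact that $C_1\subsetneq C$ makes $U := V\setminus T$ non-empty, and by the sink property there are no arcs from $T$ into $U$. I would then take $\mathcal{S}$ to be a source component in the condensation of the subgraph induced on $U$. The absence of arcs from $T$ into $U$ shows that every directed cycle through a vertex of $\mathcal{S}$ stays inside $U$, so $\mathcal{S}$ is genuinely a strongly connected component of the whole digraph, and it remains a source there. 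Finally, since $\mathcal{S}$ is a non-empty source and every vertex has an incoming arc whose tail (being of the opposite side) must also lie in $\mathcal{S}$, the component $\mathcal{S}$ contains both a left and a right vertex. Setting $D_1 := \{\,O : O_L\in\mathcal{S}\,\}$ and $F_1 := \{\,P : P_R\in\mathcal{S}\,\}$ yields non-empty subsets of $C\setminus C_1$ (as $\mathcal{S}$ is disjoint from $T$ and orbitals are either inside or disjoint from $C_1$) with $\vertices_L(D_1)\cup\vertices_R(F_1)=\mathcal{S}$ a source, as required.

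The step I expect to be the main obstacle is the source half. It is tempting but wrong to grab an arbitrary source component of the whole digraph, since such a component could overlap the sink $T$; the real content is to locate the source strictly inside $U = V\setminus T$ and to check that restricting to the induced subgraph preserves its status as a source strongly connected component, which is exactly where the sink property (no arcs $T\to U$) is used, together with the in-degree bookkeeping that forces $\mathcal{S}$ to meet both sides and hence makes both $D_1$ and $F_1$ non-empty.
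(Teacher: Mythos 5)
Your proof is correct and follows essentially the same route as the paper: the sink property is read off directly from the efficient entailment $C_1(x_1,x_2)\implies C_1(x_4,x_3)$ in both implications, and the source is located outside $\vertices_L(C_1)\cup\vertices_R(C_1)$ using finiteness, $C_1\subsetneq C$, and the fact that $\proj_{1,2}R_i=\proj_{4,3}R_i=C$ forces every vertex to have an incoming arc. The paper states this second half in one sentence; your write-up supplies the details (restriction to the complement of the sink, preservation of strongly connected components there, and the alternation argument making both $D_1$ and $F_1$ non-empty), all of which are the intended justification.
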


\begin{proof}
For the first part of the lemma observe that  
every arc originating in some vertex in $(\vertices_L(C_1) \cup \vertices_R(C_1))$
finishes in a vertex in $(\vertices_L(C_1) \cup \vertices_R(C_1))$. Indeed, it follows by the fact that both $R_1, R_2$ are $(\structB, C, C_1, \rightarrow, \leftarrow)$-implications.

Observe that the second part of the lemma follows by the facts that the component
$(\vertices_L(C_1) \cup \vertices_R(C_1))$ is a sink in $\bipartite_{R_1, R_2}$,
the fact that $C_1 \subsetneq C$ and that $\bipartite_{R_1, R_2}$ is finite and smooth, i.e., has no sources and no sinks. 
The graph $\bipartite_{R_1, R_2}$ is smooth since 
$\proj_{1,2} R_1 = \proj_{1,2} R_2 = \proj_{4,3}  R_1 = \proj_{4,3} R_2$. 
\end{proof}

The following lemma may be simply proved by induction using Lemma~\ref{lem:ImplicationsComposition}.

\begin{lemma}
\label{lem:CriticalPairPath}
Let $R_1, R_2$ be $(\structB, C, C_1, \rightarrow, \leftarrow)$-implications, $\structB$ a first-order expansion of a liberal finitely bounded homogeneous binary core $\structA$ and
$O_L, P_R$ a pair of vertices in $\bipartite_{R_1, R_2}$ such that there is a path from $O_L$ to $P_R$ of length $2k + 1$ for some $k \in \N$.
Then $(R_1 \circ R_2)^{\circ k} \circ R_1$  is a $(\structB, C, C_1, \rightarrow, \leftarrow)$-implication and
has an $OP^{-1}$-tuple.
\end{lemma}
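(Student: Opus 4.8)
The plan is to prove this by induction on $k$, using Lemma~\ref{lem:ImplicationsComposition} as the engine that both composes implications correctly and tracks the existence of $OP$-tuples along the way. The statement is essentially that a path of odd length $2k+1$ in $\bipartite_{R_1, R_2}$, which alternates between the left vertex class $\vertices_L(C)$ and the right vertex class $\vertices_R(C)$, can be ``read off'' as a composed implication witnessing an $OP^{-1}$-tuple. Since the path has odd length $2k+1$ and starts at a left vertex $O_L$, it must end at a right vertex, consistent with the target $P_R$; this parity bookkeeping is exactly what $\circ$ is designed to respect.

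First I would handle the base case $k = 0$. Here the path has length $1$, i.e.\ a single arc from $O_L$ to $P_R$ in $\bipartite_{R_1, R_2}$. By the definition of the bipartite digraph, an arc $(O_L, P_R) \in \vertices_L(C) \times \vertices_R(C)$ exists precisely when $R_1$ contains a tuple $(a_1, a_2, a_3, a_4)$ with $(a_1, a_2) \in O$ and $(a_3, a_4) \in P^{-1}$, i.e.\ $R_1$ has an $OP^{-1}$-tuple. And $(R_1 \circ R_2)^{\circ 0} \circ R_1$ is just $R_1$ itself, which is a $(\structB, C, C_1, \rightarrow, \leftarrow)$-implication by hypothesis. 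So the base case is immediate from unwinding the definitions.

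For the induction step, suppose the claim holds for $k$ and consider a path of length $2(k+1) + 1 = 2k + 3$ from $O_L$ to $P_R$. I would split off the last two arcs of the path: since the path alternates classes and has odd length, the segment from $O_L$ of length $2k+1$ reaches some right vertex $Q_R$, then an arc $(Q_R, S_L)$ (witnessed by $R_2$) goes to a left vertex $S_L$, then a final arc $(S_L, P_R)$ (witnessed by $R_1$) closes the path. By the induction hypothesis, $(R_1 \circ R_2)^{\circ k} \circ R_1$ is a $(\structB, C, C_1, \rightarrow, \leftarrow)$-implication with an $OQ^{-1}$-tuple. The arc $(Q_R, S_L)$ means $R_2$ has a $QS^{-1}$-tuple, and the arc $(S_L, P_R)$ means $R_1$ has an $SP^{-1}$-tuple. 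I would then compose: first $((R_1 \circ R_2)^{\circ k} \circ R_1) \circ R_2$, matching the right-end orbital $Q^{-1}$ of the inductive implication against the left-end orbital $Q$ of $R_2$'s tuple via the tuple-tracking clauses of Lemma~\ref{lem:ImplicationsComposition} (using the $P_1 \neq L_2$ branch, so that $Q^{-1}$ meets $Q$ and the new right-end orbital becomes $S^{-1}$), yielding $(R_1 \circ R_2)^{\circ(k+1)}$ with an $OS^{-1}$-tuple and the label $(\rightarrow, \leftarrow)$ preserved. Composing once more with $R_1$, whose left end carries $S$, again through the appropriate branch of Lemma~\ref{lem:ImplicationsComposition}, produces $(R_1 \circ R_2)^{\circ(k+1)} \circ R_1$ with an $OP^{-1}$-tuple, as required.

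The main obstacle I anticipate is the careful bookkeeping of the $L$ and $P$ labels and the orientation conventions (which coordinate pair of each implication counts as ``left'' versus ``right,'' and when an orbital gets inverted to $O^{-1}$). The bipartite digraph encodes $R_1$-arcs using $(a_3, a_4) \in P^{-1}$ and $R_2$-arcs symmetrically, so I would need to verify that each successive application of $\circ$ selects the branch of Definition~\ref{def:Circ} that correctly matches the $P_i = L_{i+1}$ or $P_i \neq L_{i+1}$ condition, ensuring the composed relation stays a $(\structB, C, C_1, \rightarrow, \leftarrow)$-implication rather than drifting to some other $(L, P)$ pair. Because both $R_1$ and $R_2$ are $(\rightarrow, \leftarrow)$-implications over the same $(C, C_1)$, the labels are uniform, which keeps this manageable, but the inversion of orbitals at each alternation step is the detail most likely to require explicit checking against Lemma~\ref{lem:ImplicationsComposition}'s two bullet points. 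Everything else is a routine unfolding of the inductive hypothesis, which is why the statement can be asserted as provable ``simply by induction.''
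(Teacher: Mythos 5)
Your proof is correct and follows exactly the route the paper intends: the paper gives no details beyond stating that the lemma "may be simply proved by induction using Lemma~\ref{lem:ImplicationsComposition}," and your induction on $k$, with the base case unwinding the definition of an arc in $\bipartite_{R_1,R_2}$ and the step peeling off the last $R_2$-arc and $R_1$-arc while invoking the $P_1 \neq L_2$ branch of the composition lemma to track the orbitals through the inversions, is precisely that argument spelled out.
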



We will say that the relation $R$ is a complete $(\structB, C, C_1, \rightarrow, \leftarrow)$-implication if every strongly connected component in $\bipartite_{R, R}$ satisfies both of the following:
\begin{itemize}
\item  it is  of the form
$\vertices_L(D_1) \cup \vertices_R(D_1)$ for some $D_1 \subseteq C$,
\item every strongly connected component of $\bipartite_{R, R}$ is a complete bipartite digraph.
\end{itemize}

\paragraph{Example.} Let  $\structB$ be a first-order expansion of a liberal finitely bounded homogeneous binary core and $R$ a critical ternary relation over $(\structB, C_1,  D_1, $
$C_1^{-1}, D_1^{-1})$ given by $$(C_1(x_1, x_2) \wedge C_1^{-1}(x_2, x_3)) \vee(D_1(x_1, x_2) \wedge D_1^{-1}(x_2, x_3)).$$ Observe that the relation $R$ is a complete 
$(\structB, C_1 \cup D_1, C_1, \rightarrow, \leftarrow)$-implication.

We finally prove that we can always define a complete implication given two appropriate implications. 

\begin{lemma}
\label{lem:CompleteImplication}
Let $R_1, R_2$ be both  
$(\structB, C, C_1, \rightarrow, \leftarrow)$-implications where $\structB$ is a first-order expansion of a liberal finitely bounded homogeneous binary core $\structA$. Then they pp-define  a complete $(\structB, C, C_1, \rightarrow, \leftarrow)$-implication.
\end{lemma}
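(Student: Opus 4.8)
The plan is to read off the combinatorics of $\bipartite_{R_1, R_2}$ and then realize the desired relation as an iterated composition, using Lemma~\ref{lem:CriticalPairPath} as the dictionary between walks and tuples. By Lemma~\ref{lem:TwoComponents} the digraph $\bipartite_{R_1, R_2}$ is finite and smooth (no sources, no sinks), and since it is bipartite it has no loops, so every strongly connected component contains at least one vertex of $\vertices_L(C)$ and one of $\vertices_R(C)$. First I would pass to the condensation and record the period of each component; because $\bipartite_{R_1, R_2}$ is bipartite, all cycles have even length, so within a component two vertices $O_L, P_R$ of opposite side are joined only by odd walks. The key translation is that for the implication $R := (R_1 \circ R_2)^{\circ k} \circ R_1$, which by Lemma~\ref{lem:ImplicationsComposition} is again a $(\structB, C, C_1, \rightarrow, \leftarrow)$-implication, Lemma~\ref{lem:CriticalPairPath} gives: $R$ has an $OP^{-1}$-tuple exactly when there is a walk of length $2k+1$ from $O_L$ to $P_R$ in $\bipartite_{R_1, R_2}$. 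Hence the forward arcs of $\bipartite_{R,R}$ are precisely the pairs joined by a walk of this fixed length, and the backward arcs (the $R_2$-role of $R$) are the pairs $O_L, P_R$ for which there is such a walk from $P_L$ to $O_R$.

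The saturation step is then a standard primitivity argument. Within one strongly connected component, strong connectivity lets me pad any walk by full cycles, so for all large $k$ lying in the correct residue class modulo the component's period, every pair $O_L, P_R$ of the component is joined by a walk of length \emph{exactly} $2k+1$. Choosing $k$ large and congruent modulo the least common multiple of the periods of all components simultaneously, I obtain a single $R$ for which each strongly connected component of $\bipartite_{R,R}$ is a complete bipartite digraph, with left side $\{O_L : O_L \in S\}$ and right side $\{P_R : P_R \in S\}$ for the corresponding component $S$ of $\bipartite_{R_1, R_2}$. That $R$ entails no equalities and still has $\proj_{1,2} R = \proj_{4,3} R = C$ with efficient entailment of $C_1$ follows from the composition machinery together with Corollary~\ref{cor:InjQuaternaryTuple} (or Corollary~\ref{cor:InjTernaryTuple} in the ternary case), exactly as in the proof of Lemma~\ref{lem:ImplicationsComposition}.

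It remains to force every component into the \emph{symmetric} shape $\vertices_L(D_1) \cup \vertices_R(D_1)$. For this I would intersect $R$ with its reverse $\bar R(x_1,x_2,x_3,x_4) := R(x_4,x_3,x_2,x_1)$, which is again a $(\structB, C, C_1, \rightarrow, \leftarrow)$-implication over the same $C$ and whose realized tuple-types are the transpose of those of $R$ (an $O P^{-1}$-tuple of $\bar R$ corresponds to a $P O^{-1}$-tuple of $R$). Since every relation built by the constructions $\bowtie$, $\bowtie_3$ and $\circ$ contains \emph{all} injective tuples of any type it realizes (Observations~\ref{obs:AllInjective}, \ref{obs:ContainsFullBowtie} and~\ref{obs:ContainsFullBowtieThree}), intersection behaves well on injective tuples, so the complete components of $\bipartite_{R \cap \bar R, R \cap \bar R}$ collapse a component with left side $D$ and right side $E$ to its balanced part on $D \cap E$; this makes each surviving component complete bipartite of the required symmetric form, and the resulting relation is verified to remain an implication as above.

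The hard part will be the interaction between this symmetrization and the requirement that the \emph{full} $C$ be preserved on both projections. Lemma~\ref{lem:TwoComponents} guarantees the sink $\vertices_L(C_1) \cup \vertices_R(C_1)$ is already balanced, but it allows the source $\vertices_L(D_1) \cup \vertices_R(F_1)$ with $D_1 \neq F_1$, so a naive reverse-intersection could delete an orbital $O \subseteq C \setminus C_1$ whose $O_L$ and $O_R$ lie in different components, breaking $\proj_{1,2} R = C$. The delicate point is therefore to guarantee, before or during saturation, that $O_L$ and $O_R$ share a strongly connected component for \emph{every} orbital $O \subseteq C$ — equivalently, to balance $\bipartite_{R_1, R_2}$ globally (for instance by first replacing the pair $(R_1, R_2)$ with a role-symmetrized pair) while keeping the period bookkeeping of the saturation step intact. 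Reconciling periodicity, balancing, and orbital-preservation is where the argument needs the most care.
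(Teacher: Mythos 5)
Your overall strategy --- realizing walks in $\bipartite_{R_1,R_2}$ as tuples of iterated compositions $(R_1\circ R_2)^{\circ k}\circ R_1$ via Lemma~\ref{lem:CriticalPairPath} --- is the same as the paper's, but your execution of the saturation step has a genuine gap. You claim that within a strongly connected component, for all large $k$ in ``the correct residue class modulo the component's period'', \emph{every} pair $O_L, P_R$ of the component is joined by a walk of length exactly $2k+1$. This fails whenever the period $d$ of the component exceeds $2$: the vertex set then partitions into classes $V_0,\dots,V_{d-1}$ with every arc going from $V_i$ to $V_{i+1 \bmod d}$, so for a fixed $O_L \in V_i$ the walks of length $2k+1$ reach only the vertices of the single class $V_{i+2k+1 \bmod d}$, and different targets $P_R$ inside the same component require different residue classes of $k$. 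Nothing in the hypotheses forces the period to be $2$ --- a priori a component of $\bipartite_{R_1,R_2}$ could be a directed $4$-cycle with no symmetric edge, which is smooth and strongly connected --- so no single choice of $k$ saturates such a component, and your one-shot construction does not make it complete bipartite. The paper avoids exactly this by an iterative local repair: it picks one ``loosely connected'' pair $O_L, P_R$ (the arc $(P_R,O_L)$ present, shortest path from $O_L$ to $P_R$ of odd length $2k+1$ with $k\geq 1$), forms $(R_1\circ R_2)^{\circ k}\circ R_1$ to add that single missing arc while Lemma~\ref{lem:CriticalPairPath} guarantees all previously symmetric edges survive, and terminates because the number of loosely connected pairs strictly decreases; each repair closes a $2$-cycle, which is precisely what drives the periods down to $2$, and the final relation is obtained as $R'\circ R''$ from the resulting pair.

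A second, independent problem is that you explicitly leave unresolved the requirement that every component of the final digraph be of the balanced form $\vertices_L(D_1)\cup\vertices_R(D_1)$ while both $\proj_{1,2}$ and the reversed projection remain equal to $C$: your proposed intersection with the reversed relation can, as you yourself note, delete orbitals whose left and right copies lie in different components and thereby destroy the projection condition required of a $(\structB, C, C_1, \rightarrow, \leftarrow)$-implication. Declaring this ``where the argument needs the most care'' does not discharge it; as written the proposal does not establish the lemma.
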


\begin{proof}
We say that two vertices in a bipartite digraph 
are loosely connected if  they are in the same strongly connected component of  the digraph and the shortest cycle they are both involved in is of length strictly greater than $2$. 

Consider now two $(\structB, C, C_1, \rightarrow, \leftarrow)$-implications
$R_1, R_2$ such that  not every strongly connected component of $\bipartite_{R_1, R_2}$ is complete. Then there are loosely connected vertices $O_L, P_R$ in $\bipartite_{R_1, R_1}$ such that there is an arc $(P_R, O_L)$ in 
$\bipartite_{R_1, R_2}$  but the shortest path from $O_L$ to $P_R$ is of length $2k+1$ with $k \geq 1$. By Lemma~\ref{lem:CriticalPairPath}, we have that $R_3 := (R_1 \circ R_2)^{\circ k} \circ R_1$ contains an $OP^{-1}$-tuple as well as all
$O' P'^{-1}$-tuples such that $\{ O', P' \}$ is a symmetric edge in $\bipartite_{R_1, R_2}$. Since $R_3$ is also a  $(\structB, C, C_1, \rightarrow, \leftarrow)$-implication, we have that the number of loosely connected vertices in 
$\bipartite_{R_3, R_2}$ is strictly less than in $\bipartite_{R_1, R_2}$. Since the graphs under consideration are finite, it is enough to repeat the whole procedure a finite number of times in order to obtain a pair of implications $R', R''$ without losely connected vertices.

In order to complete the proof of the lemma observe that $R' \circ R''$ is the desired complete $(\structB, C, C_1, \rightarrow, \leftarrow)$-implication. 
\end{proof}

\section{Implicationally Simple Languages}

\label{sect:ImplicationallySimple}

In this section we characterize the relational width of a large class of constraint languages which we call implicationally simple. We start with a precise definition of $\graphinstance$.

\paragraph{Graph of a CSP-instance.} Let $\instance$ be a $(2, \maxbound_{\structA})$-minimal  instance
of $\csp(\structB)$ over variables $\V$ of which we assume that $\instance$ entails no equalities. We define the implication graph $\graphinstance$ of
$\instance$  to be a directed graph over vertices which are triples of the form $((v_1, v_2), C)$ where $v_1, v_2 \in \V$ and $C \subsetneq \mathcal{I}_{v_1,v_2}$ is a binary non-empty relation pp-definable in $\structB$. There is an arc in one of the two following situations:
\begin{itemize}
\item an arc $((x_1, x_2), C_1),((x_2, x_3), D_1))$ with pairwise different $x_1, x_2, x_3 \in \V$  if there exists a constraint $\constraint \in \instance$ whose scope contains $\{ x_1, x_2, x_3 \}$,
\begin{itemize}
\item $\proj_{x_1, x_2, x_3} \constraint = ((x_1, x_2, x_3), R')$ and
\item   $R'$ is a ternary 
$(\structB, \instance_{x_1, x_2}, C_1, \instance_{x_2, x_3}, D_1 )$-implication,
\end{itemize}

\item an arc $((x_1, x_2), C_1),((x_3, x_4), D_1))$ with paiwise different $x_1, x_2, x_3, x_4 \in \V$ if there exists a constraint $\constraint \in \instance$ whose scope contains $\{ x_1, x_2, x_3, x_4 \}$ and 
\begin{itemize}
\item $\proj_{x_1, x_2, x_3, x_4} \constraint = ((x_1, x_2, x_3, x_4), R')$ and 
\item  the relation $R'$ is a quaternary 
$(\structB, \instance_{x_1, x_2}, C_1, \instance_{x_3, x_4}, $
$D_1)$-implication.
\end{itemize}
\end{itemize}

\noindent
We are now ready to define a new class of languages.

\begin{definition}
\label{defn:impsimple}
Let $\structB$ be a first-order expansion of a finitely bounded homogeneous binary core $\structA$. 
We say that a structure $\structB$ is \emph{implicationally simple} if for every $(2, \maxbound_{\structA})$ instance $\instance$ of $\csp(\structB)$
the graph $\graphinstance$ is acyclic.

If $\structB$ is not implicationally simple, then we say that it is \emph{implicationally hard}. 
\end{definition}

Naturally, every acyclic $\graphinstance$ contains a sink which is a singleton $\{ ((v_1, v_2), C \})$
for some $v_1, v_2 \in \V$ and $C \subsetneq A^2$. We will now show that in this case an instance 
$\instance[(v_1,v_2) := C]$ obtained from $\instance$ by narrowing 
down the projection of every constraint in $\instance$ containing $(v_1, v_2)$ in its scope to $C$ has the good properties of $\instance$.

\begin{observation}
\label{obs:reduction}
Let $\instance$ be a non-trivial $(2, \maxbound_{\structA})$-minimal instance of  $\csp(\structB)$ where $\structB$ is a first-order expansion of a finitely bounded homogeneous binary core $\structA$ such that $\{ ((v_1, v_2), C) \}$ is a sink in $\graphinstance$.
Then the instance $\instance[(v_1,v_2) := C]$
is also a non-trivial $(2, \maxbound_{\structA})$-minimal instance of $\csp(\structB)$.
\end{observation}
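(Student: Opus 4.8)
The plan is to check, for $\instance' := \instance[(v_1,v_2):=C]$, the three requirements hidden in ``non-trivial $(2,\maxbound_{\structA})$-minimal instance'': non-triviality, the covering condition, and the agreement of all projections onto subsets of at most two variables. Writing $\constraint = ((z_1,\dots,z_k),R)$ for a constraint of $\instance$ and $R'$ for its relation after narrowing (so $R'=R$ unless $\{v_1,v_2\}$ lies in the scope, in which case $R' = \{\,t\in R \mid (t_{v_1},t_{v_2})\in C\,\}$, where $t_{v}$ denotes the entry of $t$ in the coordinate carrying the variable $v$), I would dispose of the easy requirements first and isolate the single place where the sink hypothesis is used.

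Non-triviality and covering I expect to be immediate. Narrowing changes no scope, so the covering condition is inherited verbatim from $\instance$. For non-triviality, $(2,\maxbound_{\structA})$-minimality of $\instance$ gives $\proj_{v_1,v_2}R = \instance_{v_1,v_2} \supseteq C \neq \emptyset$, so some tuple of $R$ meets $C$ and $R' \neq \emptyset$; constraints untouched by the narrowing stay non-empty because $\instance$ is non-trivial. I would also record two effects of narrowing that cost nothing: first, $\proj_{v_1,v_2}R' = \instance_{v_1,v_2}\cap C = C$ for every narrowed constraint, so all $(v_1,v_2)$-projections now agree and equal $C$; second, narrowing does not disturb any single-variable projection, since for a variable $w$ the relation $\proj_w R'$ is pp-definable in $\structB$, hence first-order definable in $\structA$, hence (as $\Aut(\structA)$ is transitive on $A$ by homogeneity, cf. the proof of Observation~\ref{obs:BinaryCoreOrbitals}) either $\emptyset$ or $A$; it is non-empty because $R'$ is, so it equals $A = \proj_w R$.

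The crux is to show that narrowing leaves the projection of every constraint onto every ordered pair $(c,d)$ with $\{c,d\} \neq \{v_1,v_2\}$ unchanged; given this, the pairwise projections of $\instance'$ still coincide because the unchanged ones already coincided in $\instance$, and the previous paragraph settles the pair $\{v_1,v_2\}$ and all singletons. I would argue by contradiction: suppose a narrowed constraint $\constraint$ satisfies $E' := \proj_{c,d}R' \subsetneq \proj_{c,d}R = \instance_{c,d}$ for some such $(c,d)$ in its scope. Then $E' \neq \emptyset$ (else $R' = \emptyset$), so $((c,d),E')$ is a vertex of $\graphinstance$, and by construction every $t\in R$ with $(t_{v_1},t_{v_2})\in C$ has $(t_c,t_d)\in E'$; hence the projection of $\constraint$ onto $\{v_1,v_2,c,d\}$ entails $(C(v_1,v_2)\implies E'(c,d))$. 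Combined with $C \subsetneq \instance_{v_1,v_2}$ and $E' \subsetneq \instance_{c,d}$ this is an efficient entailment (Definition~\ref{def:EffEntTernary} or~\ref{def:EffEntQuaternary}); the projection entails no equalities because $\instance$ does and projection preserves this, and its projections onto $(v_1,v_2)$ and $(c,d)$ are $\instance_{v_1,v_2}$ and $\instance_{c,d}$ by minimality. So, after reordering the three or four variables to fit the template, this projection is a ternary or quaternary implication and yields an arc from $((v_1,v_2),C)$ to $((c,d),E')$, contradicting that $\{((v_1,v_2),C)\}$ is a sink.

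The delicate part will be the orientation bookkeeping in this last step rather than any genuine difficulty: one must place the shared variable in the middle coordinate when $\{c,d\}$ meets $\{v_1,v_2\}$ and choose the parameters $L,P$ so that $C$ occupies the antecedent pair and $E'$ the consequent pair, and one must verify $E'\neq\emptyset$ so that the arc actually lands on a vertex --- both of which are handled above. Finally, since each $R'$ is pp-definable in $\structB$, the standing closure assumption on $\structB$ guarantees that $\instance'$ is again an instance of $\csp(\structB)$.
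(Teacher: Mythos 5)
Your proposal is correct and follows essentially the same route as the paper: non-triviality and the covering condition are immediate, and the only substantive point is that the sink property forbids any efficient entailment $(C(v_1,v_2)\implies D(c,d))$ with $D\subsetneq\instance_{c,d}$, which is exactly what would be needed for the narrowing to shrink some other pairwise projection. You phrase this step contrapositively and verify the vertex/arc side conditions (non-emptiness and pp-definability of $E'$, no entailed equalities, full projections, and the ternary reordering) somewhat more explicitly than the paper does, but the underlying argument is identical.
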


\begin{proof}
Since $\instance$ is $(2, \maxbound_{\structA})$-minimal, we have that $\instance[(v_1,v_2) := C]$ is non-trivial. To see that 
$\instance[(v_1,v_2) := C]$ is $(2, \maxbound_{\structA})$-minimal consider any constraint $\constraint = ((x_1, \ldots, x_r), R)$ in $\instance$ and any pair of variables $v_3, v_4 \subseteq \{ x_1, \ldots, x_r \}$. We have to show that for every orbital $O \subseteq \instance_{v_3, v_4}$ there is an assignment $\assignment: \{ x_1, \ldots, x_r \} \to A$ satisfying $R(x_1, \ldots, x_r)$ such that $(\assignment(v_3), \assignment(v_4)) \in O$. If $\{ v_1, v_2 \} \nsubseteq \{ x_1, \ldots, x_r \}$, then we are done by the fact that $\instance$ is $(2, \maxbound_{\structA})$-minimal. In the other case, since 
$\{ ((v_1, v_2), C) \}$ is a sink
 in $\graphinstance$,
the projection of $R(x_1, \ldots, x_r)$ to  $\{  v_1, v_2, v_3, v_4 \}$ does not efficiently entail 
$(C(v_1, v_2) \to D(v_3, v_4))$ for any $D \subsetneq I_{v_3, v_4}$. Hence there exists an assignment
$\assignment: \{ x_1, \ldots, x_r \} \to A$ satisfying $R(x_1, \ldots, x_r)$ such that $(\assignment(v_3), \assignment(v_4)) \in O$ and $(\assignment(v_1), \assignment(v_2)) \in C$. It completes the proof of the observation.
\end{proof}

The above observation is the key to showing that 
every non-trivial $(2, \maxbound_{\structA})$-minimal instance of $\csp(\structB)$ with  implicationally simple $\structB$  has a solution.

\begin{theorem}
\label{thm:ImpSimpleMinimal}
Let $\structA$ be a  finitely-bounded homogenous binary core 
and $\structB$ a first-order expansion of $\structA$ which is  implicationally simple. Then every non-trivial $(2, \maxbound_{\structA})$-minimal instance has a solution. 
\end{theorem}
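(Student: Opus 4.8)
The plan is to prove Theorem~\ref{thm:ImpSimpleMinimal} by induction on the number of variables $|\V|$ together with the total number of orbitals appearing across all binary projections $\instance_{x,y}$, using Observation~\ref{obs:reduction} as the inductive engine. The base case is when every binary projection $\instance_{x,y}$ is a single orbit and $\graphinstance$ is trivial; here the $(2,\maxbound_{\structA})$-minimality should already pin down a solution, since every pair of variables is forced into a unique orbital and the liberality of $\structA$ (no forbidden substructures of size $3,4,5,6$) together with finite boundedness means that any locally consistent assignment of orbitals, once it is specified on all pairs and agrees on all small substructures up to size $\maxbound_{\structA}$, extends to a genuine homogeneous structure and hence to a solution.

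For the inductive step I would argue as follows. First I would observe that since $\structB$ is implicationally simple, the graph $\graphinstance$ is acyclic, and every acyclic digraph with at least one vertex has a sink. As the excerpt already notes, a sink in $\graphinstance$ is necessarily a singleton $\{((v_1,v_2),C)\}$ for some variables $v_1,v_2$ and some proper subrelation $C \subsetneq \instance_{v_1,v_2}$. I would then apply Observation~\ref{obs:reduction} to form the instance $\instance[(v_1,v_2) := C]$, which is again a non-trivial $(2,\maxbound_{\structA})$-minimal instance of $\csp(\structB)$. The crucial point is that this new instance has strictly fewer orbitals in its binary projections, because the projection onto $(v_1,v_2)$ has been narrowed from $\instance_{v_1,v_2}$ down to the proper subrelation $C$; hence the induction measure strictly decreases and the inductive hypothesis yields a solution $\solution$ to $\instance[(v_1,v_2) := C]$. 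Since every solution of $\instance[(v_1,v_2):=C]$ is also a solution of $\instance$ (narrowing a projection only removes solutions, never adds them), this $\solution$ solves $\instance$ and the step is complete.

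The main obstacle I anticipate is the base case, not the inductive step: one must show that a $(2,\maxbound_{\structA})$-minimal instance in which \emph{no} further narrowing is possible — i.e.\ every $\graphinstance$-reachable narrowing has been exhausted and each $\instance_{x,y}$ is effectively a single orbital — genuinely has a solution rather than merely being locally consistent. This is exactly where liberality and finite boundedness must be combined: the forbidden substructures of $\structA$ all have size at most $\maxbound_{\structA}$, so $(2,\maxbound_{\structA})$-minimality guarantees that every subset of at most $\maxbound_{\structA}$ variables is contained in some constraint scope and that the assigned orbitals are mutually consistent on all such subsets. One then needs an amalgamation or back-and-forth argument showing that a family of pairwise orbital assignments that avoids every forbidden substructure of size up to $\maxbound_{\structA}$ can be realized by an actual embedding into $\structA$, which in turn gives the total solution $\solution : \V \to A$. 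I would want to make precise that the orbital data selected on all pairs, being free of any bound in $\bounds_{\structA}$, determines a finite $\tau$-structure that embeds into $\structA$ by the finite-boundedness criterion, and that this embedding is the sought solution.

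I should also flag a subtle point worth checking carefully in the write-up: the reduction in Observation~\ref{obs:reduction} is stated for a sink of the \emph{specific} form $\{((v_1,v_2),C)\}$, and one must ensure that after narrowing, the instance still entails no equalities (as required throughout for $\graphinstance$ to be well-defined), or else handle the equality-collapse case separately by identifying variables. Assuming the instance is kept in the "entails no equalities" regime used to define $\graphinstance$, the induction goes through cleanly; otherwise a preliminary reduction identifying forced-equal variables is needed before the main argument applies.
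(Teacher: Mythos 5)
Your proposal matches the paper's proof: the paper likewise iteratively narrows the instance at a sink of $\graphinstance$ via Observation~\ref{obs:reduction} until every binary projection $\instance_{v_i,v_j}$ is a single orbital, then handles forced equalities exactly as you flag at the end, by quotienting the induced $\tau$-structure on the variables by the equivalence relation coming from $=$, and concludes by the finite-boundedness criterion as in your base case. The only cosmetic difference is that liberality is neither assumed nor needed in this theorem; $(2,\maxbound_{\structA})$-minimality together with non-triviality already rules out any forbidden substructure of size at most $\maxbound_{\structA}$ in the quotient.
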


\begin{proof}
Indeed,  either  there exist $v_1,v_2 \in \V$ 
such that $I_{v_1,v_2}$ consists of at least two orbitals or all $I_{v_1,v_2}$ consist of exactly one orbital $O$.
In the former case, since all  
binary relations pp-definable in $\structB$ contain an orbital,  
the graph $\graphinstance$ contains at least one vertex $((v_1, v_2), C)$
for $v_1, v_2 \in \V$ such that $\instance_{v_1,v_2}$ consists of at least two orbits.
Since $\structB$ is implicationally simple, the graph $\graphinstance$ is acyclic. In particular, it has a sink $\{ ((v_1, v_2), C) \}$ 
for some $C$. Hence, we can use Observation~\ref{obs:reduction}
and simplify the considered instance by  replacing $\instance$
with $\instance[(v_1,v_2) := C]$. The new instance is also non-trivial and 
$(2, \maxbound_{\structA})$-minimal. The process of simplifying the instance terminates when every $\instance_{v_1,v_2}$ with $v_1,v_2 \in \V$ consists of one orbit only. Clearly, every solution to this simplified instance is a solution to the original one.

From now we assume that $\instance$ is $(2, \maxbound_{\structA})$-minimal,
 $\V = \{ v_1, \ldots, v_n \}$ and for all $i,j \in [n]$
we have that $\instance_{v_i, v_j}$ consists of exactly one orbital. 
Consider a structure $\Delta$ over the signature $\tau$ of $\structA$ whose elements are variables in $\V$ and $(v_i, v_j) \in R^{\Delta}$ with $R \in \tau$ if and only if $\instance_{v_i, v_j} = R^{\structA}$. Observe first that   $E$ equal to  $=^{\Delta} \cup \{ (v_i, v_i) \mid i \in [n] \}$ is an equivalence relation. It is clearly
reflexive and symmetric. Suppose $E$ is not transitive. Then there are $v_i, v_j , v_k \in V$ such that $(v_i, v_j) \in =^{\Delta}$ and $(v_j, v_k) \in =^{\Delta}$ but $v_i, v_k \notin =^{\Delta}$. 
It implies that a constraint $\constraint$ whose scope contains $\{ v_i, v_j, v_k \}$, which is in $\instance$ by $(2, \maxbound_{\structA})$-minimality, has an empty relation. It contradicts the fact that $\instance$ is non-trivial.

In order to complete the proof of the theorem we will show that $\Delta / E$ admits an embedding to $\structA$. Assume it is not the case. Then there exists a finite structure $\Gamma \in \bounds_{\structA}$ that embeds into $\structA$. Since the size of $\Gamma$ is at most $\maxbound_{\structA}$, it contradicts the fact that $\instance$ is $(2, \maxbound_{\structA})$-minimal and non-trivial. It follows that $\Delta / E$ embeds and $\Delta$ homomorphically maps into $\structA$. It follows that $\instance$ has a solution.
\end{proof}

\section{FO-Expansions of Liberal  Finitely Bound\-ed Homogeneous Binary Cores without Bounded Strict Width}

\label{sect:LiberalBinaryCores}

In this section we show that if a first-order expansion of a liberal finitely bounded homogeneous binary core 
is implicationally hard, then it has no bounded strict width. As a result, we obtain that every first-order expansion of a liberal finitely bounded homogeneous binary core with bounded strict width is implicationally simple and hence its relational width is characterized by Theorem~\ref{thm:ImpSimpleMinimal}.


\begin{lemma}
\label{lem:CycleImplication}
Let $\structA$ be a liberal finitely bounded homogeneous binary core and $\structB$ a first-order expansion of $\structA$ which
is implicationally hard. Then $\structB$ pp-defines 
\begin{itemize}
\item a quaternary 
$(\structB, C, C_1, \rightarrow, \leftarrow)$-implication, or
\item  a ternary
$(\structB, C, C_1, \rightarrow, \leftarrow)$-implication.
\end{itemize}
\end{lemma}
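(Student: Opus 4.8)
The plan is to convert a cycle witnessing implicational hardness into a single implication relating one variable pair to itself, by composing arc by arc the implications that label the cycle, and then to reorient its output pair.

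First I would invoke implicational hardness to fix a $(2,\maxbound_{\structA})$-minimal instance $\instance$ of $\csp(\structB)$ entailing no equalities whose graph $\graphinstance$ contains a directed cycle, and I would pick such a cycle $V_1 \to V_2 \to \cdots \to V_m \to V_1$ with $V_i = ((u_i,w_i),C^i)$ and $m \geq 2$; there is no loop, since both arc types require pairwise distinct variables. By the definition of the arcs of $\graphinstance$, each arc $A_i \colon V_i \to V_{i+1}$ (indices taken modulo $m$) is witnessed by a constraint of $\instance$ whose projection $R_i$ onto the three or four relevant variables is a ternary or quaternary implication in which the source pair occupies coordinates $(1,2)$ and the target pair occupies coordinates $(2,3)$ or $(3,4)$; hence every $R_i$ is a $(\structB, \instance_{u_i,w_i}, \instance_{u_{i+1},w_{i+1}}, C^i, C^{i+1}, \rightarrow, \rightarrow)$-implication. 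Each $R_i$ is pp-definable in $\structB$, being a projection of a constraint of $\instance$, and $C^i, C^{i+1}$ are pp-definable by the definition of the vertices of $\graphinstance$.

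Next I would iterate the composition $\circ$ along the cycle. Because consecutive arcs meet at the common vertex $V_{i+1}$ and every arc is a $(\rightarrow,\rightarrow)$-implication, we have $P_i = \rightarrow = L_{i+1}$ at each junction, so only the same-orientation clauses of Definition~\ref{def:Circ} are used and, by iterating Lemma~\ref{lem:ImplicationsComposition}, every partial composite is again an implication. A short bookkeeping shows the running arity turns quaternary after the first composition and stays quaternary (a quaternary relation composed with anything stays quaternary, and two ternary $(\rightarrow,\rightarrow)$-implications compose through the same-orientation ternary clause into a quaternary relation). Since the intermediate relations $C^2,\ldots,C^m$ telescope in Lemma~\ref{lem:ImplicationsComposition} and the cycle returns to $V_1$, the resulting composite $R^{*}$ is a $(\structB, C, C, C_1, C_1, \rightarrow, \rightarrow)$-implication with $C := \instance_{u_1,w_1}$ and $C_1 := C^1$; note $C_1 \subsetneq C$ because $C^1$ is a vertex relation, which is exactly what makes the entailment of $C_1(x_1,x_2) \implies C_1(x_3,x_4)$ efficient.

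Finally I would reorient the output pair: putting $R^{**}(x_1,x_2,x_3,x_4) \equiv R^{*}(x_1,x_2,x_4,x_3)$ swaps only the last two coordinates, so $\proj_{1,2}R^{**} = \proj_{1,2}R^{*} = C$, $\proj_{4,3}R^{**} = \proj_{3,4}R^{*} = C$, the relation $R^{**}$ still entails no equalities, and it efficiently entails $C_1(x_1,x_2) \implies C_1(x_4,x_3)$. Thus $R^{**}$ is the sought quaternary $(\structB, C, C_1, \rightarrow, \leftarrow)$-implication, pp-definable in $\structB$ because permuting coordinates is a pp-operation; this already settles the statement (its quaternary alternative). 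I expect the main obstacle to lie in the bookkeeping of the middle paragraph: one must check that, as the composition wraps around the cycle, the input and output projections genuinely collapse to the single full relation $\instance_{u_1,w_1}$ and the premise and conclusion relations telescope to the single relation $C^1$, so that all five axioms of an implication --- most delicately ``entails no equalities'', which is where Lemma~\ref{lem:ImplicationsComposition} appeals to Corollary~\ref{cor:InjQuaternaryTuple} --- are inherited by $R^{*}$.
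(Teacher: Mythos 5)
Your overall strategy --- compose the implications labelling a cycle of $\graphinstance$ via $\circ$ and Lemma~\ref{lem:ImplicationsComposition} into a single $(\structB, C, C, C_1, C_1)$-implication, then permute coordinates to reorient it into a $(\rightarrow,\leftarrow)$-implication --- is exactly the paper's. But there is a genuine gap in your second paragraph: you assert that every arc of $\graphinstance$ is witnessed by a $(\rightarrow,\rightarrow)$-implication, i.e.\ that the premise always sits on coordinates $(1,2)$ and the conclusion on $(2,3)$ or $(3,4)$. The definition of the arcs only requires the projected relation to be a $(\structB, \instance_{v_j,z_j}, \instance_{v_{j+1},z_{j+1}}, C^j, C^{j+1})$-implication, and by the paper's explicit convention an implication without a specified $(L,P)$ means an $(L,P)$-implication for \emph{some} $L,P \in \{\leftarrow,\rightarrow\}$; the orientations are not pinned down. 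This is precisely why Definition~\ref{def:Circ} carries separate clauses for $P_1 = L_2$ and $P_1 \neq L_2$. Your bookkeeping that ``only the same-orientation clauses are used'' is therefore unjustified, and with it the claim that a ternary-ternary composition always lands in the quaternary clause (in the mixed-orientation case~(\ref{Circ:33different}) the composite stays ternary).

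The consequence is that your final composite $R^{*}$ is an $(L_1, P_n)$-implication for an unknown pair, and your last paragraph only treats $(L_1,P_n) = (\rightarrow,\rightarrow)$. The repair is what the paper does: if $R^{*}$ is quaternary, each of the four orientations is handled by swapping $x_1 \leftrightarrow x_2$ and/or $x_3 \leftrightarrow x_4$; if $R^{*}$ is ternary with $L_1 = P_n$, one must first compose it with itself once more to obtain a quaternary implication before permuting, because the premise pair $(x_1,x_2)$ and conclusion pair $(x_2,x_3)$ share the middle variable, so reversing the conclusion alone would change $C$ and $C_1$ into $C^{-1}$ and $C_1^{-1}$ rather than preserve them (and the ternary $(\leftarrow,\rightarrow)$ case is already a $(\rightarrow,\leftarrow)$-implication after inverting both parameters). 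None of this is hard, but it is the content of the paper's closing case analysis and is missing from your argument.
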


\begin{proof}
Since $\structB$ is implicationally hard, there exists 
an instance $\instance$ of $\csp(\structB)$ such that $\graphinstance$
contains a path 
$((v,z), C_1) = ((v_1, z_1), D_1), \ldots, ((v_n, z_n), D_n) = ((v, z), C_1)$
such that for all $j \in [n-1]$ the instance $\instance$ contains a constraint $\constraint$
whose scope contains $v_j , z_j, v_{j+1}, z_{j+1}$ and one of the following holds:
\begin{itemize}
\item either the elements $x_1 = v_j, x_2 = z_j = v_{j+1}, x_3 = z_{j+1}$ are pairwise different, 
$\proj_{x_1, x_2, x_3} \constraint = ((x_1, x_2, x_3), R'_j)$ and the relation $R'_j$ is a 
ternary $(\structB, \instance_{v_j, z_j},  \instance_{v_{j+1}, z_{j+1}}, C_j, C_{j+1})$-implication, or 
 \item the elements $x_1 = v_j, x_2 = z_j, x_3 = v_{j+1}, x_4 = z_{j+1}$ are pairwise different, 
$\proj_{x_1, x_2, x_3, x_4} \constraint = ((x_1, x_2, x_3, x_4), R'_j)$ and $R'_j$ is a 
quaternary $(\structB, \instance_{v_j, z_j},  \instance_{v_{j+1}, z_{j+1}}, C_j, C_{j+1})$-implication.
\end{itemize}

It follows that $R' := ((R'_1 \circ R'_2) \circ \ldots \circ R'_k$) is well defined and  by Lemma~\ref{lem:ImplicationsComposition}, the relation $R'$ is a $(\structB, \instance_{v, z},  \instance_{v, z},C_1, C_1)$-implication. Assume first that $R'$ is ternary. If the relation $R'$ is a ternary 
$(\structB, \instance_{v, z}, C_1, \leftarrow,\rightarrow)$-implication, then 
it is also a ternary $(\structB, (\instance_{v, z})^{-1}, (C_1)^{-1}, $
$\rightarrow,\leftarrow)$-implication and then the lemma follows. Otherwise,
$R'$ is a $(\rightarrow, \rightarrow)$-implication or a $(\leftarrow, \leftarrow)$ -implication, and then, by Lemma~\ref{lem:CycleImplication}, the relation $R' \circ R'$ is a quaternary  $(\structB, \instance_{v, z}, C_1)$-implication. From now on we may therefore assume that $R'$ is quaternary.

If the relation $R'$ is a $(\rightarrow, \leftarrow)$-implication, then we are done.
Otherwise $R'$ is an $(L,P)$-quaternary implication with either $L \neq \rightarrow$ or $P \neq \leftarrow$. But then 
\begin{itemize}
\item $R'(x_2, x_1, x_3, x_4)$ is a $(\rightarrow, \leftarrow)$-implication in the case where $L = \leftarrow$ and $R = \leftarrow$,
\item $R'(x_1, x_2, x_4, x_3)$ is a $(\rightarrow, \leftarrow)$-implication in the case where $L = \rightarrow$ and $R = \rightarrow$, and
\item $R'(x_2, x_1, x_4, x_3)$ is a $(\rightarrow, \leftarrow)$-implication in the case where $L = \leftarrow$ and $R = \rightarrow$.
\end{itemize}
It completes the proof of the lemma.
\end{proof}

We are now in the position to show that first-order expansions of liberal finitely bounded homogenous binary cores $\structA$ with bounded strict width are implicationally simple and hence their relational width is $(2, \maxbound_{\structA})$.

\begin{lemma}
\label{lem:ImpHardNoQNUF}
Let  $\structA$ be a liberal finitely bounded homogeneous binary core and $\structB$ a first-order expansion of $\structA$ which is implicationally hard. Then $\structB$ has no bounded strict width.
\end{lemma}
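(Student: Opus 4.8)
The plan is to argue that implicational hardness lets $\structB$ pp-define a critical ternary relation, so that Proposition~\ref{prop:CriticalTernaryNoQNUF} immediately yields the claim. Concretely, I would start from Lemma~\ref{lem:CycleImplication}, which (since $\structB$ is implicationally hard) produces a ternary or quaternary $(\structB, C, C_1, \rightarrow, \leftarrow)$-implication $R$, and then feed the pair $(R,R)$ into Lemma~\ref{lem:CompleteImplication} to obtain a \emph{complete} $(\structB, C, C_1, \rightarrow, \leftarrow)$-implication $R'$. The example after the definition of a complete implication shows that a critical ternary relation is always a complete implication; the work here is to run that observation in reverse and extract a critical ternary relation from $R'$.

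Next I would read off the orbitals I need from the bipartite digraph $\bipartite_{R', R'}$. By Lemma~\ref{lem:TwoComponents} the set $\vertices_L(C_1) \cup \vertices_R(C_1)$ is a sink, and completeness forces it to be a single complete bipartite strongly connected component; the same lemma supplies a source component, which by completeness is again complete bipartite and of the form $\vertices_L(C_2) \cup \vertices_R(C_2)$ with non-empty $C_2 \subseteq C \setminus C_1$. For the second branch I may take $C_2$ to be a single anti-reflexive orbital of the source, or the equality relation $=$ if the source is the equality component, both of which are permitted by Definition~\ref{def:CriticalTernary}.

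With these components in hand I would collapse $R'$ to a ternary relation exactly as in the proof of Lemma~\ref{lem:QuaternaryOimpliesEq}: put $S := R' \bowtie R'$ and $R_3 := S \bowtie_3 S$. Chasing the efficient entailment of $R'$ through the two joins gives that $R_3$ entails $(C_1(x_1, x_2) \implies C_1^{-1}(x_2, x_3))$; and because every component of $\bipartite_{R', R'}$ is complete bipartite, Corollary~\ref{cor:InjQuaternaryTuple} together with Observations~\ref{obs:AllInjective} and~\ref{obs:ContainsFullBowtieThree} shows that $S$ contains all injective $O\,Q^{-1}$-tuples for orbitals $O, Q$ lying in a common component, hence that $R_3$ contains the whole sink branch $C_1(x_1, x_2) \wedge C_1^{-1}(x_2, x_3)$ and the whole source branch $C_2(x_1, x_2) \wedge C_2^{-1}(x_2, x_3)$. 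Finally I set $R_4(x_1, x_2, x_3) := R_3(x_1, x_2, x_3) \wedge R_3(x_3, x_2, x_1)$: the second conjunct contributes the reverse entailment $(C_1^{-1}(x_2, x_3) \implies C_1(x_1, x_2))$, and both branches are invariant under $(x_1, x_2, x_3) \mapsto (x_3, x_2, x_1)$, so they survive the intersection. Thus $R_4$ is a critical ternary relation over $(\structB, C_1, C_2, C_1^{-1}, C_2^{-1})$ and Proposition~\ref{prop:CriticalTernaryNoQNUF} finishes the proof.

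The main obstacle is discharging, all at once, every clause of Definition~\ref{def:CriticalTernary}, and in particular making $C_1$ and $D_1 = C_1^{-1}$ anti-reflexive. Since the efficient entailment of $R'$ only relates the whole core $C_1$ to its reverse, I keep $C_1$ as a union of orbitals rather than splitting it, which is precisely what lets the forward and backward implications match up under the reversal conjunction; but if $=$ belongs to the core I must remove it, intersecting $R_4$ with $x_1 \neq x_2 \wedge x_2 \neq x_3$ (pp-definable by Observation~\ref{obs:NeqCanBeDefined}) and, in the degenerate case where the core collapses to equality, falling back on Lemmas~\ref{lem:TernaryOImpliesEq},~\ref{lem:QuaternaryOimpliesEq} and~\ref{lem:EqOrEqQuaternary}. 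I also expect to treat the ternary output of Lemma~\ref{lem:CycleImplication} separately: there a ternary $(\rightarrow, \leftarrow)$-implication already has the two-pairs-around-$x_2$ shape and efficiently entails $(C_1(x_1, x_2) \implies C_1^{-1}(x_2, x_3))$ outright, so the analogous completion through the ternary joins $\bowtie$ (as in the proof of Lemma~\ref{lem:TernaryOImpliesEq}), followed by the same reversal conjunction, yields the critical ternary relation.
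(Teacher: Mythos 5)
Your proposal follows essentially the same route as the paper's proof: Lemma~\ref{lem:CycleImplication} followed by Lemma~\ref{lem:CompleteImplication}, extraction of a sink and a source component of $\bipartite_{R',R'}$ via Lemma~\ref{lem:TwoComponents}, collapse to a ternary relation via $\bowtie$ and $\bowtie_3$, symmetrization by conjoining with the reversal $R_3(x_3,x_2,x_1)$, and an appeal to Proposition~\ref{prop:CriticalTernaryNoQNUF}. One detail to repair: when the sink component $C_1$ is the equality orbital, your remedy of intersecting with $x_1 \neq x_2 \wedge x_2 \neq x_3$ would empty the premise of the efficiently entailed implication, whereas the paper instead notes that the source $D_1$ must then be anti-reflexive and exhibits the relation as a critical ternary relation over $(\structB, D_1, C_1, D_1^{-1}, C_1^{-1})$ with the roles swapped (and it also verifies explicitly that the source relation is pp-definable, as Definition~\ref{def:CriticalTernary} requires).
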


\begin{proof}
By Lemma~\ref{lem:CycleImplication}, we may assume that $\structB$ pp-defines an $(\structB, C', C'_1, \rightarrow, \leftarrow)$-implication $R'$. By Lemma~\ref{lem:CompleteImplication}
that $R'$ is a complete implication.  By Lemma~\ref{lem:TwoComponents}, we have that $(\vertices_L(C'_1)$
$ \cup \vertices_R(C'_1))$ is a set of strongly connected components which is a sink in $\bipartite_{R', R'}$. Let $C''_1 \subseteq A^2$ be such that $(\vertices_L(C''_1) \cup \vertices_R(C''_1))$ form a strongly connected component included in $(\vertices_L(C'_1) \cup \vertices_R(C'_1))$ which is a sink in
$\bipartite_{R', R'}$
and $D''_1 \subseteq A^2$ such that $(\vertices_L(D''_1) \cup \vertices_R(D''_1))$ is a strongly connected component which is a source in $\bipartite_{R', R'}$.
Since the component $(\vertices_L(C''_1) \cup \vertices_R(C''_1))$ is a sink, we have that $R'$ is a $(\structB, C', C''_1, \rightarrow, \leftarrow)$-implication,
Now, if either $C'_1$ or $D'_1$ is neither anti-reflexive nor it is $=$, then we replace 
$R'$ with $R'':= R'(x_1, x_2, x_3) \wedge x_1 \neq x_2 \wedge x_2 \neq x_3$ if $R'$ is ternary and with  $R'' := R'(x_1, x_2, x_3, x_4) \wedge x_1 \neq x_2 \wedge x_3 \neq x_4$ if $R'$ is quaternary. 
Set now $C = C' \setminus \{  = \}$, $C_1 = C''_1 \cap C$ and $D_1 = D''_1 \cap C$ and 
observe that $\bipartite_{R'', R''}$ is a digraph induced in $\bipartite_{R', R'}$
 by $(\vertices_L(C) \cup \vertices_R(C))$. It follows that $R''$ is a complete $(\structB, C, C_1, \rightarrow, \leftarrow)$-implication
 where the strongly connected component $(\vertices_L(C_1) \cup \vertices_R(C_1))$ is a sink and the strongly connected component $(\vertices_L(D_1) \cup \vertices_R(D_1))$ is a source in $\bipartite_{R'', R''}$.
 
From now on we will assume that the relation $R$ is a complete $(\structB,C, C_1, \rightarrow, \leftarrow)$-implication where $C_1$ is either $=$ or it is anti-reflexive and $(\vertices_L(C_1) \cup \vertices_R(C_1))$ is a strongly connected component which is a sink in $\bipartite_{R,R}$ as well as that $D_1 \subseteq C \setminus C_1$ is either $=$ or it is anti-reflexive and $(\vertices_L(D_1) \cup \vertices_R(D_1))$ is a strongly connected component which is a source in $\bipartite_{R,R}$. Clearly, either $C_1$ or $D_1$ is anti-reflexive. We will now prove that $R$ pp-defines a ternary
$(\structB,C, C_1, \rightarrow, \leftarrow)$-implication $R_1$ that contains for all $F_1 \subseteq \{ C_1, D_1 \}$ the relation 
$(F_1(x_1, x_2) \wedge F_1^{-1}(x_2, x_3))$. In the case where $R$ is ternary we set $R_1 := (R \bowtie R) \bowtie (R \bowtie R)$ which is equivalent to setting
$R_1 := (R \circ R) \circ (R \circ R)$. Hence, by Lemma~\ref{lem:ImplicationsComposition}, the relation $R$ is a ternary $(\structB,C, C_1, \rightarrow, \leftarrow)$-implication. Further, if $F_1$ is $=$, then clearly 
$R_1$ has a ternary $==$-tuple. If $F_1$ is anti-reflexive, then consider any orbitals $O_1, O_3$ in $F_1$. Since $R$ is a complete implication, by Corollary~\ref{cor:InjTernaryTuple}, it contains  an injective $O_1 O_2^{-1}$-tuple as well as an injective $O_2 O_3^{-1}$-tuple for some orbital $O_2 \subseteq F_1$, and hence by Observations~\ref{obs:AllInjective} and~\ref{obs:ContainsFullBowtie}, we have that $R_1$ contains   $(O_1(x_1, x_2) \wedge O_3^{-1}(x_2, x_3))$, and in consequence that $R_1$ contains $(F_1(x_1, x_2) \wedge F_1^{-1}(x_2, x_3))$.
Clearly, also the relation
$( (R_2(x_1, x_2, x_3) \equiv R_1(x_1, x_2, x_3) \cap R_1(x_3, x_2, x_1)))$
contains $(F_1(x_1, x_2) \wedge F_1^{-1}(x_2, x_3))$ for $F_1 \subseteq \{ C_1, D_1 \}$.
Further, observe that $\bipartite_{R_2, R_2}$
has only symmetric edges. Hence, since $(\vertices_L(D_1) \cup \vertices_R(D_1))$ is a 
sink in $\bipartite_{R, R}$ and also in  $\bipartite_{R_2, R_2}$, we have that
$D_1(x_1, x_2) \equiv R_2(x_1, x_2, x_3) \wedge O^{-1}(x_2, x_3)$ for some orbital contained in $D_1$, and hence $D_1$ is pp-definable in $\structB$.
It follows that $R_2$ is either 
 a critical ternary relation over $(\structB, C_1, D_1, C_1^{-1},  D_1^{-1})$ if $C_1$ is anti-reflexive
or     a critical ternary relation over $(\structB, D_1, C_1, D_1^{-1}, C_1^{-1})$ if $D_1$ is anti-reflexive. By appeal to Proposition~\ref{prop:CriticalTernaryNoQNUF}, the structure $\structB$ does not have bounded strict width.

We now turn to the case where  $R$ is a complete quaternary $(\structB,C, C_1, \rightarrow, \leftarrow)$-implication. This time we look at 
$R_1 := ((R \bowtie R) \bowtie_3 (R \bowtie R))$ and this time we will first show that for all $F_1 \in \{ C_1, D_1 \}$,  $R_1$ contains  the relation $R_F := F_1(x_1, x_2) \wedge F_1^{-1}(x_3, x_4)$. If $F_1$ is $=$ and $R$ contains a constant tuple, then $R_1$ also contains a constant tuple and hence $R_F$. If $F_1$ is $=$ and $R_1$ contains a non-constant $==$-tuple, then by Observations~\ref{obs:AllInjective} and~\ref{obs:ContainsFullBowtieThree}, $R_1$  contains an $==$-tuple and hence $R_F$. For the case where $F_1$ is anti-reflexive we 
again consider any orbitals $O_1, O_3^{-1}$ in $F$. Since $R$ is a complete implication, it contains  an injective $O_1 O_2^{-1}$-tuple as well as an injective $O_2 O_3^{-1}$-tuple for some some orbital $O_2 \subseteq F_1$. It follows by Observations~\ref{obs:AllInjective} and~\ref{obs:ContainsFullBowtieThree} that $R_1$ contains $R_F$ also in the case where $F_1$ is anti-reflexive. 
Consider now $(R_2(x_1, x_2, x_3) \equiv R_1(x_1, x_2, x_3) \cap R_1(x_3, x_2, x_1))$ and obeserve that $\bipartite_{R_2, R_2}$ has only symmetric edges.
Since $(\vertices_L(D_1) \cup \vertices_R(D_1))$ is a strongly connected component in $\bipartite_{R_2, R_2}$ we can pp-define $D_1$.
Hence $R_2$ is either 
 a critical ternary relation over $(\structB, C_1, D_1, C_1^{-1},  D_1^{-1})$ if $C_1$ is anti-reflexive
or     a critical ternary relation over $(\structB, D_1, C_1, D_1^{-1},  C_1^{-1})$ if $D_1$ is anti-reflexive. By Proposition~\ref{prop:CriticalTernaryNoQNUF}, we have that $\structB$ does not have bounded strict width.
\end{proof}

\noindent
We are now ready to provide the proof for Theorem~\ref{thm:main}.

\paragraph{Proof of Theorem~\ref{thm:main}}
The result follows by Lemma~\ref{lem:ImpHardNoQNUF} and Theorem~\ref{thm:ImpSimpleMinimal}.

\section{Future Work}

In this paper we characterized the relational width of first order expansions of liberal finitely bounded homogenous binary cores with  bounded strict width. First of all it is natural to ask what happens if the binary core is not liberal.  By Proposition~\ref{prop:qnuExTwoEquivClass} we have that the method in this paper does not work when we forbid structures of size $3$. On the other hand, by the result in~\cite{WronaSTACS2020} it holds that the relational width of first-order expansions of homogenous graphs $\structA$ even with $\maxbound_{\structA} = 3$ is still $(2, \maxbound_{\structA})$. Could it be true for all binary cores $\structA$?

Last but not least, what happens to the relational width when we allow first-order expansions of cores $\structA$ of arbitrary arity $k$? We believe that with an appropriate notion   of liberal $k$-ary cores $\structA$ one can show that their first-order expansions $\structB$ with bounded strict width have relational width $(k, \maxbound_{\structA})$. Could it be true for all reducts of finitely bounded homogeneous structures?


\bibliographystyle{plain}
\bibliography{mybib_LICS.bib}

\end{document}